\def\@cite#1#2{[\textbf{#1\if@tempswa , #2\fi}]}
\def\@biblabel#1{[\textbf{#1}]}
\newcommand{\cut}[1]{}   % remove things inside the cut
\newcommand{\mytag}[2]{%
	\text{#1}%
	\@bsphack
	\protected@write\@auxout{}%
	{\string\newlabel{#2}{{#1}{\thepage}}}%
	\@esphack
}
\newtheorem{corollary}{Corollary}
\newtheorem{lemma}{Lemma}
\newtheorem{theorem}{Theorem}
\providecommand{\E}[0]{\mathbf{E}}
\newcommand{\bm}[1]{\mathbf{#1}}
\newcommand{\Li}{L_{\textrm{instance}}}
\newcommand{\IN}{\mathrm{IN}}
\newcommand{\OUT}{\mathrm{OUT}}
\newcommand{\Q}{\mathcal{Q}}
\renewcommand{\E}{\mathcal{E}}
\newcommand{\V}{\mathcal{V}}
\newcommand{\R}{\mathcal{R}}
\newcommand{\T}{\mathcal{T}}
\newcommand{\A}{\mathcal{A}}
\newcommand{\dom}{\mathrm{dom}}
\newcommand{\eps}{\epsilon}
\renewcommand{\paragraph}[1]{\medskip\noindent{\bf {#1. }}}
\newcommand{\revise}[1]{{\color{black}{#1}}}
\begin{document}
	
\title{Instance and Output Optimal Parallel Algorithms for Acyclic Joins}

\author{
	{\sf Xiao Hu \ \ \ \ \ Ke Yi}\\
	Hong Kong University of Science and Technology\\
	\{xhuam, yike\}@cse.ust.hk
}

\date{}

\maketitle

\begin{abstract}
  Massively parallel join algorithms have received much attention in recent years, while most prior work has focused on worst-optimal algorithms. However, the worst-case optimality of these join algorithms relies on hard instances having very large output sizes, which rarely appear in practice.  A stronger notion of optimality is {\em output-optimal}, which requires an algorithm to be optimal within the class of all instances sharing the same input and output size.   An even stronger optimality is {\em instance-optimal}, i.e., the algorithm is optimal on every single instance, but this may not always be achievable. 

  In the traditional RAM model of computation, the classical Yannakakis algorithm is instance-optimal on any acyclic join.  But in the massively parallel computation (MPC) model, the situation becomes much more complicated.  We first show that for the class of r-hierarchical joins, instance-optimality can still be achieved in the MPC model.  Then, we give a new MPC algorithm for an arbitrary acyclic join with load $O ({\IN \over p} + {\sqrt{\IN \cdot \OUT} \over p})$, where $\IN,\OUT$ are the input and output sizes of the join, and $p$ is the number of servers in the MPC model.  This improves the MPC version of the Yannakakis algorithm by an $O (\sqrt{\OUT \over \IN} )$ factor.  Furthermore, we show that this is output-optimal when $\OUT = O(p \cdot \IN)$, for every acyclic but non-r-hierarchical join.  Finally, we give the first output-sensitive lower bound for the triangle join in the MPC model, showing that it is inherently more difficult than acyclic joins.
\end{abstract}

\section{Introduction}

A (natural) {\em join} is defined as a hypergraph $\Q = (\V, \E)$, where the vertices $\V=\{x_1,\dots,x_n\}$ model the {\em attributes} and the hyperedges $\E =\{e_1,\dots, e_m\} \subseteq 2^{\V}$ model the {\em relations}.  Let $\dom(x)$ be the {\em domain} of attribute $x \in \V$.  An {\em instance} of $\Q$ is a set of relations $\R = \{R(e): e \in \E\}$, where $R(e)$ is a set of {\em tuples}, where each tuple is an assignment that assigns a value from $\dom(x)$ to $x$ for every $x\in e$.  We use $\IN = \sum_{e \in \E} |R(e)|$ to denote the size of $\R$.  The {\em join results} of $\Q$ on $\R$, denoted as $\Q(\R)$, consist of all combinations of tuples, one from each $R(e)$, such that they share common values on their common attributes.  Let $\OUT= |\Q(\R)|$ be the output size.  We study the {\em data complexity} of join algorithms, i.e., we assume that the query size, namely $n$ and $m$, are constants.  In this paper, we focus on {\em acyclic joins}, i.e., when the hypergraph $\Q$ is acyclic (formal definition given later).  

\subsection{The model of computation}
The problem gets much more interesting in the parallel setting.  In this paper, we consider the {\em massively parallel computation} (MPC) model \cite{afrati2014gym,afrati11:_optim,beame13:_commun,beame14:_skew,ketsman17,koutris16:_worst,koutris11:_paral}, which has become the standard model of computation for studying massively parallel algorithms, especially for join algorithms.  

In the MPC model, data is initially distributed evenly over $p$ servers with each server holding $\IN/p$ tuples.  Computation proceeds in rounds. In each round, each server first sends messages to other servers, receives messages from other servers, and then does some local computation. The complexity of the algorithm is measured by the number of rounds and the {\em load}, denoted as $L$, which is the maximum message size received by any server in any round.  A {\em linear load} $L=O({\IN \over p})$ is the ideal case (since the initial load is already ${\IN \over p}$), while if $L=O(\IN)$, all problems can be solved trivially in one round by simply sending all data to one server.  Initial efforts were mostly spent \revise{on} what can be done in a single round of computation \cite{afrati11:_optim,koutris11:_paral,beame13:_commun,beame14:_skew,koutris16:_worst,koutris11:_paral}, but recently, \revise{more interest has} been given to multi-round (but still a constant) algorithms \cite{afrati2014gym,ketsman17,koutris16:_worst}, since new main memory based systems, such as Spark and Flink, have much lower overhead per round than previous generations like Hadoop.

The MPC model can be considered as a simplified version of the BSP model \cite{valiant90}, but it has enjoyed more popularity in recent years.  This is mostly because the BSP model takes too many measures into consideration, such as communication costs, local computation time, memory consumption, etc.  The MPC model unifies all these costs with one parameter $L$, which makes the model much %%cleaner
\revise{simpler}.  Meanwhile, although $L$ is defined as the maximum incoming message size of a server, it is also closely related with the local computation time and memory consumption, which are both increasing functions of $L$.  Thus, $L$ serves as a good surrogate of these other cost measures.  This is also why the MPC model does not limit the outgoing message size of a server, which is less relevant to other costs.

All our algorithms work under the mild assumption $\IN \ge p^{1+\epsilon}$ where $\eps>0$ is any small constant.  This assumption clearly holds on any reasonable values of $\IN$ and $p$ in practice; theoretically, this is the minimum requirement for the model to be able to compute some almost trivial functions, like the ``or'' of $\IN$ bits, in $O(1)$ rounds.  Our lower bounds hold under $\IN \ge p^c$ for some constant $c$, which may depend on the particular lower bound construction.

%The MPC model, as well as the CREW BSP model, allows broadcasts.  This is often justified by the fact that in some systems, broadcasts may indeed be more efficient than sending the message to each destination individually.  Very recently, it has been shown \cite{hu16} that any algorithm in the CREW BSP model can be simulated in the standard BSP model without using any broadcasts by just increasing the number of rounds and the load by a constant factor, provided that $\IN > p^{1+\eps}$, where $\IN$ is the input size and $\eps>0$ is any small constant.  This provides a stronger justification of using broadcasts. 

We confine ourselves to {\em tuple-based} join algorithms, i.e., the tuples are atomic elements that must be processed and communicated in their entirety.  The only way to create a tuple is by making a copy, from either the original tuple or one of its copies.  We say that an MPC algorithm computes the join $\Q$ on instance $\R$ if the following is achieved: For any join result $(t_1,\dots, t_m) \in \Q(\R)$ where $t_i \in R(e_i)$, $i=1,\dots, m$, these $m$ tuples (or their copies) must all be present on the same server at some point.  Then the server will call a zero-cost function $emit(t_1,\dots, t_m)$ to report the join result.  Note that since we only consider constant-round algorithms, whether a server is allowed to keep the tuples it has received from previous rounds is irrelevant: if not, it can just keep sending all these tuples to itself over the rounds, increasing the load by a constant factor.  All known join algorithms in the MPC model are tuple-based and obey these requirements.  Our lower bounds are combinatorial in nature: we only count the number of tuples that must be communicated in order to emit all join results, while all other information can be communicated for free.  The upper bounds include all messages, with a tuple and an integer of $O(\log \IN)$ bits both counted as 1 unit of communication.

%or can be made to do so\footnote{Consider for example the triangle join $R_1(B,C) \Join R_2(A,C) \Join R_3(A,B)$.  An algorithm may just report a join result by the 3 attribute values, say $(a,b,c)$.  However, for the algorithm to ascertain its correctness, one server has to witness the three tuples $(b,c)\in R_1, (a,c)\in R_2, (a,b)\in R_3$, either by itself or by some other server.  In the latter case, the other server has to send this information over, then that server may just send the tuple instead.  Note that this is by no means a rigorous proof, but it gives the intuition why it is reasonable to consider only tuple-based algorithms.}.

\subsection{Instance and output optimality}

In worst-case analysis, the entire space of instances %%are
\revise{is} divided into classes by the input size $\IN$, and the running time is measured on the worst instance in each class.  For many important computational problems, this is too coarse-grained and cannot accurately characterize the performance of the algorithm.  For the join problem, no algorithm can do better than $O(\IN^{1/\rho})$ time in the worst case, where $\rho$ is the fractional edge cover number of the hypergraph $\Q$ \cite{veldhuizen14,ngo2012worst}.  This bound drastically overestimates the running time on most typical instances.

A more refined approach is {\em parameterized analysis}, which further subdivides the instance space into smaller classes by introducing more parameters that supposedly better characterize the difficulty of each class. For the join problem, the output size $\OUT$ is \revise{a} commonly used parameter, and each class of instances share the same input and output size.  Let $\mathfrak{R}(\IN, \OUT)$ be the class of instances with input size $\IN$ and output size $\OUT$.  Then the load of an MPC algorithm $\A$ is thus a function of both $\IN$ and $\OUT$, defined as 
\[ L_\A(\IN, \OUT) = \max_{\R \in \mathfrak{R}(\IN, \OUT)} L_\A(\R),\]
where $L_\A(\R)$ denotes the load of $\A$ on $\R$.  Algorithm $\A$ is {\em output-optimal} if
\[
  L_\A(\IN, \OUT) = O(L_{\A'}(\IN, \OUT)),
\]
for every algorithm $\A'$. 

Further subdividing the instance space leads to more refined analyses. In extreme case when each class contains just one instance, we obtain instance-optimal algorithms.  More precisely, an algorithm $\A$ is {\em instance-optimal} if
\[
  L_\A(\R) = O(L_{\A'}(\R)),
\]
for every instance $\R$ and every algorithm $A'$.  Note that by definition, an instance-optimal algorithm must be output-optimal, and an output-optimal algorithm must be worst-case optimal, but the reserve direction may not be true.

In the traditional RAM model of computation, the classical Yannakakis algorithm \cite{yannakakis1981algorithms} can compute any acyclic join in time $O(\IN + \OUT)$, which is both output-optimal and instance-optimal, because on any instance $\R$, any algorithm has to at least spend $\Omega(\IN)$ time to read all the inputs\footnote{To formally prove this claim, one will have to be more careful with the family of algorithms under consideration.  In particular, if $\OUT=0$, then the algorithm may not have to do anything.  One possible approach is to ask the algorithm to produce a {\em certificate} in addition to the join results \cite{ngo14:_beyon}.  We will not digress to this direction since this paper is only concerned about MPC algorithms.} and $\Omega(\OUT)$ time to enumerate the outputs.  Thus, the two notions of optimality coincide (but both are stronger than worst-case optimality).  Fundamentally, this is because the difficulty of any instance $\R$ is precisely characterized by its input size and output size, and all instances in $\mathfrak{R}(\IN, \OUT)$ have exactly the same complexity $O(\IN + \OUT)$.

\subsection{Join algorithms in the MPC model}
The situation becomes much more interesting in the MPC model.  First, it has been observed that the Yannakakis algorithm can be easily implemented in the MPC model with a load of $O(\frac{\IN}{p} + \frac{\OUT}{p})$ \cite{afrati2014gym}\footnote{The bound stated in \cite{afrati2014gym} is actually $O ( {(\IN + \OUT)^2 \over p} )$, because they used a sub-optimal binary join algorithm as the subroutine.  Replacing it with the optimal binary join algorithm in \cite{beame14:_skew,hu17:_output} yields the claimed bound, as observed in \cite{koutris18:_algor}.}, but this is not optimal.  In particular, it is known that the binary join $R_1(A,B) \Join R_2(B,C)$ can be computed with load $O(\frac{\IN}{p} + \sqrt{\frac{\OUT}{p}})$ \cite{beame14:_skew,hu17:_output}.  This is optimal by the following simple lower bound argument: Each server can only produce $O(L^2)$ join results in a constant number of rounds with the load limited to $L$, so all the $p$ servers can produce at most $O(p \cdot L^2)$ join results.  Thus, producing $\OUT$ join results needs at least a load of $L = \Omega(\sqrt{\frac{\OUT}{p}})$.  Meanwhile, since $L\ge {\IN /p}$ by definition, the $O (\frac{\IN}{p} + \sqrt{\frac{\OUT}{p}} )$ bound is optimal.  Note that this argument can be applied on a per-instance basis, which means that the load complexity of any instance is still precisely captured by $\IN$ and $\OUT$, and $O (\frac{\IN}{p} + \sqrt{\frac{\OUT}{p}} )$ is both an instance-optimal and output-optimal bound.

However, when the join involves three relations, the situation becomes subtler, and we start to see a separation between the two notions of optimality, meaning that the load complexity of an instance may not depend only on $\IN$ and $\OUT$.  Let us start with the simplest 3-relation join $R_1(A) \Join R_2(B) \Join R_3(C)$, i.e., computing the Cartesian product of 3 sets of tuples.  Consider a particular class $\mathfrak{R}(\IN, \OUT)$ when $\OUT = \IN^2$.  Suppose the 3 relations have sizes $N_1, N_2, N_3$, respectively.  Then $\mathfrak{R}(\IN, \OUT)$ consists of all instances with $N_1+N_2+N_3 =\IN$ and $N_1N_2N_3=\OUT= \IN^2$.  Consider the following two instances: (1) $N_1=N_2=\Theta(\sqrt{\IN}), N_3=\Theta(\IN)$, applying the same argument above except that each server now can produce $O(L^3)$ join results, i.e., $p\cdot L^3 =\Omega(\OUT)$, we have $L= \Omega ( ({\OUT \over p} )^{1/3} )$; (2) if $N_1=1, N_2=N_3=\Theta(\IN)$, then the problem boils down to computing the Cartesian product of two sets, which has a lower bound of $L = \Omega ( (\frac{\OUT}{p} )^{1/2} )$.  The reason why instance (2) has a higher lower bound than instance (1) is that it has a higher skew, which causes more difficulty for the MPC model.  Note that this phenomenon does not exist in the RAM model, in which both instances (in fact all instances in $\mathfrak{R}(\IN, \OUT)$) have the same complexity of $O(\IN+\OUT)$.  Fundamentally, this is because the MPC model is all about {\em locality}: An MPC algorithm should strive to bring all related tuples to one server so as to produce as many join results as possible, while a higher skew reduces locality.

We can extend this argument to computing the Cartesian product of $m$ sets of sizes $N_1, \dots, N_m$.  Any algorithm computing the full Cartesian product obviously must also compute the Cartesian product of any subset of the $n$ sets, thus the load must be at least
\begin{equation}
  \label{eq:Cartesian}
L_{\mathrm{Cartesian}}(p,\R):=  \max_{S \subseteq\{1,\dots, m\}}  \left(\prod_{i\in S}N_i \over p\right)^{\frac{1}{|S|}}.
\end{equation}
It has been shown that the HyperCube algorithm \cite{afrati11:_optim} incurs a load of $L_{\mathrm{Cartesian}}(p,\R) \cdot \log^{O(1)} p$ on any instance $\R$ \cite{beame14:_skew}.  Thus, it can be considered as an instance-optimal algorithm for computing Cartesian products, with an optimality ratio of $\log^{O(1)} p$.  

The binary join and Cartesian products are the simplest joins.  Then the obvious question is, do instance-optimal algorithms exist for larger classes of joins?  If not, how about output-optimal algorithms?  These are the main questions we wish to address in this paper.

\subsection{Classification of acyclic joins}
Before describing our results, we first define some sub-classes of acyclic joins.

\medskip \noindent {\bf Acyclic joins \cite{beeri1983desirability}.}  We use the common notion of acyclicity, which is also known as $\alpha$-acyclicity.  A join $\Q = (\V,\E)$ is {\em acyclic} if there exists an undirected tree $\T$ whose nodes are in one-to-one correspondence with the edges in $\E$ such that for any vertex $v \in \V$, all nodes containing $v$ form a connected subtree. Such a tree $\T$ is called the {\em join tree} of $\Q$.  Note that the join tree may not be unique for a given $\Q$.

\medskip \noindent {\bf Hierarchical joins~\cite{dalvi2007efficient}.} A join $\Q = (\V,\E)$ is {\em hierarchical} if for every pair of vertices $x, y$, there is $\E_x \subseteq \E_y$, or $\E_y \subseteq \E_x$, or $\E_x \cap \E_y = \emptyset$, where $\E_x = \{e \in \E: x \in e\}$ is the set of hyperedges containing attribute $x$. Thus, all attributes can be organized into a forest, such that $x$ is a descendant of $y$ iff $\E_x \subseteq \E_y$.  Hierarchical joins have been enjoyed nice properties in probabilistic databases \cite{dalvi2007efficient,fink16:_dichot} and query answering under updates \cite{berkholz17:_answer}, but their role in the MPC model has not been studied so far.

\medskip \noindent {\bf r-hierarchical joins.}  We consider a slightly larger class of hierarchical joins. A {\em reduce} procedure on a hypergraph $(\V, \E)$ is to remove an edge $e \in \E$ if there exists another edge $e' \in \E$ such that $e \subseteq e'$. We can repeatedly apply the reduce procedure until no more edge can be reduced, and the resulting  hypergraph is said to be {\em reduced}.  A join is {\em r-hierarchical} if its reduced join hypergraph is hierarchical. A hierarchical join must be r-hierarchical, but not vice versa.  For example, the join $R_1(A) \Join R_2(A,B) \Join R_3(B)$ is r-hierarchical but not hierarchical. On the other hand, an r-hierarchical join must be acyclic.
 
 \medskip \noindent {\bf Tall-flat joins~\cite{koutris11:_paral}.} A join $\Q = (\V,\E)$ is {\em tall-flat} if one can order the attributes as $x_1, x_2, \cdots, x_h, y_1, \\y_2, \cdots, y_l$ such that (1) $\E_{x_1} \supseteq \E_{x_2} \supseteq \cdots \supseteq \E_{x_h}$; (2) $\E_{x_h} \supseteq \E_{y_j}$ for $j =1, 2, \cdots, l$; and (3) $|\E_{y_j}| = 1$ for $j =1, 2, \cdots, l$.  Obviously, a tall-flat join must be hierarchical.
  
\medskip The relationships of these joins are illustrated in Figure~\ref{fig:classification}.

 \begin{figure}[h]
 	\centering
 	\includegraphics[scale=0.9]{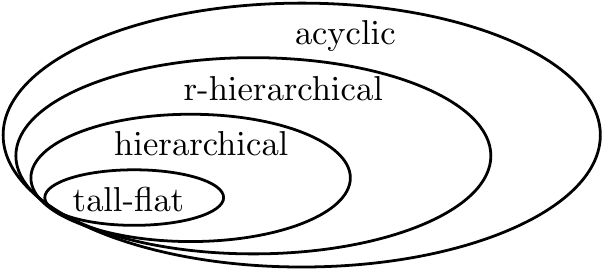}
 	\caption{Relationships of joins.}
 	\label{fig:classification}
 \end{figure}

\subsection{Our results}
        
This paper gives an almost complete characterization of acyclic joins with respect to instance-optimality and output-optimality in the MPC model.  Our results are summarized in Table~\ref{tab:summary}, and we explain them below in more detail.

{\tiny
	\begin{table*}
		\centering
		\begin{threeparttable}
			\begin{tabular}{c|c|c|c|c}
				\hline
				\multirow{2}{*}{Joins}& \multicolumn{2}{c|}{Instance-optimal\tnote{1}}  & \multicolumn{2}{c}{Output-optimal} \\  \cline{2-5}
				& one-round & multi-round & one-round & multi-round \\
				\hline
				tall-flat &  \multirow{2}{*}{$L_{\textrm{ins-opt}} \cdot \log^{O(1)} p$} & \multirow{5}{*}{$\Theta\left( L_{\textrm{ins-opt}} \right)$} & \multicolumn{2}{c}{\multirow{3}{*}{-}} \\ \cline{1-1}
				r-hierarchical  & \multirow{2}{*}{\cite{beame14:_skew}}& \\
				w/o dangling tuples && \\
				\cline{1-2} \cline{4-5}
				r-hierarchical & \multirow{2}{*}{$\omega\left( L_{\textrm{ins-opt}} \right)$} &  & \multirow{5}{*}{$\omega\left(\frac{\IN + \OUT}{p}\right)$} & \multirow{2}{*}{-} \\ 
				w/ dangling tuples  & & & \multirow{6}{*}{\cite{koutris11:_paral}} & \\
				\cline{1-3} \cline{5-5}
				\multirow{2}{*}{acyclic} & \multicolumn{2}{c|}{\multirow{3}{*}{$\omega\left( L_{\textrm{ins-opt}} \right)$}}& & $\Theta\left(\frac{\IN}{p} + \frac{\sqrt{\IN \cdot \OUT}}{p}\right)$\\  
				& \multicolumn{2}{c|}{}  && LB for $\OUT \le O( p \cdot \IN)$. \\
				\cline{1-1} \cline{5-5}
				triangle & \multicolumn{2}{c|}{} & &  $\tilde{\Omega}\left(\min\left\{\frac{\IN + \OUT}{p}, \frac{\IN}{p^{2/3}}\right\}\right)$\\ \hline
			\end{tabular}
		\end{threeparttable}
		\begin{enumerate}
			\item We use $L_{\textrm{ins-opt}}$ as short for $\frac{\IN}{p} + \Li(p, \R)$.
		\end{enumerate}
		\caption{Summary of results.}
		\label{tab:summary} 
	\end{table*}
}

\paragraph{Instance-optimality}
First, we extend the Cartesian product lower bound (\ref{eq:Cartesian}) to a general join $\Q = (\V, \E)$.  For any subset of relations $S\subseteq \E$, define
\[ \Q(\R, S) := \left(\Join_{e\in S} R(e) \right) \ltimes \Q(\R), \] i.e., the join results of relations in $S$ that are part of a full join result.  Clearly, any algorithm computing $\Q(\R)$ must implicitly also compute $\Q(\R, S)$ for every $S$.  Because each join result in $\Q(\R,S)$ consists of $|S|$ tuples, one from each relation in $S$, a server can emit at most $O(L^{|S|})$ join results of $\Q(\R,S)$, so we must have $p\cdot L^{|S|} =\Omega(|\Q(\R, S|)$.  Thus, we obtain the following per-instance lower bound on the load:
\begin{equation}
  \label{eq:instance}
	\Li(p,\R) := \max_{S \subseteq \E}\ \left(\frac{|\Q(\R, S)|}{p}\right)^{\frac{1}{|S|}}.
\end{equation}

The BinHC algorithm \cite{beame14:_skew} is \revise{a} generalization of the HyperCube algorithm to general joins.  The load of the BinHC algorithm is parameterized by the degrees of all subsets of attribute values (more detail given in Section~\ref{sec:hierarchical}).  Beame et al.~\cite{beame14:_skew} show that BinHC is optimal (up to polylog factors) within the class of instances sharing the same degrees, among all one-round MPC algorithms.  In this paper, we strengthen this result by giving a new analysis of the BinHC algorithm, showing that it is actually instance-optimal (up to polylog factors) for (1) all tall-flat joins, and (2) all r-hierarchical joins provided that the instance does not contain {\em dangling tuples} (a dangling tuple is one that does not appear in the join results). Furthermore, because the per-instance lower bound (\ref{eq:instance}) also holds for multi-round algorithms, these instance-optimality results extend to multi-round algorithms as well.  For r-hierarchical joins with dangling tuples, one-round algorithms cannot achieve $O({\IN \over p}+ \Li(p, \R))$ load, but we can remove the dangling tuples in $O(1)$ rounds with $O({\IN \over p})$ load \cite{yannakakis1981algorithms}, and then run then BinHC algorithm.  This gives a multi-round, $({\IN \over p} + \Li(p, \R)) \log^{O(1)} p$-load algorithm, where the $O(1)$ exponent depends on the query size, and is at least $m$, the number of relations. Then we give a new multi-round algorithm for r-hierarchical joins with load $O({\IN \over p} + \Li(p, \R))$, i.e., improving the instance-optimality ratio from $\log^{O(1)} p$ to $O(1)$.

The instance-optimal load $O({\IN \over p} + \Li(p, \R))$ is not achievable beyond r-hierarchical joins\footnote{But instance-optimal algorithms are still possible, if some higher per-instance lower bound can be derived.}.  More precisely, we show that for every acyclic join that is not r-hierarchical, there is an instance $\R$ with $\Li(p,\R)=O({\IN \over p})$ but any multi-round algorithm must incur a load of\footnote{The $\tilde{O}$ and $\tilde{\Omega}$ notation suppresses polylog factors.} $\tilde{\Omega} ({\IN \over p^{1/2}} )$ on $\R$.  This is actually a corollary following our output-sensitive lower bound, which is described next.

\paragraph{Output-optimality}
One-round algorithms have severe limitations with respect to $\OUT$: As shown in \cite{koutris11:_paral}, any non-tall-flat joins must incur load $\omega({\IN \over p} + {\OUT \over p})$ if only one round is allowed.  On the other hand, as mentioned, the classical Yannakakis algorithm is a multi-round MPC algorithm that works for all acyclic joins and has a load of $O(\frac{\IN}{p} + \frac{\OUT}{p})$ \cite{afrati2014gym,koutris18:_algor}.  Thus, our focus will be on multi-round algorithms and see if this result can be improved.  An instance-optimal algorithm must also be output-optimal, so we have automatically obtained output-optimal algorithms for r-hierarchical joins.  In fact, %%in Appendix~\ref{sec:output-optimality-r} 
we show that $\Li(p, \R) = O ({\IN \over p} + \sqrt{\OUT \over p} )$ for all r-hierarchical joins, so this is already an asymptotic improvement over the Yannakakis algorithm.  But the more important question is, how about acyclic joins that are not r-hierarchical?

Our main output-optimal result is a new MPC algorithm for acyclic joins achieving a load of $O(\frac{\IN}{p} + \frac{\sqrt{\IN \cdot \OUT}}{p})$, which is an $O(\sqrt{\OUT \over \IN})$-factor improvement from the Yannakakis algorithm.  Interestingly enough, we observe that while the join order does not change the running time of the Yannakakis algorithm by more than a constant factor in the RAM model, it does have asymptotic consequences in the MPC model.  However, there are instances on which no join order is good, in which case we recursively decompose the join into multiple parts, and choose a good join order for each part.  The number of parts is exponential in the query size but constant in terms of data size.  To achieve this result, we first give a simple algorithm on the line-3 join $R_1(A,B) \Join R_2(B,C) \Join R_3(C,D)$ (Section~\ref{sec:line3}), and then extend it to arbitrary acyclic joins (Section~\ref{sec:full-acyclic}).

We also give a matching lower bound (up to a log factor), thereby establishing the output-optimality of the algorithm.  However, the lower bound only holds when $\OUT = O(p \cdot \IN)$.  This restriction on $\OUT$ is actually inherent, because the $O (\frac{\IN}{p} + \frac{\sqrt{\IN \cdot \OUT}}{p} )$ bound cannot be optimal for all values of $\OUT$.  When $\OUT$ is large enough, a worst-case optimal algorithm will take over. For example, on the line-3 join, the worst-case optimal algorithm, which has load $O ({\IN \over \sqrt{p}} )$ \cite{koutris16:_worst,hu2016towards}, becomes better when $\OUT > p \cdot \IN$.  Our lower bound actually indicates that the $O ({\IN \over \sqrt{p}} )$ bound is output-optimal (though it does not depend on $\OUT$) for all $\OUT > p \cdot \IN$.  Thus, we now have a complete understanding of the line-3 join with respect to output-optimality.  For more complicated joins, their worst-case optimal algorithms have a higher load, and the output-optimality for $\OUT$ values in the middle is still unclear. 

Next, we extend these results to {\em join-aggregate} (including {\em join-project}) queries that are {\em free-connex} (formal definition given in Section~\ref{sec:join-aggregate}).  More precisely, we give an MPC algorithm with linear load that removes all the non-output attributes of the query, converting it into an acyclic join.  Then we apply our instance-optimal or output-optimal algorithm on the resulting acyclic join.  

Finally in Section~\ref{sec:triangle}, we turn to the triangle join $R_1(B,C)\Join R_2(A,C) \Join R_3(A,B)$, which is the simplest cyclic join, and give the first output-sensitive lower bound $\tilde{\Omega}(\min\{\frac{\IN}{p} + \frac{\OUT}{p}, \\\frac{\IN}{p^{2/3}}\})$ in the MPC model. Previously, only a worst-case bound of $\Omega(\frac{\IN}{p^{2/3}})$ is known \cite{koutris16:_worst,pagh14} and that construction uses an instance with the maximum possible output size $\OUT=\IN^{3/2}$.  Note that the second term in the lower bound is smaller as long as $\OUT = \Omega(\IN \cdot p^{1/3})$, which means that under this parameter range, the $\tilde{O}(\frac{\IN}{p^{2/3}})$-load algorithm~\cite{koutris16:_worst} is not only worst-case optimal but also output-optimal.  For $\OUT = o(\IN \cdot p^{1/3})$, the lower bound becomes $\tilde{\Omega}({\IN \over p} + {\OUT \over p})$ while we do not have a matching upper bound yet (some explanation on why this is difficult is given below).  But at least, this shows a separation from acyclic joins, i.e., cyclic joins are harder than acyclic ones by at least a factor of $\tilde{\Omega}(\sqrt{\OUT \over \IN})$.

\subsection{Other related results}
\label{sec:other-relat-results}

Most existing work on join algorithms in the MPC model has focused on the worst case.  Here, the goal is to achieve a load of $O ({\IN \over p^{1/\rho}} )$, where $\rho$ is the fractional edge cover number of the hypergraph $\Q$.  So far, this bound has been achieved on Berge-acyclic joins\footnote{A sub-class of $\alpha$-acyclic joins.}  \cite{hu2016towards}, joins where each relation has two attributes (i.e., $\Q$ is an ordinary graph) \cite{ketsman17}, and LW joins \cite{koutris16:_worst}\footnote{The LW join algorithm presented in  \cite{koutris16:_worst} has a mistake, but it can be fixed, although non-trivially.}.  Whether this bound can be achieved for arbitrary joins, or even just $\alpha$-acyclic joins, is still open.  Assuming this is achievable, our output-sensitive algorithm is still better when $\OUT = O(p^{2-2/\rho}\cdot \IN)$. 

Joglekar et al.~\cite{manas16:_its} described a multi-round MPC algorithm for arbitrary joins, whose load complexity depends on $\IN, \OUT$, as well as the degrees of the values.  However, the load of their algorithm is at least $\Omega (\frac{\IN}{p} + \frac{\OUT}{p} )$, i.e., no better than the Yannakakis algorithm on acyclic joins. 

In the RAM model, output-sensitive join algorithms have been extensively studied.  The running time of most algorithms is in form of $O(\IN^w + \OUT)$, where $w$ is certain notion of {\em width} of the hypergraph $\Q$ \cite{gottlob05:_hyper,grohe14:_const,marx13:_tract,khamis17:_what}. However, it is not clear if this is optimal.  Even for the triangle join, it is not known what the output-optimal bound is.  For the triangle join, any notion of width has $w\ge 1.5$, thus these algorithms are no better than the worst-case optimal algorithm, which has running time $O(\IN^{1.5})$.  Recently, an improved triangle algorithm has been developed with a running time of $O(\IN^{1.408} + \IN^{1.222}\OUT^{0.186})$ \cite{orklund14:_listin}, which is better than the worst-case optimal algorithm when $\OUT < \IN^{1.495}$.  On the lower bound side, it is known that when $\OUT \ge \IN$, at least $\Omega(\IN^{4/3-o(1)})$ time is needed, assuming the 3SUM conjecture \cite{scu10:_towar}.  Thus, output-optimal algorithms for cyclic joins still remain a wide open problem.

 \section{MPC Primitives}
 \label{sec:primitive}

 Assume $\IN>p^{1+\epsilon}$ where $\epsilon >0$ is any small constant. We first introduce the following primitives in the MPC model, all of which can be computed with linear load $O (\frac{\IN}{p} )$ in $O(1)$ rounds.

 %%\medskip \noindent {\bf Sorting~\cite{goodrich11:_sortin}:} Given $\IN$ elements, redistribute them so that each server has $O (\frac{\IN}{p} )$ elements in the end, while any element on server $i$ is smaller than or equal to any element on server $j$, for any $i<j$. 

 \medskip \noindent {\bf Multi-numbering~\cite{hu17:_output}:} Given $\IN$ (key, value) pairs, for each key, assigns consecutive numbers $1, 2, 3, \dots$ to all the pairs with the same key.

 \medskip \noindent {\bf Sum-by-key~\cite{hu17:_output}:} Given $\IN$ (key, value) pairs, compute the sum of values for each key, where the sum is defined by any associative operator.

 \medskip \noindent {\bf Multi-search~\cite{hu17:_output}:}  Given $N_1$ elements $x_1, x_2, \cdots, x_{N_1}$ as set $X$ and $N_2$ elements $y_1, y_2, \cdots, y_{N_2}$ as set $Y$, where all elements are drawn from an ordered domain.  Set $\IN =N_1+N_2$.  For each $x_i$, find its predecessor in $Y$, \revise{i.e., the largest element in $Y$ but smaller than $x_i$.}
 
 \medskip \noindent {\bf Semi-Join:}  Given two relations $R_1$ and $R_2$ with a common attribute
 $x$, the semijoin $R_1 \ltimes R_2$ returns all the tuples in $R_1$ whose value on $x$ matches that of at least one tuple in $R_2$. This can be reduced to a multi-search problem: For each $t \in R_1$, if its predecessor on the $x$ attribute in $R_2$ is the same as that of $t$ , then it is in the semijoin.

Note that we can remove all dangling tuples in an acyclic-join~\cite{yannakakis1981algorithms} by a constant number of semi-joins, so it can be done in $O(1)$ rounds with linear load.
 
\medskip\noindent{\bf Parallel-packing:} Given $\IN$ numbers $x_1, x_2, \cdots, x_{\IN}$ where $0 < x_i \le 1$ for $i =1,2, \cdots, {\IN}$, group them into $m$ sets $Y_1, Y_2, \cdots, Y_m$ such that $\sum_{i \in Y_j} x_i \le 1$ for all $j$, and $\sum_{i \in Y_j} x_i \ge {1 \over 2}$ for all but one $j$.  Initially, the $\IN$ numbers are distributed arbitrarily across all servers, and the algorithm should produce all pairs $(i,j)$ if $i \in Y_j$ when done. Note that $m \le 1+ 2\sum_i x_i$.

%%We are not aware of an explicit reference on this primitive, but it can be solved quite easily in Appendix~\ref{appendix:parallel-packing}.  
We are not aware of an explicit reference on this primitive, but it can be solved quite easily. Assume the input data is distributed across $p$ servers. We ask each server $i$ to first perform grouping on its local data.  It is obvious that the condition above can be satisfied.  The server then reports two numbers: $g_i$, the number of groups with sum between $1/2$ and $1$, and $h_i$, the sum of remaining group with sum smaller than $1/2$. Note that $g_i$ and $h_i$ can be $0$.  Next, we run the BSP algorithm for prefix-sums~\cite{goodrich11:_sortin} on the $g_i$'s.  After that, we can assign consecutive group id's to each of the $g_i$ groups on each server $i$.  For the remaining $p$ partial groups whose sums are $h_i$ with $0 < h_i < 1/2$, we recursively run the algorithm, using group id's starting from $\sum_i g_i +1$.  After the recursion returns, for each partial group $h_i$ that has been assigned to group $j$, we assign every element in $h_i$ to group $j$.  The problem size reduces by a factor of $\IN/p$ after each round, so the number of rounds is $O(\log_{\IN/p} \IN) = O(1)$.

 \medskip \noindent {\bf Server allocation~\cite{hu17:_output}:} %%An MPC algorithm often decomposes the problem into up to $p$ subproblems, and allocate the $p$ servers appropriately, with subproblem $j$ having $p(j)$ servers, where $\sum_j p(j) \le p$.  Each subproblem is associated with a set of server ids, allocated by the algorithm. More formally, 
 Assume each tuple has a subproblem id $j$, which identifies the subproblem it belongs to (the $j$'s do not have to be consecutive), and $p(j)$, which is the number of servers allocated to subproblem $j$. The goal is to attach to each tuple a range $[p_1(j), p_2(j)]$, such that the ranges of different subproblems are disjoint and $\max_j p_2(j)\le \sum_j p(j)$.  Thus, each tuple $t$ knows which servers have been allocated to the subproblem to which $t$ belongs.
 
 \medskip \noindent {\bf Computing the output size $\OUT$ of an acyclic join:}   This primitive is a special case of our join-aggregate algorithm, which will be described in Section~\ref{sec:join-aggregate}. 

 \section{r-Hierarchical Joins}
\label{sec:hierarchical}

Recall that in a hierarchical join, all attributes can be organized into a forest, such that $x$ is a descendant of $y$ if and only if $\E_x \subseteq \E_y$. Each $e\in \E$ corresponds to a node $x$ in the forest, such that $e$ contains precisely $x$ and all its ancestors.  A subclass of hierarchical joins are tall-flat joins.  For a tall-flat join, this attribute forest takes the form \revise{of} a special tree, which consists of a single ``stem'' plus a number of leaves at the bottom.  For example, $\revise{\Q_1}
	%%\Q_{\mathrm{tallflat}} 
	= R_1(x_1) \Join R_2(x_1, x_2) \Join R_3(x_1, x_2, x_3) \Join R_4(x_1, x_2, x_3, x_4) \Join R_5(x_1, x_2, x_3, x_5) \Join R_6(x_1, x_2,x_3,x_6)$ is a tall-flat join; $\revise{\Q_2}
	%%\Q_{\mathrm{hierarchical}}
	=R_1(x_1, x_2) \Join R_2(x_1, x_3, x_4) \Join R_3(x_1, x_3, x_5)$ is a hierarchical join (but not tall-flat).  Their attribute forests (actually, trees for these two cases) are shown in Figure~\ref{fig:hierarchical}.
\begin{figure}[h]
  \centering
  \includegraphics[scale = 1.1]{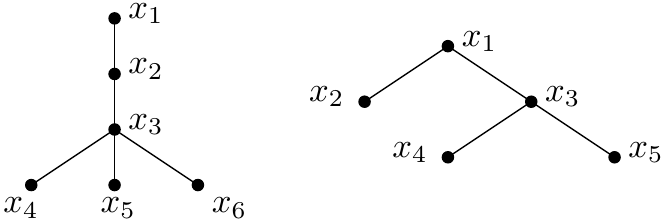}
  \caption{Examples of tall-flat and hierarchical join.}
  \label{fig:hierarchical}
\end{figure}
      
\revise{In this section, we study r-hierarchical joins.} A join is r-hierarchical if its reduced join is hierarchical.  For example, $\revise{\Q_2}%%\Q_{\mathrm{hierarchical}} 
\Join R_4(x_3, x_5) \Join R_5(x_5)$ is an r-hierarchical join (but not hierarchical).  After an r-hierarchical join is reduced, its hyperedges must correspond to the leaves of the attribute forest.

\medskip
\subsection{BinHC algorithm revisited} 
We mentioned above that the HyperCube algorithm \cite{afrati11:_optim} is an instance-optimal algorithm for computing Cartesian products.  The BinHC algorithm \cite{beame14:_skew} is a generalization of the HyperCube algorithm to general joins.  For a join $\Q$, denote the residual query by removing attributes $\bm{x} \subseteq \V$ as $\Q_{\bm{x}}$. Let $\bm{u}$ be any fractional edge packing of $\Q_{\bm{x}}$ that saturates the attributes $\bm{x}$, i.e., $\sum_{e: x \in e}\bm{u}(e) \ge 1$ for every $x \in \bm{x}$, and $\sum_{e: x \in e}\bm{u}(e) \le 1$ for every $x \in  \V - \bm{x}$. Assuming knowing all degree information in advance, this algorithm computes $\Q$ on instance $\R$ in a single round with a load of $\widetilde{O}(\frac{\IN}{p} + L_{\textrm{BinHC}}(p,\R))$, where
\begin{equation*}
\label{eq:skew}
L_{\textrm{BinHC}}(p,\R) := \max_{\bm{x}, \bm{u}} \left(\frac{\sum_{\bm{a} \in \dom(\bm{x})} \prod_{e \in \E} |\sigma_{\bm{x} = \bm{a}} R(e)|^{\bm{u}(e)}}{p}\right)^{\frac{1}{\sum_{e \in \E} \bm{u}(e)}}
\end{equation*}
Here we define $0^0=0$. Note that for any $e\subseteq \bm{x}$, $|\sigma_{\bm{x} = \bm{a}} R(e)|$ is either $0$ or $1$, so we can just set $\bm{u}(e)=0$ for each such $e$ in the definition above.  

%%In Appendix \ref{sec:proof3}, \revise{we} prove the following results:

\begin{theorem}
	\label{the:tall-flat}
	On any tall-flat join and any instance $\R$, $L_{\textrm{BinHC}}(p,\R) = O\left(L_{\textrm{instance}}(p,\R)\right)$.
\end{theorem}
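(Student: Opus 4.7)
The plan is to bound $L_{\textrm{BinHC}}(p,\R)$ by examining each $(\bm{x}, \bm{u})$ in its defining maximum and finding for each a subset $S \subseteq \E$ whose $\Li$-contribution (together with the inherent $\IN/p$ baseline) dominates it. The key step is to normalize the maximizing $(\bm{x}, \bm{u})$ using the tall-flat structure. I claim WLOG that $\bm{x} = \{x_1, \ldots, x_i\}$ is a prefix of the stem attributes: since $|\E_{y_j}|=1$, adding any leaf $y_j$ to $\bm{x}$ only refines within the unique relation containing $y_j$ and can be absorbed into that relation's factor without increasing the objective. I further claim that the maximum is attained at an integer-valued $\bm{u}$: the feasibility polytope of edge packings saturating $\bm{x}$ has a nested-$1$ (interval) constraint matrix reflecting $\E_{x_1} \supseteq \cdots \supseteq \E_{x_h}$ and $|\E_{y_j}|=1$, so its extreme points are integral; a H\"{o}lder-style comparison between a fractional $\bm{u}$ and its integer neighbors on the polytope then shows the BinHC objective is maximized at an integer vertex.

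With integer $\bm{u}$ of support $S$, the packing constraint restricted to non-$\bm{x}$ attributes forces $\{e \setminus \bm{x} : e \in S\}$ to be pairwise disjoint, so the BinHC numerator $\sum_{\bm{a}} \prod_{e \in S} |\sigma_{\bm{x}=\bm{a}} R(e)|$ simplifies exactly to the sub-join size $|\Join_{e \in S} R(e)|$ with exponent $|S|$. It remains to relate this sub-join to $|\Q(\R, S)| = |(\Join_{e \in S} R(e)) \ltimes \Q(\R)|$. In the case where dangling tuples do not inflate the sub-join, the two coincide and yield $(|\Q(\R, S)|/p)^{1/|S|} \le \Li(p,\R)$; when they do, the MPC assumption $\IN \ge p^{1+\epsilon}$ combined with the trivial bound $|\Join_{e \in S} R(e)| \le \IN^{|S|}$ forces the $\IN/p$ baseline to dominate the BinHC bound for that configuration.

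The main obstacle is the integrality reduction. The BinHC objective is nonlinear in $\bm{u}$ (involving $|\sigma R(e)|^{\bm{u}(e)}$ and the exponent $1/\sum_e \bm{u}(e)$), so the polytope's integrality alone does not immediately imply that its maximum is at an integer vertex. A rigorous argument must leverage the algebraic form of the objective, e.g., via H\"{o}lder's inequality applied to fractional splittings or a Carath\'{e}odory-style decomposition of a fractional configuration into a convex combination of integer ones, and exploit the fact that the tall-flat hierarchy keeps the support structure compatible with such decompositions.
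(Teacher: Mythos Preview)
Your approach differs substantially from the paper's, and it has real gaps.

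The paper does not normalize $\bm{x}$ or reduce $\bm{u}$ to integer values. Setting $L=\Li(p,\R)$ and $p(\bm{x},\bm{u})=\sum_{\bm a}\prod_e(|\sigma_{\bm x=\bm a}R(e)|/L)^{\bm u(e)}$, it case-splits on whether all stem attributes $x_1,\dots,x_h$ lie in $\bm x$. If yes, it uses $\bm u(e)\le 1$ to bound each factor by $|\sigma R(e)|/L+1$, expands the product into a sum over \emph{all} subsets $S\subseteq\E$, and bounds each resulting term by $|\Q(\R,S)|/L^{|S|}\le p$. If no, it takes the smallest index $i$ with $x_i\notin\bm x$; by tall-flatness every edge not contained in $\{x_1,\dots,x_{i-1}\}\subseteq\bm x$ contains $x_i$, so the packing constraint at $x_i$ gives $\sum_e\bm u(e)\le 1$, and H\"older's inequality yields $p(\bm x,\bm u)\le(\IN/L)^{\sum\bm u(e)}\le p$ directly on the fractional $\bm u$.

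Your plan has three concrete problems.

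\textbf{(a)} The integrality reduction is the crux and you do not prove it. The feasibility polytope does have integral vertices, but the BinHC objective is neither linear nor quasiconvex in $\bm u$: it has $|\sigma R(e)|^{\bm u(e)}$ in the numerator and $1/\sum_e\bm u(e)$ in the outer exponent. Neither H\"older nor Carath\'eodory gives, without a further argument you have not supplied, a comparison between the value at a fractional point and at an integer vertex. The paper sidesteps this entirely: Case~1 absorbs fractional exponents via the ``$+1$ and expand'' trick, and Case~2 applies H\"older with $\sum\bm u(e)\le 1$ directly.

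\textbf{(b)} Your dangling-tuple fallback is wrong. From $|\Join_{e\in S}R(e)|\le\IN^{|S|}$ the BinHC contribution is $(\IN^{|S|}/p)^{1/|S|}=\IN\cdot p^{-1/|S|}$, which for $|S|\ge 2$ is at least $\IN/\sqrt p$ and is \emph{not} dominated by the $\IN/p$ baseline. So the fallback fails precisely when $|S|>1$, which is the interesting case. The paper handles dangling tuples structurally: in Case~1 the restriction to assignments $\bm a^*$ with $\prod_{e\in\E}|\sigma R(e)|>0$ ensures that every counted combination extends to a full join result, which is what lets the inner sum be charged to $|\Q(\R,S)|$; in Case~2 no join sizes appear at all.

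\textbf{(c)} Your normalization of $\bm x$ to a stem prefix is only half-argued. You justify dropping leaf attributes $y_j$, but give no reason why $\bm x$ cannot be a non-prefix subset of the stem such as $\{x_1,x_3\}$, which is not ruled out by any saturation constraint. The paper's case split handles arbitrary $\bm x$ directly without any such reduction.
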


\begin{proof}
	Below, we write $L:=\Li(p, \R)$ to avoid notational clutter.  For an attribute set $\bm{x}$ and a fractional edge packing $\bm{u}$ of $\Q_{\bm{x}}$, define
	\[p(\bm{x}, \bm{u}) = \sum_{\bm{a} \in \dom(\bm{x})} \prod_{e \in \E} \left(\frac{|\sigma_{\bm{x} = \bm{a}} R(e)|}{L}\right)^{\bm{u}(e)}.\] To show $L_{\textrm{BinHC}}(p,\R) = O(L)$, it suffices to show that $p(\bm{x}, \bm{u}) = O(p)$ for all $\bm{x}$ and $\bm{u}$.
	
	Recall that in a tall-flat join, all attributes can be ordered as $x_1, x_2, \cdots, x_h, y_1, y_2, \cdots, y_l$ such that (1) $\E_{x_1} \supseteq \E_{x_2} \supseteq \cdots \supseteq \E_{x_h}$; (2) $\E_{x_h} \supseteq \E_{y_j}$ for $j =1, 2, \cdots, l$; (3) $|\E_{y_j}| = 1$ for $j =1, 2, \cdots, l$.
	% We divide edges in $\E$ into two subsets $\E_1$ and $\E_2$, where $\E_1 = \{e \in \E: e \subseteq \{x_1,x_2,\cdots, x_h\}\}$ and $\E_2 = \{e \in \E: \exists j \in \{1,2,\cdots, l\}, y_j \in e \}$.
	Consider an attribute set $\bm{x} \subseteq \V$ under the following two cases.
	
	Case (1): $\{x_1, x_2, \cdots, x_h\} \subseteq \bm{x}$.  Consider any edge packing $\bm{u}$ of $\Q_{\bm{x}}$ that saturates $\bm{x}$ (in this case, we actually only need the fact $\bm{u}(e) \le 1$ for all $e$). As observed, we can eliminate any assignment $\bm{a} \in \dom(\bm{x})$ if there exists an edge $e \in \E$ such that $\sigma_{\bm{x} = \bm{a}} R(e) = \emptyset$, so it suffices to consider the remaining assignments $\bm{a^*} \in \dom(\bm{x})$ such that $\prod_{e \in \E} |\sigma_{\bm{x} = \bm{a^*}} R(e)| >0$. Then, we can bound $p(\bm{x}, \bm{u})$ as
	\begin{align*}
	p(\bm{x}, \bm{u})
	\le & \sum_{\bm{a^*}} \prod_{e \in \E} \left(\frac{|\sigma_{\bm{x} = \bm{a^*}} R(e)|}{L} + 1\right)^{\bm{u}(e)}  \\
	\le & \sum_{\bm{a^*}} \prod_{e \in \E} \left(\frac{|\sigma_{\bm{x} = \bm{a^*}} R(e)|}{L} + 1\right) \\
	= & \sum_{\bm{a^*}}\sum_{S \subseteq \E}{1\over L^{|S|}} \prod_{e \in S} |\sigma_{\bm{x} = \bm{a^*}} R(e)| \\
	= & \sum_{S \subseteq \E} \frac{1}{L^{|S|}} \sum_{\bm{a^*}}  \prod_{e \in S} |\sigma_{\bm{x} =  \bm{a^*}} R(e)| \\
	= & \sum_{S \subseteq \E} \frac{|\Q(\R,S)|}{L^{|S|}} = O(p).
	\end{align*}
	
	Case (2): There exists an $x_i \notin \bm{x}$. Let $i$ be the smallest such $i$.  Let $\bm{u}$ be any edge packing of $\Q_{\bm{x}}$.  In particular, we have $\sum_{e: x_i\in e} \bm{u}(e) \le 1$.  As observed earlier, we can set $\bm{u}(e)=0$ for any $e \subseteq \{x_1,\dots, x_{i-1}\}\subseteq \bm{x}$, so it suffices to consider the remaining edges.  Due to the tall-flat property, all these edges contain $x_i$.  Thus,
	\begin{align*}
	p(\bm{x}, \bm{u}) \le &\sum_{\bm{a}} \prod_{e: x_i \in e} \left(\frac{|\sigma_{\bm{x} = \bm{a}} R(e)|}{L}\right)^{\bm{u}(e)} \\
	\le & \prod_{e: x_i \in e} \left(\sum_{\bm{a}} \frac{|\sigma_{\bm{x} = \bm{a}} R(e)|}{L} \right)^{\bm{u}(e)} \quad \text{(H\"older's inequality)} \\
	\le& \prod_{e: x_i \in e} \left(\frac{|R(e)|}{L} \right)^{\bm{u}(e)} \\
	\le& \left(\frac{\IN}{L} \right)^{\sum_{e: x_i \in e} \bm{u}(e)} \le \frac{\IN}{L} \le p.
	\end{align*}
	Combining the two cases, the theorem is proved.
\end{proof}

\begin{theorem}
  \label{the:non-tall-flat}
  On any r-hierarchical join $\Q$ and instance $\R$ without dangling tuples, $L_{\textrm{BinHC}}(p,\R) = O\left(L_{\textrm{instance}}(p,\R)\right)$.
\end{theorem}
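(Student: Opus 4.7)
The plan is to generalize the two-case analysis of Theorem~\ref{the:tall-flat} from the linear ``stem-and-leaves'' structure of tall-flat joins to the attribute forest $F$ of an r-hierarchical join, leveraging the no-dangling-tuples hypothesis. First I would reduce $\Q$ to its hierarchical form: if $e \subseteq e'$ then $R(e) \supseteq \pi_e R(e')$ always, and the no-dangling-tuples hypothesis forces $R(e) = \pi_e R(e')$ on surviving tuples, so the subsumed hyperedge is redundant and reduction preserves both the answer and the no-dangling-tuples property. After reduction, each hyperedge of $\Q$ corresponds to a root-to-leaf path in $F$.

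Fix $\bm{x}$ and a fractional edge packing $\bm{u}$ of $\Q_{\bm{x}}$. Writing $L = \Li(p, \R)$ and $p(\bm{x}, \bm{u}) = \sum_{\bm{a} \in \dom(\bm{x})} \prod_{e \in \E}(|\sigma_{\bm{x}=\bm{a}} R(e)|/L)^{\bm{u}(e)}$ as in the previous proof, it suffices to prove $p(\bm{x}, \bm{u}) = O(p)$. I would split on whether $\bm{x}$ contains every internal (non-leaf) attribute of $F$. In \emph{Case~(1)}, $\bm{x}$ covers all internal attributes of $F$, so every hyperedge has at most one attribute---a leaf---outside $\bm{x}$. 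The binomial-expansion argument from Case~(1) of Theorem~\ref{the:tall-flat} then transfers with only notational changes, producing
\[ p(\bm{x}, \bm{u}) \;\le\; \sum_{S \subseteq \E} \frac{1}{L^{|S|}} \sum_{\bm{a}^*} \prod_{e \in S} |\sigma_{\bm{x}=\bm{a}^*} R(e)| \;\le\; \sum_{S \subseteq \E} \frac{|\Q(\R, S)|}{L^{|S|}} \;=\; O(p), \]
where the second inequality uses that every consistent tuple combination $(t_e)_{e \in S}$ counted by the inner sum extends to a full join result (by the no-dangling-tuples hypothesis and the hierarchical structure) and hence lies in $\Q(\R, S)$; the final bound invokes the per-instance lower bound over constantly many $S$.

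In \emph{Case~(2)}, some internal attribute of $F$ lies outside $\bm{x}$. Let $z_1, \ldots, z_k$ be the \emph{maximal} non-$\bm{x}$ attributes of $F$ (those whose proper ancestors all lie in $\bm{x}$), and let $T_i$ be the subtree of $F$ rooted at $z_i$. Every hyperedge in $T_i$ contains $z_i$ together with all $\bm{x}$-ancestors of $z_i$, and $\sum_{e: z_i \in e} \bm{u}(e) \le 1$ by the packing constraint at $z_i$. For $k = 1$ the argument from Case~(2) of Theorem~\ref{the:tall-flat} applies verbatim, yielding $\IN/L \le p$. For $k \ge 2$ I would iterate: peel off one $z_i$ at a time via H\"older's inequality on the corresponding factor, and at each step absorb the resulting $|R(e)|/L$ factor using the per-instance lower bound on $L$ applied to an $S$ that picks one hyperedge from each remaining subtree. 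Because no dangling tuples guarantees that the chosen hyperedges from different subtrees join freely over their shared $\bm{x}$-ancestor values, $|\Q(\R, S)|$ captures the full product structure and each H\"older step is charged only a factor of $p^{1/k}$ rather than $p$.

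The delicate step is Case~(2) with multiple maximal non-$\bm{x}$ attributes: a naive per-subtree application of H\"older gives $(\IN/L)^k$, which is $O(p)$ only when $L \ge \IN/p^{1/k}$---a bound that the per-instance lower bound delivers only when applied to a carefully chosen cross-subtree $S$. The no-dangling-tuples hypothesis is essential at this step, since without it $|\Q(\R, S)|$ can be much smaller than the product $\prod_i \max_{e \in T_i} |R(e)|$ of per-subtree relation sizes, and the per-instance bound would fail to absorb the product of per-subtree H\"older factors; this is precisely the phenomenon that forces the $\log^{O(1)} p$ slack (or the preprocessing via semijoins) in the dangling-tuples case.
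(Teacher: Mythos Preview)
Your Case~1 is on the right track and coincides with the paper's argument in the special case where each ``group'' (defined below) is a singleton. The gap is in Case~2: the iterative H\"older peeling you sketch does not go through. Once you split $\sum_{\bm a}$ via H\"older into per-subtree factors, the coupling between subtrees through their shared $\bm x$-ancestors is lost and each factor is bounded only by $\IN/L$; you correctly flag that this yields $(\IN/L)^k$, which is not $O(p)$. Your proposed rescue---invoking the per-instance bound on an $S$ picking one edge per subtree---cannot close the gap, because the edges in such an $S$ still join on their common $\bm x$-ancestors and do \emph{not} ``join freely'': $|\Q(\R,S)|$ can be as small as $\Theta(\IN)$, so the per-instance bound delivers only $L \ge (\IN/p)^{1/k}$, far short of the $\IN/p^{1/k}$ that $(\IN/L)^k \le p$ would require.

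The paper avoids H\"older across subtrees entirely. It groups hyperedges by their \emph{stuck attribute}---the highest non-$\bm x$ vertex on the root-to-leaf path---into $\E_1,\dots,\E_h$ (these stuck attributes are exactly your maximal $z_i$'s). The packing constraint at each stuck attribute gives $\sum_{e\in\E_i}\bm u(e)\le 1$, so weighted AM--GM turns $\prod_{e\in\E_i}(|\sigma_{\bm x=\bm a} R(e)|/L)^{\bm u(e)}$ into $\sum_{e\in\E_i}|\sigma_{\bm x=\bm a} R(e)|/L + O(1)$ \emph{inside} the sum over $\bm a$, keeping that sum intact. Expanding the product over $i$ then produces a sum only over those $S\subseteq\E$ with $|S\cap\E_i|\le 1$ for every $i$. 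The structural point that makes everything work---and which your Case~2 is missing---is that any two edges in such an $S$ can share only $\bm x$-attributes (a shared non-$\bm x$ attribute would lie at or above both stuck attributes, forcing them to coincide), whence $\sum_{\bm a}\prod_{e\in S}|\sigma_{\bm x=\bm a}R(e)|\le\lvert\Join_{e\in S}R(e)\rvert$, and the no-dangling-tuples hypothesis gives $\lvert\Join_{e\in S}R(e)\rvert=|\Q(\R,S)|\le pL^{|S|}$. Summing over the $O(1)$ many such $S$ yields $O(p)$, uniformly for all $\bm x$; no case split is needed.
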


\begin{proof}
	Let $\T$ be the forest of attributes corresponding to $\Q$.  Consider an arbitrary attribute set $\bm{x}$.  We say that a root-to-leaf path in $\T$, which corresponds to some $e\in \E$, is {\em stuck} at the highest attribute on the path that is not included in $\bm{x}$.  In this way, all edges in $\Q$ can be divided into disjoint groups $\E_1, \E_2, \cdots \E_h$, such that edges in one group share the common stuck attribute.  Consider any fractional edge packing $\bm{u}$, we must have $\sum_{e \in \E_i} \bm{u}(e) \le 1$ for each $\E_i$ due to the packing constraint at the common stuck attribute of $\E_i$. Then, we can bound $p(\bm{x}, \bm{u})$ as
	\begin{align*}
	p(\bm{x}, \bm{u}) \le & \sum_{\bm{a}} \prod_{i \in \{1,2,\cdots,h\}}  \sum_{e \in \E_i} \left(\frac{|\sigma_{\bm{x} = \bm{a}} R(e)|}{L}\right)^{\sum_{e \in \E_i} \bm{u}(e)} \\
	\le & \sum_{\bm{a}} \prod_{i \in \{1,2,\cdots,h\}}  \sum_{e \in \E_i} \left(\frac{|\sigma_{\bm{x} = \bm{a}} R(e)|}{L} + 1\right) \\
	= & \sum_{\bm{a}} \prod_{i \in \{1,2,\cdots,h\}}  \left( \sum_{e \in \E_i} \frac{|\sigma_{\bm{x} = \bm{a}} R(e)|}{L} + |\E_i|\right) \\
	\le & \sum_{\bm{a}}  |\E|^{ |\E|} \sum_{S : |S \cap \E_i| \le 1, i=1,\dots,h}  \frac{\prod_{e \in S} |\sigma_{\bm{x} = \bm{a}} R(e)|}{L^{|S|}} \\
	= & |\E|^{ |\E|} \sum_{S : |S \cap \E_i| \le 1, i=1,\dots,h}  \frac{\sum_{\bm{a}}  \prod_{e \in S} |\sigma_{\bm{x} = \bm{a}} R(e)|}{L^{|S|}} \\
	\le & |\E|^{ |\E|} \sum_{S : |S \cap \E_i| \le 1, i=1,\dots,h}  \frac{|\Q(R,S)| }{L^{|S|}} = O(p).
	\end{align*}
	The last inequality needs some explanation: Any such $S$ includes at most one edge from each $\E_i$.  Thus, if two edges in $S$ share any common attribute, that attribute must be in $\bm{x}$ (otherwise they must belong to the same $\E_i$).  Thus, for any $\bm{a}$, all tuples in $\sigma_{\bm{x}=\bm{a}}R(e), e\in S$ join with each other, so we have \[\sum_{\bm{a}}\prod_{e \in S} |\sigma_{\bm{x} = \bm{a}} R(e)| = |\Join_{e \in S} R(e)|.\]  Furthermore, since there are no dangling tuples, every join result in $\Join_{e \in S} R(e)$ must be part of a full join result, so $|\Join_{e \in S} R(e)| \le |\Q(R,S)|$. 
\end{proof}

Note that since $\Li(p,\R)$ is a per-instance lower bound %%ever
\revise{even} for multi-round algorithms, this means that the BinHC algorithm is instance-optimal even among all multi-round algorithms, up to polylogarithmic factors.  This result also incorporates the instance-optimality of the HyperCube algorithm on Cartesian products, which are special r-hierarchical joins without dangling tuples.

\paragraph{Remark}
Koutris and %%Suciu
\revise{Suciu}~\cite{koutris11:_paral} show that non-tall-flat joins cannot be done with load $\tilde{O}({\IN \over p} + {\OUT \over p})$ by one-round algorithms. This does not contradict Theorem~\ref{the:non-tall-flat} since their lower bound construction uses dangling tuples.  Our result implies that the key barrier for one-round algorithms is actually the dangling tuples.  If they do not exist, one-round algorithms can go beyond tall-flat joins and solve r-hierarchical joins instance-optimally, up to polylog factors.  On the other hand, once $O(1)$ rounds are allowed, dangling tuples become irrelevant, since they can be removed with linear load and $O(1)$ rounds.

\subsection{An instance-optimal algorithm}
\label{sec:hierarchial-algorithm}

We have shown that the BinHC algorithm is an instance-optimal algorithm for r-hierarchical joins, but it has an instance-optimality ratio of $\log^{O(1)} p$, where the $O(1)$ exponent depends on the query size, and is at least $m$, the number of relations.  In this section, we improve the optimality ratio to $O(1)$, i.e., achieving a load of $O(\frac{\IN}{p} + \Li(p,\R))$.  Our algorithm uses $O(1)$ rounds, but note that BinHC also needs $O(1)$ rounds to remove the dangling tuples if they exist.  Furthermore, our algorithm is deterministic while BinHC is randomized.

As a preprocessing step, we remove all dangling tuples.  Then we reduce the join hypergraph, since if $e\subseteq e'$, $R(e)$ will not affect the final join results after dangling tuples are removed\footnote{Strictly speaking, this violates the tuple-based requirement that when emitting a join result, all the participating tuples must be present.  This can be easily fixed.  Before removing $R(e)$, we attach each tuple $t\in R(e)$ to all tuples in $R(e')$ that join with $t$.  This can be done by the multi-search primitive with linear load.}. Thus, we are left with a hierarchical join $\Q$ on an instance $\R$ with no dangling tuples.  

Let $\T$ be the attribute forest of $\Q$.  Recall that after the join is reduced, each relation corresponds to a leaf of $\T$, whose attributes are precisely the leaf's ancestors in $\T$.  Our algorithm is recursive.  We will show that the load of this algorithm is $O({\IN \over p} + \Li(p,\R))$ for any hierarchical join $\Q$ on any instance $\R$.  To simplify notation, we will not derive the exact constant in the big-Oh, which depends (exponentially) on the recursion depth.  Since the recursion depth is proportional to (actually, twice) the height of $\T$, which is a constant, this is not a concern.  Similarly, the number of servers employed by the algorithm will be $O(p)$, where the hidden constant may also depend on the recursion depth.

The base case is when $\Q$ has just one relation.  In this case the algorithm just emits all tuples in the relation, achieving the bound $O({\IN \over p} + \Li(p,\R))$ trivially.

For a general hierarchical join $\Q$ and an instance $\R$, we proceed as follows.  We first compute $L_{\textrm{instance}}(p,\R)$: We use $p$ servers to compute $|\Join_{e \in S} R(e)|$ for each $S \subseteq \E$ (recall that computing the output size of an acyclic join is an MPC primitive). This requires $O(p)$ servers with load $O({\IN \over p})$.  Note that $\Q(\R,S) = \Join_{e \in S} R(e)$ when there is no dangling tuples in $\R$, so we can compute $\Li(p,\R)$ as defined in (\ref{eq:instance}).  Setting $L = \frac{\IN}{p} + L_{\textrm{instance}}(p,\R)$, we will show below how to compute the join with $O(p)$ servers and load $O(L)$.

Let $k$ be the number of trees in $\T$.  We handle the following two cases using different recursive strategies:

\paragraph{Case (1): $k=1$} In this case, $\T$ is a tree.  Suppose the root attribute of $\T$ is $x$, which is included in all the relations.  Consider every $a \in \dom(x)$, and let $\R_a = \{\sigma_{x = a} R(e): e \in \E\}$.  It suffices to compute the residual query $\Q_x$ on each $\R_a$, but all the $\Q_x(\R_a)$'s have to be computed in parallel, using $O(p)$ servers in total.  Thus, the key is to allocate servers to these residual queries appropriately so as to ensure a uniform load of $O(L)$.  To do so, we first compute $\IN_a$, the input size of $\R_a$, for all $a\in \dom (x)$.  Since $\IN_a = \sum_{e \in \E} |\sigma_{x = a} R(e)|$, and each tuple belongs to exactly one $\R_a$, this is a sum-by-key problem, i.e., each tuple $t$ with $\pi_{x} t =a$ has key $a$ and weight $1$.  Note that $\IN = \sum_a \IN_a$.

An instance $\R_a$ is {\em heavy} if $\IN_a > L$ and {\em light} otherwise. We handle heavy and light instances in different ways.

\paragraph{Case (1.1): Light instances} We use the parallel-packing primitive to put the light instances into $O({\IN \over L}) = O(p)$ groups with each group having total input size $O(L)$.  Then we simply use one server to solve the instances in each group.  The load of each server is $O(L)$.  

\paragraph{Case (1.2): Heavy instances} By definition, there are at most ${\IN \over L} = O(p)$ heavy instances. For each heavy instance $\R_a$, we allocate $p_a = \lceil p \cdot \frac{\IN_a}{\IN}\rceil$ servers to compute in parallel the join size $|\Q_x(\R_a, S)|$ for all $a\in \dom(x)$ and all $S\subseteq \E$.  This uses $O(p)$ servers, and the load is $O(\max_a \frac{\IN_a}{p_a}) = O(\frac{\IN}{p})$.  Next, for each heavy instance $\R_a$, we allocate
\[p_a = \max_{S \subseteq \E} \frac{|\Q_x(\R_a, S)|}{L^{|S|}}\]
servers and compute $\Q_x(\R_a)$ recursively in parallel. The number of servers used is 
\[\sum_a p_a \le \sum_{a} \sum_{S \subseteq \E} \frac{|\Q_x(\R_a, S)|}{L^{|S|}} = \sum_{S \subseteq \E} \frac{|\Q(\R, S)|}{L^{|S|}}  = O(p).\]

By the induction hypothesis, computing $\Q_x(\R_a)$ with $p_a$ servers has a load of (the big-Oh of)
\begin{equation}
  \label{eq:2}
  \frac{\IN_a}{p_a} + \Li(p_a, \R_a) = {\IN_a \over p_a} + \max_{S \subseteq \E} \left(\frac{|\Q_x(\R_a, S)|}{p_a}\right)^{\frac{1}{|S|}}.
\end{equation}
We bound each term of (\ref{eq:2}): For a heavy instance $\R_a$, there must exist at least one $e \in \E$ such that $|\sigma_{x = a} R(e)| \ge \frac{1}{|\E|} \cdot \IN_a$. Furthermore, since there are no dangling tuples, every tuple in $\sigma_{x = a} R(e)$ must be part of a join result of $\Q_x(\R_a)$, so $|\sigma_{x = a} R(e)| = |\Q_x(\R_a, \{e\})|$.  Taking $S = \{e\}$, we have \[p_a \ge \frac{1}{L} \cdot |\Q_x(\R_a, \{e\})| = \frac{1}{L} \cdot |\sigma_{x = a} R(e)| \ge \frac{\IN_a}{|\E| \cdot L},\]
so ${\IN_a \over p_a} = O(L)$.  The second term of (\ref{eq:2}) is also bounded by $L$ simply by the definition of $p_a$. 

\paragraph{Case (2): $k>1$} In this case, the join becomes a Cartesian product $\Q_1(\R_1) \times \cdots \times \Q_k(\R_k)$, where each $\Q_i(\R_i)$ is a join under Case (1).  One would attempt to first compute each $\Q_i(\R_i)$ recursively, and then compute the Cartesian product, but this would not yield instance-optimality.  Just consider an instance with $|\Q_1(\R_1)|=1$ and $|\Q_2(\R_2)|=p\cdot \IN$, where $\Q_2(\R_2) = R_1(A, B) \Join R_2(B,C)$ with $\revise{|\dom(B)|=1}, |R_1| = \IN, |R_2| = p$.  On this instance, we have $\Li(p, \R)= \max({\IN \over p}, \sqrt{\IN})$, but if we took a two-step approach, merely storing the intermediate result $\Q_2(\R_2)$ would incur a load of $\Omega(\IN)$.  This means that we have to interleave the two steps so as to avoid storing the intermediate results $\Q_i(R_i)$ explicitly. 

We arrange servers into a $p_1 \times p_2 \times \cdots \times p_k$ hypercube, where the dimensions $p_1, p_2, \cdots, p_k$ will be determined later.  We identify each server with coordinates $(c_1, c_2, \cdots, c_k)$, where $c_i \in [p_i]$.  For every combination $c_1, \dots, c_{i-1}, c_{i+1}, \dots, c_k$, the $p_i$ servers with coordinates $(c_1, \cdots, c_{i-1}, *, c_{i+1}, \cdots, c_k)$ form a group to compute $\Q_i(\R_i)$ (using the algorithm under Case (1)).  Yes, each $\Q_i(\R_i)$ is computed $p_1\cdots p_{i-1}p_{i+1}\cdots p_k$ times, which seems to be a lot of redundancy.  However, as we shall see, there will be no redundancy in terms of the final join results, and it is exactly due to this redundancy that we avoid the shuffling of the intermediate result and achieve an optimal load.  Consider a particular server $(c_1,\dots, c_k)$.  It participates in $k$ groups, one for each $\Q_i(\R_i), i=1,\dots,k$.  For each $\Q_i(\R_i)$, it emits a subset of its join results, denoted $\Q_i(\R_i, c_1\dots, c_k)$.  Then the server emits the Cartesian product $\Q_1(\R_1, c_1\dots, c_k)\times \cdots \times \Q_k(\R_k, c_1\dots, c_k)$.  Note that for each group of servers computing $\Q_i(\R_i)$, the $p_i$ servers in the group emit $\Q_i(R_i)$ with no redundancy, so there is no redundancy in emitting the Cartesian product.

It remains to show how to set $p_1,\dots, p_k$ so that $p_1\cdots p_k = O(p)$ and each server has a load of $O(L)$.  To do so, we first compute $\IN_i$, the input size of $\R_i$, in the same way as in Case (1).  An instance $\R_i$ is {\em heavy} if $\IN_i > L$ and {\em light} otherwise. For each heavy instance $\R_i$, we use $p$ servers to compute $|\Join_{e \in S} \R_i(e)| = |\Q_i(R_i, S)|$ for all $S \subseteq \E_i$, where $\E_i$ is the set of edges in $\Q_i$. This requires $O(p)$ servers with load $O(\frac{\IN}{p})$.  Then if $\R_i$ is light, we set $p_i = 1$; otherwise set \[p_i = \max_{S \subseteq \E_i} \left\lceil \frac{|\Q_i(\R_i, S)|}{L^{|S|}}\right \rceil.\] Let $I=\{ i \mid \R_i \text{ is heavy}\}$. The number of servers used is
\begin{align*}
\prod_{i \in I} p_i \le & \prod_{i \in I} \sum_{S \subseteq \E_i} ( \frac{|\Q_i(\R_i, S_i)|}{L^{|S|}} + 1 ) 
\le \sum_{S \subseteq \bigcup_{i \in I} \E_i} \frac{|\Q(\R, S)|}{L^{|S|}} = O(p).
\end{align*}

Finally, consider the load of each server, which serves to compute each $\Q_i(\R_i)$ with a group of $p_i$ servers.  For a light $\R_i$, $p_i=1$ and it imposes a load of $O(L)$.  For a heavy $\R_i$, by the induction hypothesis, the load is (the big-Oh of)
\begin{equation*}
{\IN_i \over p_i} + \Li(p_i, \R_i) = {\IN_i \over p_i} + \max_{S \subseteq \E_i}\ \left(\frac{|\Q_i(\R_i, S)|}{p_i}\right)^{\frac{1}{|S|}}.
\end{equation*}
This can be bounded by $O(L)$ using the same argument as Case (1.2).  Summing over all $i=1,\dots, k$ increases the load by just a $k=O(1)$ factor.  

The induction proof thus completes and we obtain the following result.
\medskip
\begin{theorem}
	\label{the:hierarchical}
        On any r-hierarchical join query $\Q$ and any instance $\R$, there is an algorithm computing $\Q(\R)$ in $O(1)$ rounds with load $O(\frac{\IN}{p} + L_{\textrm{instance}}(p,\R))$.
\end{theorem}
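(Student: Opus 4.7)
The plan is to give a recursive algorithm on the attribute forest $\T$ of the (reduced) hierarchical join. As preprocessing I would first remove all dangling tuples using $O(1)$ semijoins with linear load, and then apply the reduce procedure until every relation corresponds to a leaf of $\T$; this has the crucial side-effect that $|\Q(\R,S)| = |\Join_{e\in S} R(e)|$ for every $S \subseteq \E$, so I can compute all the per-subset output sizes via the join-size primitive and thereby evaluate $\Li(p,\R)$ exactly. I set $L := \IN/p + \Li(p,\R)$ as the target load.

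For the recursive step, let $k$ be the number of trees in $\T$. If $k=1$, the root attribute $x$ is common to every relation, so I partition $\R$ by the value of $x$ into sub-instances $\R_a = \{\sigma_{x=a}R(e) : e\in\E\}$ with sizes $\IN_a$ obtained by sum-by-key. Call $\R_a$ \emph{heavy} if $\IN_a > L$ and \emph{light} otherwise. The light sub-instances have total size at most $\IN$ and can be bundled into $O(\IN/L) = O(p)$ groups of size $O(L)$ using parallel-packing and solved one group per server. Each heavy $\R_a$ receives its own block of $p_a = \max_{S\subseteq\E}\lceil |\Q_x(\R_a,S)|/L^{|S|}\rceil$ servers, and the residual query $\Q_x$ is recursively solved on that block.

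If $k>1$, the join factors as a Cartesian product $\Q_1(\R_1)\times\cdots\times\Q_k(\R_k)$, and a two-phase scheme that first materialises each $\Q_i(\R_i)$ would be hopeless because a single $|\Q_i(\R_i)|$ can dwarf $\OUT$. Instead I would arrange the servers as a $p_1\times\cdots\times p_k$ hypercube, have the $p_i$ servers along axis $i$ collectively (and redundantly across the other axes) compute $\Q_i(\R_i)$ using the Case $k=1$ algorithm, and let each server emit the Cartesian product of the local slices it produced along its $k$ axes. Because each axis covers $\Q_i(\R_i)$ exactly once, every final tuple is emitted exactly once, and no intermediate $\Q_i(\R_i)$ is ever shuffled. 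The dimensions are $p_i = 1$ for light $\R_i$ and $p_i = \max_{S\subseteq\E_i}\lceil |\Q_i(\R_i,S)|/L^{|S|}\rceil$ otherwise.

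The main obstacle I expect is the load and server accounting. For server totals, I need $\sum_a p_a = O(p)$ in Case $k=1$ and $\prod_i p_i = O(p)$ in Case $k>1$; expanding the ceilings and pushing the outer sum/product through the max yields at most $\sum_{S\subseteq\E} |\Q(\R,S)|/L^{|S|}$, which is $O(p)$ by the definition of $\Li$ (in the product case I also need that an $S$ meeting several $\E_i$ factors still corresponds to a legitimate sub-join of $\Q$, using the Cartesian structure). For the per-server load, the subtle point is bounding $\IN_a/p_a$ on heavy sub-instances: absence of dangling tuples forces some $e$ with $|\sigma_{x=a}R(e)| \ge \IN_a/|\E|$, and since $|\Q_x(\R_a,\{e\})| = |\sigma_{x=a}R(e)|$, taking $S=\{e\}$ in the definition of $p_a$ gives $p_a = \Omega(\IN_a/(|\E|\cdot L))$ and hence $\IN_a/p_a = O(L)$. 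The second term of the recursive load is at most $L$ by the very choice of $p_a$. Since the recursion depth is the height of $\T$, a constant, the accumulated hidden constants remain bounded, and the induction closes.
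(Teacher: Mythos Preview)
Your proposal is correct and follows essentially the same route as the paper: the same preprocessing (dangling-tuple removal and reduction), the same target load $L$, the same heavy/light split on the root attribute in the single-tree case with parallel-packing for light instances and the allocation $p_a = \max_S |\Q_x(\R_a,S)|/L^{|S|}$ for heavy ones, the same hypercube scheme with per-axis recursive computation in the forest case, and the same two accounting arguments (summing/multiplying the $p_a$ or $p_i$ back to $\sum_S |\Q(\R,S)|/L^{|S|}$, and bounding $\IN_a/p_a$ via the no-dangling-tuples observation that some $|\sigma_{x=a}R(e)| \ge \IN_a/|\E|$). The only detail you elide is the intermediate step of allocating $\lceil p\cdot\IN_a/\IN\rceil$ servers per heavy $a$ to first compute the statistics $|\Q_x(\R_a,S)|$ before the recursive call, but this is routine.
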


Since an instance-optimal algorithm is also output-optimal, we also obtain an output-optimal algorithm for r-hierarchical joins.  In fact, we can derive a closed-form formula of the output-optimal bound, i.e., we bound %%, described in Appendix~\ref{sec:output-optimality-r}.
%%Below, we give a closed-form of output-optimal complexity for r-hierarchical queries, i.e., we bound 
\begin{align*}
\max_{\R \in \mathfrak{R}(\IN, \OUT)} \Li(p, \R) 
\end{align*} as a function of $\IN$ and $\OUT$.  First, observe that $\Li(p, \R)$ only depends to the reduced instance of $\R$, so we can assume that $\R$ contains no dangling tuples. Then, we can rewrite $\max_{\R \in \mathfrak{R}(\IN, \OUT)} \Li(p, \R)$ as 
\begin{align*}
\max_{\R \in \mathfrak{R}(\IN, \OUT)} \Li(p, \R) = & \max_{\R \in \mathfrak{R}(\IN, \OUT)} \max_{S \subseteq \E} \ (\frac{|\Q(\R, S)|}{p} )^{\frac{1}{|S|}} \\
=& \max_{\R \in \mathfrak{R}(\IN, \OUT)} \ \max_{S \subseteq \E} \  (\frac{|\Join_{e \in S} R(e)|}{p} )^{\frac{1}{|S|}}\\
=& \max_{S \subseteq \E} \max_{\R \in \mathfrak{R}(\IN, \OUT)}  (\frac{|\Join_{e \in S} R(e)|}{p} )^{\frac{1}{|S|}} 
\end{align*}
Consider a specific subset $S \subseteq \E$ and an arbitrary instance $\R \in \mathfrak{R}(\IN, \OUT)$. One trivial upper bound for $|\Join_{e \in S} R(e)|$ is $\OUT$. The other bound is $\IN^{|S|}$ when the join degenerates to a Cartesian product.  With these observations, we can bound the quantity above as:
\begin{align*}
\max_{S \subseteq \E} \max_{\R \in \mathfrak{R}(\IN, \OUT)}  (\frac{|\Join_{e \in S} R(e)|}{p} )^{\frac{1}{|S|}}  = & \max_{k \in [n]} \max_{S \subseteq \E: |S|=k} \max_{\R \in \mathfrak{R}(\IN, \OUT)}  (\frac{|\Join_{e \in S} R(e)|}{p} )^{\frac{1}{|S|}} \\
\le & \max_{k \in [n]} \min \{\frac{\IN}{p^{\frac{1}{k}}},  (\frac{\OUT}{p} )^{\frac{1}{k}} \} 
= \max \{\frac{\IN}{p^{\frac{1}{k^*-1}}},  (\frac{\OUT}{p} )^{\frac{1}{k^*}} \}
\end{align*}
where $k^*$ denotes the integer $\lceil \log_{\IN} \OUT \rceil$.

Next, we show that this is tight, i.e., there exists an instance $\R \in \mathfrak{R}(\IN, \OUT)$ such that for one subset $S_1 \subseteq \E$ involving $k^*-1$ relations, there is $|\Join_{e \in S_1} R(e)| = \IN^{k^*-1}$ and for another subset $S_2 \subseteq \E$ involving $k^*$ relations, there is $|\Join_{e \in S_2} R(e)| = \OUT$. Our hard instance construction is based on the following property of acyclic joins (this is probably known, but we cannot find an explicit reference):

\begin{lemma}
	\label{lem:acyclic-integral}
	An acyclic join has integral edge cover number.
\end{lemma}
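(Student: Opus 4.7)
The plan is to prove the lemma by induction on $|\V|+|\E|$, exploiting the join-tree characterization of $\alpha$-acyclicity. The LP for fractional edge cover is
\[\min \sum_{e \in \E} x_e \quad \text{subject to} \quad \sum_{e \ni v} x_e \ge 1 \text{ for every } v \in \V, \quad x_e \ge 0,\]
and I want to exhibit an integral optimum. The base case $\V=\emptyset$ (optimum $0$) and the degenerate case $|\E|=1$ (optimum $1$, achieved by $x_e=1$) are immediate.

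For the inductive step, I would fix a join tree $\T$ of $\Q=(\V,\E)$ and pick any leaf $e$ of $\T$ with neighbor $e'$. Because $e$ is a leaf, the join-tree property forces every vertex in $e \setminus e'$ to appear only in $e$: any other edge containing such a vertex would have to lie in the component of $\T$ separated from $e'$ by $e$, which is just $\{e\}$. There are two cases. Case (a): $e \setminus e' = \emptyset$, so $e \subseteq e'$. I would delete $e$ from $\E$; this leaves the LP value unchanged (any weight on $x_e$ can be transferred to $x_{e'}$), and $\T$ with the node $e$ removed is still a join tree, so induction applies. Case (b): pick any $v \in e \setminus e'$. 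Its constraint is simply $x_e \ge 1$, and pushing $x_e$ down to $1$ can only decrease the objective while saturating every constraint at a vertex of $e$; thus we may fix $x_e := 1$. The residual LP is then exactly the fractional edge cover LP on the sub-hypergraph $(\V \setminus e,\, \E \setminus \{e\})$, which remains acyclic, since for each $u \in \V \setminus e$ the subtree of edges containing $u$ never used the node $e$ (as $u \notin e$), so deleting $e$ does not disconnect it.

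By induction the residual LP admits an integral optimum, and adjoining $x_e=1$ gives an integral feasible solution for the original LP whose value equals the LP optimum (since the constraint at $v$ plus the residual LP value is a matching lower bound). Hence the fractional and integral edge cover numbers coincide. I do not expect a single conceptual obstacle; the main technical care is in confirming that the join-tree structure descends cleanly through both reductions, which is routine once a leaf is chosen. A subtlety worth double-checking is that in case (b) we really may set $x_e=1$ in an optimal solution without loss: the only constraints involving $x_e$ are $x_e \ge 1$ (at $v$) and constraints of the form $x_e + (\text{others}) \ge 1$ at other vertices of $e$, all of which remain feasible when $x_e$ is decreased to $1$.
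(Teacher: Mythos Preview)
Your proof is correct and takes essentially the same approach as the paper: both use an ear-decomposition of the acyclic hypergraph, with the paper invoking the GYO reduction directly (remove an edge contained in another, or set the weight of an edge with a private vertex to $1$ and delete its attributes) and you obtaining the same two cases by picking a leaf of the join tree. The reductions and the accompanying LP arguments are identical in substance.
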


\begin{proof}
	For an acyclic hypergraph $\Q= (\V, \E)$, denote the optimal edge covering of $\Q$ as $\mathcal{C}$. If there exist $e, e' \in \E$ such that $e \subseteq e'$, then $\mathcal{C}(e) = 0$; otherwise we can just shift the weight from $e$ to $e'$ and obtain a better (at least not worse) edge covering. So the optimal edge cover of $\Q$ is equivalent to that of the residual query by removing $e$. If there exists an attribute that appears only in edge $e$, then $\mathcal{C}(e) = 1$. So the optimal edge cover of $\Q$ is equivalent to the edge $e$ and the optimal edge cover of the residual query by removing all attributes in $e$. After recursively apply these two procedures, the query will become an empty set implied by the GYO reduction~\cite{abiteboul1995foundations}. In this process, every edge chosen by $\mathcal{C}$ has weight $1$. 
\end{proof}

Let $\mathcal{C}$ be the optimal edge covering of $\Q$. We identify two subsets of $\mathcal{C}$ with $k^*-1, k^*$ edges respectively, denoted as $\mathcal{C}_{k^*-1}, \mathcal{C}_{k^*}$, such that $C_{k^*-1} \subseteq C_{k^*}$. Such two subsets can always be found since $|C| \ge \lceil \log_\IN \OUT \rceil$ by the AGM bound~\cite{atserias2008size}. We consider a hard instance $\R$ constructed as below. Each edge $e \in \mathcal{C}$ is associated with at least one unique attribute denoted as $e(u)$. One of the unique attributes in $e$ for $e \in \mathcal{C}_{k^*}$ has $\IN$ distinct values in its domain while one of the unique attributes in $e$ for $e \in \mathcal{C}_{k^*} - \mathcal{C}_{k^*-1}$ has $\frac{\OUT}{\IN^{k^*-1}}$ distinct values in its domain. Remaining attributes have only one value in their domains. On this instance, there is $|\Join_{e \in \mathcal{C}_{k^*-1}} R(e)| = \IN^{k^*-1}$ and $|\Join_{e \in \mathcal{C}_{k^*}} R(e)| = \OUT$.

\begin{theorem}
	\label{lem:instace-output}
	There is an algorithm that computes any r-hierarchical join in $O(1)$ rounds with load $O\left(\frac{\IN}{p^{1/{\max\{1, k^*-1\}}}} + (\frac{\OUT}{p})^{\frac{1}{k^*}}\right)$, where $k^* = \lceil \log_{\IN} \OUT \rceil$.  This bound is output-optimal.
\end{theorem}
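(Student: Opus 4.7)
The statement has two parts: an upper bound (an algorithm achieving the stated load) and a matching lower bound (output-optimality). My proof would combine Theorem~\ref{the:hierarchical} with the closed-form worst-case analysis of $\Li(p,\R)$ carried out in the paragraphs immediately preceding the statement, and then invoke the per-instance lower bound~(\ref{eq:instance}) on the hard instance just constructed. Essentially all the ingredients are already on the table; the job is to assemble them and verify the boundary case.

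For the upper bound, the plan is to run the algorithm of Theorem~\ref{the:hierarchical}, which handles any r-hierarchical join in $O(1)$ rounds with load $O(\frac{\IN}{p}+\Li(p,\R))$, and then to bound $\Li(p,\R)$ uniformly over $\R\in\mathfrak{R}(\IN,\OUT)$. The derivation preceding the theorem already shows
\[
    \max_{\R\in\mathfrak{R}(\IN,\OUT)}\Li(p,\R) \;\le\; \max_{k\in[n]}\min\Bigl\{\frac{\IN}{p^{1/k}},\Bigl(\frac{\OUT}{p}\Bigr)^{1/k}\Bigr\},
\]
and that this outer maximum is attained at $k=k^*=\lceil\log_{\IN}\OUT\rceil$, giving the two terms $\IN/p^{1/(k^*-1)}$ and $(\OUT/p)^{1/k^*}$. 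I would then treat the boundary regime $\OUT\le \IN$ (so $k^*=1$) separately: there the first term has no meaningful ``$k^*-1$'' and the $\frac{\IN}{p}$ summand in Theorem~\ref{the:hierarchical} already dominates, which is precisely what the $\max\{1,k^*-1\}$ in the exponent encodes.

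For output-optimality, my plan is to apply the per-instance lower bound~(\ref{eq:instance}), which the excerpt explicitly notes holds even for multi-round tuple-based MPC algorithms, to the hard instance $\R\in\mathfrak{R}(\IN,\OUT)$ built just before the theorem. Lemma~\ref{lem:acyclic-integral} gives an integral optimal edge cover $\mathcal{C}$ of the r-hierarchical (hence acyclic) join $\Q$; the AGM bound $\OUT\le \IN^{|\mathcal{C}|}$ ensures $|\mathcal{C}|\ge k^*$, so a nested chain $\mathcal{C}_{k^*-1}\subseteq \mathcal{C}_{k^*}\subseteq \mathcal{C}$ exists, and the domain assignment in the construction produces $|\Join_{e\in \mathcal{C}_{k^*-1}}R(e)|=\IN^{k^*-1}$ and $|\Join_{e\in \mathcal{C}_{k^*}}R(e)|=\OUT$. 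Plugging $S=\mathcal{C}_{k^*-1}$ and $S=\mathcal{C}_{k^*}$ into~(\ref{eq:instance}) produces per-instance lower bounds of $\IN/p^{1/(k^*-1)}$ and $(\OUT/p)^{1/k^*}$, respectively, matching the upper bound.

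The main obstacle is not algorithmic but rather bookkeeping: ensuring that the hard-instance construction does lie in $\mathfrak{R}(\IN,\OUT)$ for the given r-hierarchical $\Q$ (which requires checking the chain $\mathcal{C}_{k^*-1}\subseteq\mathcal{C}_{k^*}$ and the domain sizes combine so that $\IN$ and $\OUT$ come out exactly) and cleanly handling the $k^*=1$ degenerate case so that the exponent $1/\max\{1,k^*-1\}$ in the statement is justified. No further algorithmic component or new lower-bound technique is needed; the proof is an assembly of Theorem~\ref{the:hierarchical}, the derivation of the closed form, Lemma~\ref{lem:acyclic-integral}, and the lower bound~(\ref{eq:instance}).
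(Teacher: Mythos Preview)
Your proposal is correct and follows essentially the same approach as the paper: the upper bound comes from Theorem~\ref{the:hierarchical} combined with the closed-form bound on $\max_{\R\in\mathfrak{R}(\IN,\OUT)}\Li(p,\R)$ derived in the preceding paragraphs, and output-optimality is certified by evaluating the per-instance lower bound~(\ref{eq:instance}) at $S=\mathcal{C}_{k^*-1}$ and $S=\mathcal{C}_{k^*}$ on the hard instance built via Lemma~\ref{lem:acyclic-integral}. Your handling of the $k^*=1$ boundary case is also what the paper intends by writing $\max\{1,k^*-1\}$ in the exponent.
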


Below we give a cleaner output-sensitive bound.  This is not tight for $\OUT > \IN^2$, but easier to use.  In particular, this result will be used in the analysis of the output-sensitive algorithm for arbitrary acyclic joins in Section~\ref{sec:acyclic-algorithm}.

\begin{corollary}
	\label{cor:instace-output}
	There is an algorithm that computes any r-hierarchical join in $O(1)$ rounds with load $O(\frac{\IN}{p} + \sqrt{\frac{\OUT}{p}})$.
\end{corollary}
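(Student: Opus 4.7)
The plan is to derive this corollary as a direct weakening of Theorem~\ref{lem:instace-output}, or equivalently, to invoke Theorem~\ref{the:hierarchical} and bound $\Li(p,\R)$ uniformly over the class $\mathfrak{R}(\IN,\OUT)$ by the cleaner expression $O(\IN/p + \sqrt{\OUT/p})$. Since Theorem~\ref{the:hierarchical} already supplies an $O(1)$-round r-hierarchical join algorithm with load $O(\IN/p + \Li(p,\R))$, the remaining task is a case analysis on the subset $S \subseteq \E$ attaining the maximum in the definition $\Li(p,\R) = \max_{S \subseteq \E} (|\Q(\R,S)|/p)^{1/|S|}$.

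First I would handle $|S| = 1$, say $S = \{e\}$. Since $\Q(\R,S) = R(e) \ltimes \Q(\R) \subseteq R(e)$, we get $|\Q(\R,S)| \le |R(e)| \le \IN$, so this term contributes at most $\IN/p$. For $|S| \ge 2$, every tuple in $\Q(\R,S)$ is the restriction of some full join result to the relations in $S$, hence $|\Q(\R,S)| \le \OUT$, and the term is at most $(\OUT/p)^{1/|S|}$.

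For the $|S| \ge 2$ case, split on whether $\OUT \ge p$. When $\OUT \ge p$, the map $k \mapsto (\OUT/p)^{1/k}$ is decreasing in $k$, so it is maximized at $k = 2$, yielding $\sqrt{\OUT/p}$. When $\OUT < p$, the term is at most $1$, which is absorbed by $\IN/p$ because the standing assumption $\IN \ge p^{1+\epsilon}$ forces $\IN/p \ge p^{\epsilon} \ge 1$. Taking the maximum over all $S$ gives $\Li(p,\R) = O(\IN/p + \sqrt{\OUT/p})$, and substituting into Theorem~\ref{the:hierarchical} completes the proof.

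There is no real obstacle here: the only ``content'' is the elementary monotonicity of $(\OUT/p)^{1/k}$ in $k$ together with the trivial bounds $|\Q(\R,\{e\})| \le |R(e)|$ and $|\Q(\R,S)| \le \OUT$. The point of isolating this corollary is not a new algorithmic idea but to produce a closed-form load bound that depends only on $\IN$, $\OUT$, and $p$, so that it can be plugged directly into the output-sensitive algorithm for general acyclic joins in Section~\ref{sec:acyclic-algorithm} without having to reason about $k^\ast = \lceil \log_\IN \OUT \rceil$ each time.
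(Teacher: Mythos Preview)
Your argument is correct, and in fact slightly more direct than the paper's own proof. The paper derives the corollary from Theorem~\ref{lem:instace-output} by case-analyzing on $k^* = \lceil \log_{\IN}\OUT\rceil$: when $k^*=1$ the load is $O(\IN/p)$, and when $k^*\ge 2$ both terms $\IN/p^{1/\max\{1,k^*-1\}}$ and $(\OUT/p)^{1/k^*}$ are checked to be at most $\sqrt{\OUT/p}$ (using $\OUT>\IN^{k^*-1}$ and $\IN>p$). You instead go back one step to Theorem~\ref{the:hierarchical} and bound $\Li(p,\R)$ directly from its definition via the trivial inequalities $|\Q(\R,\{e\})|\le \IN$ and $|\Q(\R,S)|\le \OUT$, together with the monotonicity of $(\OUT/p)^{1/k}$ in $k$. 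This bypasses the $k^*$ machinery entirely, which is a small gain in simplicity; the paper's route, on the other hand, makes explicit that the corollary is a genuine weakening of the tight output-optimal bound. Both are perfectly fine for the corollary's purpose as a plug-in bound for Section~\ref{sec:acyclic-algorithm}.
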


\begin{proof}
	When $\OUT \le \IN$, we have $k^* =1$ and the load complexity is $O(\frac{\IN}{p})$ trivially. For $k^* \ge 2$, the term $(\frac{\OUT}{p})^{{1/k^*}}$ is always no larger than $\sqrt{\frac{\OUT}{p}}$. The term ${\IN/p^{1/{\max\{1, k^*-1\}}}}$ is also no larger than $\sqrt{\frac{\OUT}{p}}$ as long as $\IN^2 \cdot p^{1-{2/\max\{1, k^*-1\}}} \le \OUT$, which always holds when $k^* \ge 2$.
\end{proof}

%%% Local Variables: 
%%% TeX-master: "paper"
%%% End: 

 \section{Line-3 Join}
\label{sec:line3}

The simplest acyclic but not r-hierarchical join is the line-3 join $R_1(A, B) \Join R_2(B, C) \Join R_3(C, D)$.  In this section, we give an output-optimal MPC algorithm with load $O(\frac{\IN}{p} + \frac{\sqrt{\IN \cdot \OUT}}{p})$, together with a matching lower bound.  In particular, the lower bound implies that instance-optimal algorithms are not possible for the line-3 join.  In Section~\ref{sec:full-acyclic}, we extend these results to arbitrary acyclic joins.

\subsection{The Yannakakis algorithm revisited}

The Yannakakis algorithm first removes all the dangling tuples, which is just a series of semi-joins and can be done with load $O({\IN \over p})$.  Then the algorithm performs pairwise joins in some arbitrary order.  In the RAM model, the join order does not affect the asymptotic running time: After dangling tuples have been removed, any intermediate join result is part of a full join result, so the running time of the last join, which is $\Theta(\OUT)$, dominates that of any intermediate join.  In fact, this argument applies on a per-instance basis, and the Yannakakis algorithm is instance-optimal on any instance with any join order.

Interestingly, the join order does matter in the MPC model.  Consider the following instance of the line-3 join (see the top half of Figure~\ref{fig:l3-instance}).  Attributes $A,B,C,D$ have domain sizes $\frac{\OUT}{N}, \frac{N^2}{\OUT}, N, 1$, respectively. Set $R_1(A,B) = \dom(A) \times \dom(B)$, $R_2(B,C)$ is a one-to-many relation from $\dom(B)$ to $\dom(C)$, and $R_3(C,D) = \dom(C) \times \dom(D)$.  Note that this instance has $\IN = \Theta(N)$ and the output size is exactly $\OUT$.  Consider first the join plan $(R_1 \Join R_2) \Join R_3$, and note that $|R_1\Join R_2| = |R_1 \Join R_2 \Join R_3| = \OUT$. Using the $O({\IN \over p} + \sqrt{\OUT \over p})$-load algorithm \cite{beame14:_skew,hu17:_output} for binary joins, the load of computing $R_1 \Join R_2$ is $O({\IN \over p} + \sqrt{{\OUT \over p}})$.  However, since the output of the first join is the input of the second join, the input size for the second join is $\OUT$, so the load of the second join is $O({\OUT \over p} + \sqrt{{\OUT \over p}}) = O({\OUT \over p})$.  In general, the intermediate join result can be as large as $O(\OUT)$, which is why the Yannakakis algorithm incurs a load of $O({\OUT \over p})$ (after dangling tuples are removed) on an acyclic join, as observed in \cite{afrati2014gym,koutris18:_algor}.

Now consider the alternative plan $R_1 \Join (R_2 \Join R_3)$. Note that $|R_2 \Join R_3| = O(\IN)$, so the load of computing $R_2\Join R_3$ is $O({\IN \over p})$, while the load of computing the second join is $O({\IN \over p} + \sqrt{{\OUT \over p}})$.  Crucially, the reason why the second plan is better is that it has a smaller intermediate join size.  Note that a smaller intermediate join size does not matter in the RAM model, where the total cost is always dominated by the last join. But it does matter in the MPC model, because of the $O({\IN \over p} + \sqrt{{\OUT \over p}})$ load complexity of a binary join, which has a linear dependency on the input size but sublinear in the output size.  Fundamentally, this is because the MPC model is all about {\em locality}: algorithms strive to send all ``related'' tuples to the same server so as to maximize the number of join results that can be found by the server locally.

\begin{figure}[h]
	\centering
	\includegraphics[scale=1.0]{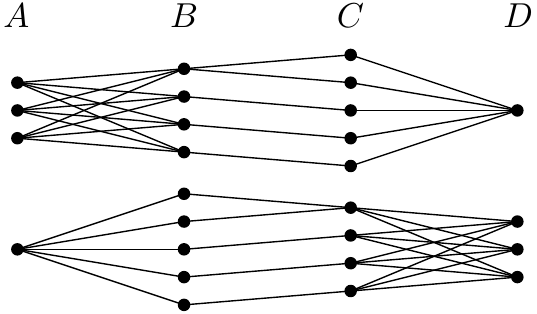}
	\caption{A hard instance for the Yannakakis algorithm.}
	\label{fig:l3-instance}
\end{figure}

Now, the key question is if there is always a join plan with an intermediate join size asymptotically smaller than $O(\OUT)$.  Unfortunately, the answer is no.  A bad example can be easily constructed, by just putting two of the above instances together, but in opposite directions (see Figure~\ref{fig:l3-instance}).  Nevertheless, this bad example precisely points us to the right direction: Although a global best join order may not exist, but if we decompose the join into multiple pieces, it is possible to find a provably good join order for each.  This is exactly the basic idea of our algorithm, presented next.

\subsection{A new algorithm for the line-3 join}

We first compute $\OUT$ (an MPC primitive).  Then we proceed in two steps:

\paragraph{Step (1): Computing degrees} 
For a value in attribute $B$, it is {\em heavy} if its {\em degree} in relation $R_1$, i.e., $|\sigma_{B=b} R_1|$, is greater than $\tau$ (value to be determined later), otherwise {\em light}.  We first use the sum-by-key primitive to compute the degrees of all $b$'s for $b \in \dom(B)$.  After classifying the values in $\dom(B)$ as heavy and light, we divide tuples in $R_1$ and $R_2$ also into heavy tuples and light tuples, depending on their $B$ value.  More precisely, a tuple in $R_1$ or $R_2$ is heavy if its $B$ value is heavy, and light otherwise.  This can be done by the multi-search primitive.  We denote the heavy (resp.\ light) tuples in $R_i$ as $R^H_i$ (resp. $R^L_i$), for $i=1,2$. 

\paragraph{Step (2): Decomposing the join}
We decompose the join into the following two parts, and compute them using different join orders:
\begin{align*}
  \Q_1 & = R_1^H \Join (R_2^H \Join R_3),\\
  \Q_2 & = (R_1^L \Join R_2^L) \Join R_3.
\end{align*}
Note that since $R_1$ and $R_2$ are both divided according to the $B$ attribute, $R_1^H$ do not join with $R_2^L$, $R_1^L$ do not join with $R_2^H$.

\medskip Now we analyze the load. For $\Q_1$, the intermediate join $R_{23} = R_2^H \Join R_3$ has size bounded by $\frac{\OUT}{\tau}$, since each intermediate join result from $R_{23}$ has a heavy $B$ value, so it joins with at least $\tau$ tuples in $R_1$.  Thus, the load of computing $\Q_1$ is (big-Oh of)
\begin{equation}
  \label{eq:3}
  {\IN \over p} + {\OUT \over p\tau} + \sqrt{\OUT \over p}.
\end{equation}

For $\Q_2$, the intermediate join $R_{12} = R_1^L \Join R_2^L$ has size bounded by $\IN \cdot \tau$, since each light tuple from $R_2$ can join with at most $\tau$ tuples from $R_1$. Thus, the load of computing $\Q_2$ is (big-Oh of) \revise{$O({\IN \over p} + {\IN \cdot \tau \over p} + \sqrt{\OUT \over p})$.}
\begin{equation}
  \label{eq:4}
  {\IN \over p} + {\IN \cdot \tau \over p} + \sqrt{\OUT \over p}.
\end{equation}

Setting $\tau = \sqrt{{\OUT \over \IN}}$ balances the second term in (\ref{eq:3}) and in (\ref{eq:4}), and we obtain the claimed result (note that $\sqrt{{\OUT \over p}} \le {\sqrt{\IN \cdot \OUT} \over p}$ for $\IN \ge p$):
\begin{theorem}
	\label{the:hybrid-yannakakis}
There is an algorithm computing the line-3 join with load $O\left(\frac{\IN}{p} + \frac{\sqrt{\IN \cdot \OUT}}{p}\right)$ in $O(1)$ rounds.
\end{theorem}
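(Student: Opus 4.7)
The plan is to instantiate the two-step recipe sketched in the paragraphs immediately preceding the statement, verify the two size bounds on the intermediate joins, and balance them with a single threshold $\tau$. As preprocessing I would (i) remove all dangling tuples by a constant number of semi-joins at linear load $O(\IN/p)$, so that every remaining tuple of $R_1, R_2, R_3$ participates in at least one output; (ii) compute the exact value of $\OUT$ using the acyclic join-size primitive introduced in Section~\ref{sec:primitive}; and (iii) use sum-by-key to compute $|\sigma_{B=b} R_1|$ for every $b \in \dom(B)$, then multi-search to label each tuple of $R_1$ and $R_2$ as heavy or light according to whether its $B$-value exceeds the threshold $\tau$ (whose value is fixed at the end). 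These steps together cost $O(\IN/p)$ load in $O(1)$ rounds and will be absorbed into the final bound.

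Because $R_1^H$ shares heavy $B$-values only with $R_2^H$ and $R_1^L$ only with $R_2^L$, the output partitions disjointly as $\Q_1(\R)\cup\Q_2(\R)$, and the two subqueries can be computed in parallel on separate groups of $O(p)$ servers, each relying on the binary-join primitive of load $O(\IN/p + \sqrt{\OUT/p})$. For $\Q_1$ I would first materialise $R_{23}:=R_2^H\Join R_3$ and then join with $R_1^H$. The key claim is $|R_{23}|\le \OUT/\tau$: since dangling tuples have been removed, every tuple of $R_{23}$ extends to some full join result, and every heavy $B$-value in that tuple witnesses at least $\tau$ distinct output tuples through $R_1^H$. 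Hence the two successive binary joins cost $O(\IN/p + (\OUT/\tau)/p + \sqrt{\OUT/p})$. Symmetrically, for $\Q_2$ I would form $R_{12}:=R_1^L\Join R_2^L$ first, whose size is at most $\IN\cdot\tau$ because each tuple of $R_2^L$ has a light $B$-value and so joins with at most $\tau$ tuples of $R_1$; the total cost is $O(\IN/p + \IN\tau/p + \sqrt{\OUT/p})$.

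Setting $\tau=\sqrt{\OUT/\IN}$ equalises $\OUT/\tau$ with $\IN\tau=\sqrt{\IN\cdot\OUT}$, so the total load becomes $O(\IN/p + \sqrt{\IN\cdot\OUT}/p + \sqrt{\OUT/p})$; the last term is subsumed by $\sqrt{\IN\cdot\OUT}/p$ under the standing assumption $\IN\ge p^{1+\eps}$. The only real obstacle, in my view, is justifying the $|R_{23}|\le\OUT/\tau$ bound cleanly: this is why removing dangling tuples beforehand is not merely a convenience but essential, since otherwise a tuple of $R_{23}$ could survive without any witness in $R_1^H$, breaking the degree argument that converts intermediate size into output size.
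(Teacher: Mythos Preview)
Your proposal is correct and follows essentially the same approach as the paper: split on heavy/light $B$-values, compute $\Q_1$ and $\Q_2$ with opposite join orders, bound the two intermediate sizes by $\OUT/\tau$ and $\IN\cdot\tau$, and balance with $\tau=\sqrt{\OUT/\IN}$. One small remark: the bound $|R_{23}|\le\OUT/\tau$ does not actually hinge on dangling-tuple removal, because any tuple of $R_{23}=R_2^H\Join R_3$ carries a heavy $B$-value, which by definition already has $\ge\tau$ matching tuples in $R_1$ (hence in $R_1^H$), so the witnesses in $R_1^H$ are guaranteed by the heaviness threshold itself rather than by the absence of dangling tuples.
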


\subsection{Lower bound}

%%In Appendix~\ref{sec:proofL3}, 
We prove the following lower bound on any tuple-based algorithm for computing the line-3 join. 

\begin{theorem}
  \label{lem:3-line-lower}
  For any $\OUT \ge \IN$, there exists an instance $\R$ for the line-3 join with input size $\Theta(\IN)$ and output size $\Theta(\OUT)$, such that any tuple-based algorithm computing the join in $O(1)$ rounds must have a load of $\Omega\left(\min\left\{\frac{\sqrt{\IN \cdot \OUT}}{p \cdot \log \IN}, \frac{\IN}{\sqrt{p}}\right\}\right)$.
\end{theorem}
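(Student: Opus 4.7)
The plan is to prove the lower bound by an explicit random construction combined with a per-server emit bound and a uniform discrepancy estimate on the random part of the instance. Set $\tau = \lceil \sqrt{\OUT/\IN} \rceil$ and $n = \lfloor \IN/\tau \rfloor$, and take $\dom(B) = \dom(C) = [n]$. Let every $b \in [n]$ have $R_1$-degree exactly $\tau$ (with fresh $A$-values) and every $c \in [n]$ have $R_3$-degree exactly $\tau$ (with fresh $D$-values), so $|R_1| = |R_3| = n\tau = \Theta(\IN)$. Draw $R_2$ uniformly at random among all $\tau$-regular bipartite graphs on $[n] \times [n]$, yielding $|R_2| = n\tau = \Theta(\IN)$ and $|\Q(\R)| = |R_2| \cdot \tau^2 = n\tau^3 = \Theta(\OUT)$. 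This gives a valid instance throughout the range $\IN \le \OUT \le \IN^2$ covered by the theorem.

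For any tuple-based algorithm running in $O(1)$ rounds with load $L$, fix a server and let $T$ be the multiset of all tuples it ever receives, so $|T| = O(L)$. Writing $x_b = |\{(a,b) \in R_1 \cap T\}|$ and $y_c = |\{(c,d) \in R_3 \cap T\}|$, both bounded above by $\tau$, the server can emit at most
\[
\sum_{(b,c) \in R_2 \cap T} x_b \, y_c \;\le\; \tau^2 \cdot |R_2 \cap (B^* \times C^*)|,
\]
where $B^* = \{b : x_b > 0\}$ and $C^* = \{c : y_c > 0\}$. Maximising this bound over the adversary's choice of $x_b, y_c$ subject to $\sum_b x_b, \sum_c y_c \le O(L)$ puts the worst case at $|B^*| = |C^*| = L/\tau$ with $x_b = y_c = \tau$. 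A Chernoff bound applied to the edges of the random $\tau$-regular $R_2$, together with a union bound over the $\binom{n}{L/\tau}^2 \le 2^{O((L/\tau)\log \IN)}$ candidate pairs $(B^*, C^*)$, yields with positive probability the uniform discrepancy estimate $|R_2 \cap (B^* \times C^*)| = O(L^2/\IN + (L/\tau)\log \IN)$ simultaneously for every such pair. Fixing an $R_2$ realising this property and summing the per-server bounds over the $p$ servers and $O(1)$ rounds,
\[
\Omega(\OUT) = \Omega(\IN\tau^2) \;\le\; p \cdot \tau^2 \cdot O\bigl(L^2/\IN + (L/\tau)\log\IN\bigr) \;=\; O\bigl(pL^2\tau^2/\IN + pL\tau\log \IN\bigr),
\]
so either $pL^2\tau^2/\IN = \Omega(\IN\tau^2)$, giving $L = \Omega(\IN/\sqrt{p})$, or $pL\tau\log\IN = \Omega(\IN\tau^2)$, giving $L = \Omega(\IN\tau/(p\log\IN)) = \Omega(\sqrt{\IN\OUT}/(p\log\IN))$. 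In either case $L = \Omega(\min\{\sqrt{\IN\OUT}/(p\log \IN),\ \IN/\sqrt{p}\})$, as claimed.

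The main obstacle is establishing the uniform discrepancy estimate: because an adaptive algorithm may induce any $(B^*, C^*)$ after examining the input, the estimate must hold for every such pair simultaneously, and the union bound over $2^{O((L/\tau)\log\IN)}$ choices of $(B^*, C^*)$ is precisely what introduces the $\log \IN$ factor in the final lower bound. A secondary technical point is that the edges of a $\tau$-regular random graph are not independent, so one needs a negative-correlation variant of Chernoff, or equivalently one can switch to a bipartite graph in which each edge is included independently with probability $\tau/n$ and handle the resulting (sharply concentrated) degree fluctuations by an averaging step.
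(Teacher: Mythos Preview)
Your proposal is correct and follows essentially the same route as the paper's proof: the same deterministic fan-out structure on $R_1,R_3$ with degree $\tau=\Theta(\sqrt{\OUT/\IN})$, a random $R_2$, the reduction to ``full groups'' $|B^*|=|C^*|=L/\tau$, and the Chernoff-plus-union-bound discrepancy estimate yielding $J(L)=O(\tau^2 L^2/\IN + \tau L\log\IN)$ and hence the claimed min. The only differences are cosmetic: the paper takes $R_2$ to have each edge present independently with probability $\tau^2/N$ (exactly the variant you mention at the end), and it spells out the exchange argument you assert when reducing to the worst case $x_b=y_c=\tau$.
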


\begin{proof}
	Our lower bound argument is combinatorial in nature.  We will construct a hard instance $\R$, such that a server can produces at most $J(L)$ join results in a round, no matter which $L$ tuples from $\R$ are loaded to the server.  Then $p$ servers can product at most $O(p\cdot J(L))$ results over $O(1)$ rounds. Setting $p\cdot J(L) = \Omega(\OUT)$ will yield a lower bound on $L$.  Thus, any upper bound on $J(L)$ will yield a lower bound on $L$, and we will only focus on upper bounding $J(L)$.
	
	We construct $\R$ using the probabilistic method, i.e., we randomly generate an instance, and show that with positive probability (actually, with high probability), such a randomly generated instance satisfies our needs.  The construction is similar to the one used in \cite{hu17:_output}, but the parameters and arguments are different.
	
	A randomly constructed instance is shown in Figure~\ref{fig:l3-instance-lower}.  In fact, only $R_2$ is random, while $R_1$ and $R_3$ are deterministic. Let $N = \frac{\IN}{3}, \tau = \sqrt{\frac{\OUT}{N}}$, and set $\dom(B) = \dom(C) = {N \over \tau}$.  Each distinct value of $B$ appears in $\tau$ tuples in $R_1(A,B)$, and each distinct value in $C$ appears in $\tau$ tuples in $R_3(C,D)$.  The $\tau$ tuples in $R_1$ (resp. $R_3$) that share the same $B$ (resp. $C$) value are called a {\em group}.  For each pair of values $(b,c), b\in \dom(B), c\in\dom(C)$, the tuple $(b,c)$ is included in $R_2(B,C)$ with probability $\frac{\tau^2}{N}$ independently.  Note that $|R_1| = |R_3| = N$, and $E[|R_2|] = N$, so the input size is expected to be $\IN$. The output size is expected to be $\tau^2 \cdot ({N \over \tau})^2 \cdot {\tau^2 \over N} = \OUT$.  By the Chernoff inequality, the probability that the input size or output size deviates from their expectations by more than a constant fraction is $\exp(-\Omega(N))$.
	\begin{figure}[h]
		\centering
		\includegraphics[scale=1.0]{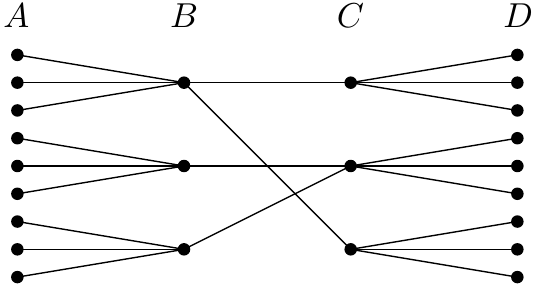}
		\caption{A randomly constructed instance on $L_3$.}
		\label{fig:l3-instance-lower}
	\end{figure}
	
	To give an upper bound on $J(L)$, we only restrict the server to load at most $L$ tuples from $R_1$ and $R_3$, while tuples in $R_2$ can be accessed for free.  Furthermore, we argue below that we only need to consider the situation where the server loads $R_1$ and $R_3$ in whole groups.  Suppose two groups in $R_1$, say, $g_1$ and $g_2$, are not loaded in full (we may assume w.l.o.g. that $L$ is a multiple of $\tau$, so there cannot be exactly one non-full group): $x_1 < \tau$ tuples of $g_1$ and $x_2 < \tau$ tuples of $g_2$ have been loaded.  Suppose they respectively join with $y_1$ and $y_2$ tuples in $R_3$ that are loaded by the server.  Note that they will produce $x_1 y_1 + x_2 y_2$ join results.  Without loss of generality, assume $y_1 \ge y_2$.  Now consider the alternative where the server loads $x_1 + 1$ tuples of $g_1$ and $x_2-1$ tuples of $g_2$.  Then this would produce $(x_1 +1)y_1 + (x_2-1)y_2 = x_1 y_1 +x_2y_2 +y_1 -y_2 \ge x_1 y_1+ x_2y_2$ tuples.  This means that by moving one tuple from $g_2$ to $g_1$, the server can only get more join results (at least not less).  We can move tuples from one group to another as long as there are two non-full groups.  Eventually we arrive at a situation where all groups of $R_1$ are loaded by the server in full, without decreasing the reported join size.  Next, we apply the same transformation to the groups of $R_3$ to make all its groups full as well.  Therefore, to maximize $J(L)$, the server should only load $R_1$ and $R_3$ in full groups.
	
	Thus, the server loads $\frac{L}{\tau}$ groups from $R_1$ and $\frac{L}{\tau}$ groups from $R_3$.  Below we show that a random instance constructed as above has the following property with high probability: On every possible choice of the ${L \over \tau}$ groups of $R_1$ and ${L \over \tau}$ groups of $R_3$ to be loaded, $J(L)$ is always bounded.
	
	Consider a particular choice of the $\frac{L}{\tau}$ groups from $R_1$ and $\frac{L}{\tau}$ groups from $R_3$ to be loaded. There are $\left(\frac{L}{\tau}\right)^2$ pairs of groups, and each pair has probability $\frac{\tau^2}{N}$ to join, so we expect to see $\frac{L^2}{N}$ pairs to join. Because the pairs join independently, by the Chernoff bound, the probability that more than $\delta \cdot \frac{L^2}{N}$ pairs join is at most $\exp\left(-\Omega(\delta \cdot \frac{L^2}{N})\right)$, for some parameter $\delta \ge 2$ to be determined later. There are $O\left((\frac{N}{\tau})^{\frac{2L}{\tau}}\right)$ different choices of $\frac{L}{\tau}$ groups from $R_1$ and ${L \over \tau}$ groups from $R_3$. So, by the union bound, the probability that one of them yields more than $\delta \cdot \frac{L^2}{N}$ joining groups is at most
	\begin{align*}
	&  O\left(({N \over \tau})^{\frac{2L}{\tau}}\right) \cdot \exp\left(-\Omega(\frac{\delta L^2}{N})\right) = \exp\left(-\Omega(\frac{\delta L^2}{N}) + O(\frac{L}{\tau} \log N)\right).
	\end{align*}
	This probability is exponentially small if $\delta \cdot \frac{L^2}{N} > c_1 \cdot \frac{L}{\tau}  \log N$ for some sufficiently large constant $c_1$, so we set
	\begin{equation}
	\label{eq:delta}
	\delta = \max\left\{\frac{c_1 \cdot N \log N}{\tau L}, 2\right\}.
	\end{equation}
	
	Since each joining group produces $\tau^2$ join results, we have shown that with high probability, a random instance has the property that no matter which $L$ tuples are loaded, we always have   $J(L) \le \delta \cdot \frac{\tau^2L^2}{N}$.  Putting this into $p\cdot J(L) = \Omega(\OUT)$, we obtain
	\begin{equation}
	\label{eq:counting}
	\delta \cdot \frac{\tau^2 L^2}{N} \cdot p = \Omega(\OUT).
	\end{equation}	
	Plugging (\ref{eq:delta}) into (\ref{eq:counting}), we have
	\[ \max\left\{\frac{N \log N}{\tau L} \cdot \frac{\tau^2 L^2}{N} \cdot p,  \frac{\tau^2 L^2}{N} \cdot p \right\} = \Omega(\OUT),\]
	or
	\begin{equation}
	\label{eq:5}
	\max\left\{ \tau p L \log N, {\tau^2 p L^2 \over N}  \right\} = \Omega(\OUT).
	\end{equation}
	Plugging in $\tau = \sqrt{\frac{\OUT}{N}}$, $N = {\IN \over 3}$,
	\[\max\left\{\log \IN \cdot p L \sqrt{{\OUT \over \IN}},  {\OUT \cdot pL^2\over \IN^2} \right\} = \Omega(\OUT).\]
	The theorem is then proved after rearranging the terms.
\end{proof}

Ignoring logarithmic factors, this lower bound completes our understanding of the line-3 join in \revise{terms} of output-optimality: (1) When $\OUT \le \IN$, the Yannakakis algorithm has linear load $O\left(\frac{\IN}{p}\right)$.  (2) When $\IN < \OUT \le p \cdot \IN$, the lower bound becomes $\tilde{\Omega}\left(\frac{\sqrt{\IN \cdot \OUT}}{p}\right)$, which is matched by our new algorithm.  (3) When $\OUT \ge p\cdot \IN$, the lower bound is $\Omega\left({\IN \over \sqrt{p}}\right)$, which is matched by the worst-case optimal algorithm in~\cite{hu2016towards,koutris16:_worst}.  In particular, this means that when $\OUT$ is large enough, the load complexity of the join is no longer output-sensitive. This also stands in contrast with the RAM model, where the complexity of any acyclic join always grows linearly with $\OUT$.

An easy corollary %%(proof in Appendix~\ref{sec:proofL3}) 
is the following result, which shows that instance-optimality is not achievable for the line-3 join.

\begin{corollary}
  \label{cor:instance-impossible}
  For any $\IN \ge p^{3/2}$, there is an instance $\R$ with input size $\Theta(\IN)$ for the line-3 join, such that any tuple-based algorithm computing the join in $O(1)$ rounds must have a load of $\Omega({\IN \over p^{1/2} \log\IN})$, while $\Li(p, \R)=O({\IN \over p})$.
\end{corollary}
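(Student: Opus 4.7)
The plan is to derive the corollary directly from Theorem~\ref{lem:3-line-lower}, by picking a suitable output size and then computing $\Li(p,\R)$ on the resulting hard instance. Concretely, I would instantiate Theorem~\ref{lem:3-line-lower} with $\OUT = \Theta(p \cdot \IN)$. At this value $\sqrt{\IN \cdot \OUT}/(p \log \IN) = \Theta(\IN/(\sqrt{p}\log\IN))$, which is smaller than the alternative $\IN/\sqrt{p}$ in the minimum; hence the theorem hands us an instance $\R$ of input size $\Theta(\IN)$ on which any $O(1)$-round tuple-based algorithm must have load $\Omega(\IN/(\sqrt{p}\log\IN))$, meeting the first requirement.

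It then remains to bound $\Li(p,\R)$ on the specific random instance constructed in the proof of Theorem~\ref{lem:3-line-lower}. With high probability this instance contains no dangling tuples, so $|\Q(\R,S)| = |\Join_{e \in S} R(e)|$ for every nonempty $S \subseteq \E$, and I only need to enumerate the seven nonempty subsets of the three edges. The three singletons each give $|R_i| = \Theta(\IN)$, contributing $\Theta(\IN/p)$ to the max. The two ``adjacent'' pairs $\{e_1,e_2\}$ and $\{e_2,e_3\}$ have two-way join size $\tau \cdot |R_2| = \Theta(\sqrt{\IN \cdot \OUT}) = \Theta(\IN\sqrt{p})$, contributing $(\IN/\sqrt{p})^{1/2}$. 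The ``non-adjacent'' pair $\{e_1,e_3\}$ and the full triple both have size exactly $\OUT = \Theta(p\IN)$, contributing $\sqrt{\IN}$ and $\IN^{1/3}$ respectively. Under the hypothesis on how $\IN$ dominates $p$ in the corollary, each of these quantities is $O(\IN/p)$, giving $\Li(p,\R) = O(\IN/p)$.

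The main obstacle, and the only nontrivial observation, is for the $\{e_1,e_3\}$ subset. In the line-3 join every full-join tuple $(a,b,c,d)$ is uniquely determined by its $R_1$- and $R_3$-projections, so the semi-join in the definition of $\Q(\R,S)$ collapses to the full join itself and $|\Q(\R,\{e_1,e_3\})| = \OUT$ holds as a structural identity for any instance. This rigidly couples the largest term of $\Li$ to $\OUT$, and is what determines how large $p$ can be as a function of $\IN$ before $(\OUT/p)^{1/2}$ overtakes $\IN/p$. Once this bijective correspondence is recorded, the corollary reduces to the routine arithmetic verifications sketched above, and the separation between the unconditional lower bound $\Omega(\IN/(\sqrt{p}\log\IN))$ and the per-instance bound $\Li(p,\R) = O(\IN/p)$ follows immediately.
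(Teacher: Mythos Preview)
Your approach matches the paper's: set $\OUT = p\cdot\IN$ in Theorem~\ref{lem:3-line-lower} and compute $\Li(p,\R)$ term by term. You are in fact more careful than the paper in that you include the non-adjacent pair $\{e_1,e_3\}$; the paper's proof lists only a singleton, an adjacent pair, and the full triple. Your bijection argument in the last paragraph correctly gives $|\Q(\R,\{e_1,e_3\})| = \OUT$, and this supersedes your earlier blanket claim that the absence of dangling tuples forces $|\Q(\R,S)| = |\Join_{e\in S} R(e)|$ for every $S$, which is false precisely for this disconnected $S$ (there $|\Join_{e\in S} R(e)| = |R_1|\cdot|R_3| = \Theta(\IN^2)$, not $\OUT$).

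There is, however, a genuine arithmetic gap. The $\{e_1,e_3\}$ term contributes $(\OUT/p)^{1/2} = \sqrt{\IN}$, and you assert this is $O(\IN/p)$ under the corollary's hypothesis $\IN \ge p^{3/2}$. But $\sqrt{\IN} \le \IN/p$ is equivalent to $\IN \ge p^2$; at $\IN = p^{3/2}$ one has $\sqrt{\IN} = p^{3/4} > p^{1/2} = \IN/p$, so the claim fails. The paper escapes this only by omitting the $\{e_1,e_3\}$ subset from its enumeration; once that subset is included, as it should be, the hypothesis needed for $\Li(p,\R) = O(\IN/p)$ is $\IN \ge p^2$, not $p^{3/2}$. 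The qualitative separation from the $\Omega(\IN/(\sqrt{p}\log\IN))$ lower bound still survives under the weaker hypothesis (since $\sqrt{\IN} \ll \IN/\sqrt{p}$ whenever $\IN \gg p$), but the specific conclusion $\Li(p,\R) = O(\IN/p)$ as stated does not.
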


\begin{proof}
	We use $\OUT = p\cdot \IN, \tau = \sqrt{p}$ in the lower bound construction above.  Plugging these values into (\ref{eq:5}), we obtain the claimed lower bound. On the other hand, we have $ \Li(p,\R) $ as large as
	\begin{align*}
	\max\{ {\IN \over p}, ({\tau \IN \over p})^{1/2}, ({\OUT \over p})^{1/3} \}  = \max \{ {\IN \over p}, {\IN^{1/2} \over p^{1/4}}, \IN^{1/3}\}.
	\end{align*}
	As long as $\IN \ge p^{3/2}$, the first term dominates.  
\end{proof}

% \noindent {\bf Remark 1.} On the same hard instance in proving Lemma~\ref{lem:3-line-lower}, we observe that taking $\bm{x} = \{A\}$ and $\bm{u}:\{\bm{u}(e_1) = \bm{u}(e_3) =1, \bm{u}(e_2) =0\}$, there is
% $L_{\textrm{BinHC}}(\R) \ge \frac{N}{\sqrt{p}}.
% $
% Thus for any $\IN$, $\OUT$, there exists a hard instance for $L_3$ such that BinHC algorithm computes it in a single round with a load of $\Theta\left(\frac{\IN}{\sqrt{p}}\right)$. 

%%% Local Variables: 
%%% TeX-master: "paper"
%%% End: 

 \section{Acyclic Joins}
\label{sec:full-acyclic}

In this section, we first extend the results from the previous section to arbitrary acyclic joins.  Specifically, the algorithm is a (nontrivial) generalization of the line-3 algorithm, but it is self-contained; the lower bound builds on top of the hard instance of the line-3 join.  

\subsection{Algorithm}
\label{sec:acyclic-algorithm}

As a preprocessing step, we remove all dangling tuples.  We also assume that the output size $\OUT$ has been computed \revise{(an MPC primitive)}.

Recall that in an acyclic join $\Q = (\V, \E)$, the hyperedges $\E$ can be organized into a join tree $\T$, such that for each attribute $x\in \V$, the nodes corresponding to $\E_x$ are connected in $\T$.  Given such a join tree $\T$, our algorithm recursively decomposes the join into multiple pieces, and apply a different join strategy for each.

We start from an internal node of $\T$ whose children are all leaves.  Let this node be $e_0$, which has $k$ leaf children $e_1, \cdots, e_k$ (see Figure~\ref{fig:alpha} for an example). Let $s_i = e_0 \cap e_i$ be the set of join attributes between $e_0$ and $e_i$. We will assume $s_i\ne \emptyset$; otherwise we can add a dummy attribute to both $e_0$ and $e_i$ and all tuples in $R(e_0)$ and $R(e_i)$ share the same value on this dummy attribute (e.g., we add a dummy attribute $H'$ to both $e_0$ and $e_6$ in Figure~\ref{fig:alpha}).  Note that the join tree ensures the property that if $x\in e_i \cap e_j$ for $i \neq j$, then $x \in e_0$.

\begin{figure}[h]
	\centering	
	\includegraphics[scale = 0.75]{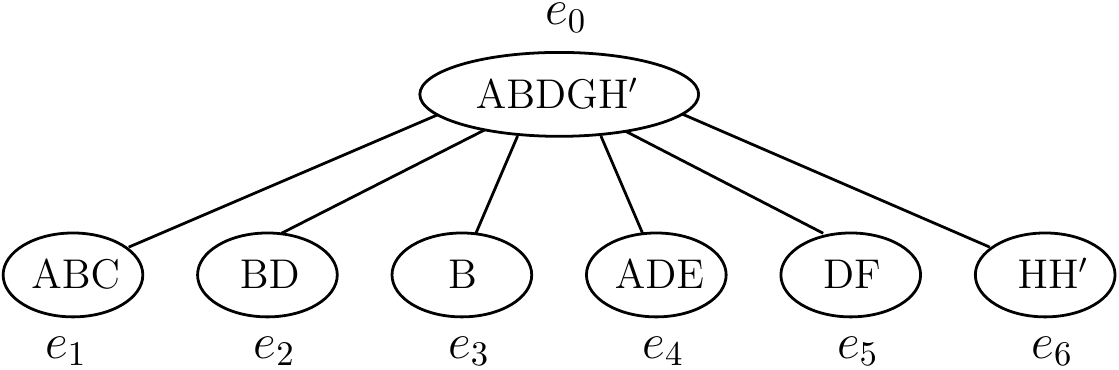}
	\caption{A node $e_0$ in the join tree $\T$ and its leaf children $e_1, e_2, e_3, e_4, e_5, e_6$.}
	\label{fig:alpha}
\end{figure}

Let $N_{\alpha} = \sum_{i=1}^k |R(e_i)|$ and $N_{\beta} = \IN - N_\alpha$.  We will actually prove a slightly tighter bound, that the load of our algorithm is bounded by $O({\IN \over p} + {\sqrt{N_\beta \cdot \OUT} \over p} + \sqrt{\OUT \over p})$. 

Set $\tau = \sqrt{\frac{\OUT}{N_{\beta}}}$. Our algorithm proceeds in three steps. 

\paragraph{Step (1): Computing data statistics} 
In each relation $R(e_i)$, $i=1,\dots,k$, let $v$ be an assignment of values for attributes $s_i$. The set of {\em heavy} assignments in $R(e_i)$ is 
\[H(s_i, e_i) = \{v \in \pi_{s_i} R(e_i): |\sigma_{s_i = v} R(e_i)| \ge \tau\}.\]
Tuples in $R(e_i)$ can also be identified as {\em heavy} or {\em light}, depending on their projection on attributes $s_i$. More precisely, a tuple $t \in R(e_i)$ is heavy if $\pi_{s_i} t \in H(s_i, e_i)$. The set of heavy tuples and light tuples in $R(e_i)$ are denoted as $R_H(e_i)$ and $R_L(e_i)$, respectively.  All the statistics can be computed in by the sum-by-key and multi-search primitives with linear load.

Let $\bar{\E} = \E - \{e_0, e_1, \cdots, e_k\}$.  We decompose the join into the following sub-joins:
\[R(e_0) \Join R_{?}(e_1) \Join \cdots \Join R_{?}(e_k) \Join \left(\Join_{e \in \bar{\E}} R(e)\right),\]
where each $?$ can be either $H$ or $L$. Note that there are $2^k$, which is a constant, sub-joins, so we can afford to use $p$ servers for each sub-join.  If a sub-join involves at least one $R_H(e_i)$, we apply the procedure in step (2) to it. In step (3), we handle the case where all $?$ are $L$.   

\paragraph{Step (2): Sub-joins with at least one $R_H(e_i)$} 
Without loss of generality, suppose $R_H(e_1)$ is in the sub-join, i.e., we need to compute the sub-join
\[R(e_0) \Join R_H(e_1) \Join R_?(e_2) \Join \cdots \Join R_{?}(e_k) \Join \left(\Join_{e \in \bar{\E}} R(e)\right),\]
where each $?$ can be either $H$ or $L$.  The algorithm consists of three steps:
\begin{enumerate}
\item[(2.1)] Compute $R'(e_0) = R(e_0) \ltimes R_H(e_1)$.
\item[(2.2)] Compute $R' = R'(e_0) \Join R_?(e_2) \Join \cdots \Join R_?(e_k) \Join \left(\Join_{e \in \bar{\E}} R(e)\right)$ by any order.
\item[(2.3)] Compute $R_H(e_1) \Join R'$.
\end{enumerate}

We analyze the load in each step: (2.1) is a primitive operation that incurs linear load. To bound the load of (2.2), observe that $|R'| \le {\OUT \over \tau}$, since each tuple in $R'$ joins with at least $\tau$ tuples in $R_H(e_1)$, each producing one final join result.  Thus, the load is bounded by $O(\frac{\IN}{p} + \frac{\OUT}{p \cdot \tau})$. The binary join in (2.3) has input size $\frac{\OUT}{\tau} + \IN$ and output size $\OUT$, incurring a load of $O(\frac{\IN}{p} + \frac{\OUT}{p \cdot \tau} + \sqrt{\frac{\OUT}{p}})$, which dominates the first two steps.  Plugging in the value of $\tau$, the total load is bounded by $O({\IN \over p} + {\sqrt{N_\beta \cdot \OUT} \over p} + \sqrt{\OUT \over p})$, as desired.

\paragraph{Step (3): The sub-join with all $R_L(e_i)$}  
It remains to compute the following sub-join:
\[R(e_0) \Join R_{L}(e_1) \Join \cdots \Join R_{L}(e_k) \Join \left(\Join_{e \in \bar{\E}} R(e)\right).\]

We further divide $R(e_0)$ into heavy and light tuples, as follows.  Let $s = s_1 \cup s_2 \cup \cdots \cup s_k$, and let $v$ be an assignment over attributes $s$.  The set of {\em heavy} assignments in $R(e_0)$ is define as \[H(s, e_0) = \left\{v \in \pi_{s} R(e_0): \prod_{i=1}^k |\sigma_{s_i = \pi_{s_i} v} R_L(e_i)| \ge \tau \right\}.\] Tuples in $R(e_0)$ are classified as {\em heavy} or {\em light}, depending on their projection on attributes $s$, i.e., a tuple $t \in R(e_0)$ is heavy if $\pi_s t \in H(s, e_0)$, and light otherwise.  Similarly, denote the heavy and light tuples in $R(e_0)$ as $R_H(e_0)$ and $R_L(e_0)$, respectively.

These statistics can also be computed using the primitives, but with some more care. For each relation $R_L(e_i)$, we first use sum-by-key to compute $|\sigma_{s_i = v_i} R_L(e_i)|$ for every $v_i \in \pi_{s_i} R_L(e_i)$. This gives us a list of $(v_i, |\sigma_{s_i = v_i} R_L(e_i)|)$ pairs. Then, we use multi-search to find, for each tuple $t \in R(e_0)$, the up to $k$ pairs $(v_i, |\sigma_{s_i = v_i} R_L(e_i)|)$ such that $\sigma_{s_i}t = v_i$.  After this step, each tuple in $R(e_0)$ is attached with $k$ values, and we multiply them together to decide if the tuple is heavy or light.

\paragraph{Step (3.1): The sub-join with $R_H(e_0)$} We first compute the following sub-join:
\[R_H(e_0) \Join R_{L}(e_1) \Join \cdots \Join R_{L}(e_k) \Join \left(\Join_{e \in \bar{\E}} R(e)\right).\]

The algorithm consists of three steps:

\begin{enumerate}
\item[(3.1.1)] Compute $R'(e_0) = R_H(e_0) \Join \left(\Join_{e \in \bar{\E}} R(e)\right)$ by any order.
\item[(3.1.2)] Compute $R'(e_i) = R_H(e_0) \Join R_L(e_i)$ for each $i =1,\cdots,k$.
\item[(3.1.3)] Compute $R'(e_0) \Join R'(e_1) \Join \cdots \Join R'(e_k)$.  Note that each of these relations contains all attributes in $e_0$, so it is a hierarchical join (it is actually tall-flat), so we can use the instance-optimal algorithm in Section~\ref{sec:hierarchical} to compute this join.
\end{enumerate}

Now we analyze the load of each step: First, observe that $|R'(e_0)| \le {\OUT \over \tau}$.  This is because the projection of each tuple in $R'(e_0)$ on $s$ is a heavy assignment, so it will produce at least $\tau$ join results after joining with the $R_L(e_i)$'s.  Therefore, the load of computing the join in (3.1.1) is $O(\frac{\IN}{p} + \frac{\OUT}{p \cdot \tau})$.  Each binary join in (3.1.2) has a load of $O (\frac{\IN}{p} + \sqrt{\frac{\OUT}{p}} )$. Note that each join result $R'(e_i)$ has size bounded by $N_{\beta} \cdot \tau$, since any tuple in $R(e_0)$ can join with at most $\tau$ tuples in $R_L(e_i)$.  Thus, the hierarchical join in (3.1.3) has input size $O(N_{\beta} \cdot \tau + \frac{\OUT}{\tau} )$ and output size $\OUT$, so the instance-optimal algorithm has load $O(\frac{N_\beta \cdot \tau}{p} +\frac{\OUT}{p \cdot \tau} + \sqrt{\frac{\OUT}{p}} )$ according to Corollary~\ref{cor:instace-output}.  All the loads are bounded by $O({\IN \over p} + {\sqrt{N_\beta \cdot \OUT} \over p} + \sqrt{\OUT \over p} )$, as desired.

\paragraph{Step (3.2): The sub-join with $R_L(e_0)$} Finally, we are left with the sub-join 
\[R_L(e_0) \Join R_{L}(e_1) \Join \cdots \Join R_{L}(e_k) \Join \left(\Join_{e \in \bar{\E}} R(e)\right).\]
This is actually the only case where we need recursion:
\begin{enumerate}
\item [(3.2.1)] Compute $R'_L(e_0) = R_L(e_0) \Join R_{L}(e_1) \Join \cdots \Join R_{L}(e_k)$ by any order.
\item[(3.2.2)] If $\bar{\E} \ne \emptyset$,  compute $R'_L(e_0) \Join \left(\Join_{e \in \bar{\E}} R(e)\right)$ recursively.
\end{enumerate}

Now we analyze the load: First, we have $|R'_L(e_0)| \le N_\beta \cdot \tau$, since the projection of each tuple in $R_L(e_0)$ on $s$ is a light assignment.  Thus, the load of step (3.2.1) is $O(\frac{\IN}{p} + \frac{N_\beta \cdot \tau}{p})$, which is also bounded by $O({\IN \over p} + {\sqrt{N_\beta \cdot \OUT} \over p} + \sqrt{\OUT \over p})$.  So far, we have completed the base case of the induction proof.  

For the join to be computed recursively in step (3.2.2), its input size \revise{is} at most $\IN + N_\beta\cdot \tau$ and output size is at most $\OUT$.  More importantly, $N_\beta$ can only become smaller, since $e_0$ becomes a leaf in the residual join and $|R(e_0)|$ is no longer included in $N_\beta$, no matter which node in the residual join is picked to be its new $e_0$.  By the induction hypothesis, computing the residual join recursively incurs a load of $O(\frac{\IN}{p} + \frac{N_\beta \cdot \tau}{p}+ \frac{\sqrt{N_\beta \cdot \OUT}}{p} + \sqrt{\frac{\OUT}{p}} )$, thus bounded by $O({\IN \over p} + {\sqrt{N_\beta \cdot \OUT} \over p} + \sqrt{\OUT \over p})$.

Note that the recursion will increase the constant in the big-Oh, but as the recursion depth depends only on the query not the data size, it does not change the asymptotic result.

This completes the induction proof that the algorithm has a load of $O({\IN \over p} + {\sqrt{N_\beta \cdot \OUT} \over p} + \sqrt{\OUT \over p})$.  Observing that $N_\beta \le \IN$ and $\sqrt{{\OUT \over p}} \le {\sqrt{\IN \cdot \OUT} \over p}$, we obtain the following result.

\begin{theorem}
	\label{the:acyclic}
There is an algorithm that computes any acyclic join with load $O(\frac{\IN}{p} + \frac{\sqrt{\IN \cdot \OUT}}{p} )$ in $O(1)$ rounds.
\end{theorem}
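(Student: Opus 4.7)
The plan is to design a recursive algorithm that peels off one bottom layer of the join tree $\T$ at each level. At a given call, I pick an internal node $e_0$ whose children $e_1, \dots, e_k$ are all leaves, set $N_\alpha = \sum_{i=1}^k |R(e_i)|$, $N_\beta = \IN - N_\alpha$, and choose the threshold $\tau = \sqrt{\OUT/N_\beta}$. I will prove by induction the slightly stronger bound $O(\IN/p + \sqrt{N_\beta \cdot \OUT}/p + \sqrt{\OUT/p})$; since $N_\beta \le \IN$ and $\sqrt{\OUT/p} \le \sqrt{\IN \cdot \OUT}/p$ when $\IN \ge p$, this implies the theorem.

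First, I classify each tuple in $R(e_i)$ as heavy or light depending on whether its projection onto $s_i = e_0 \cap e_i$ has degree above or below $\tau$, a sum-by-key plus multi-search computation with linear load. The join then splits into $2^k$ disjoint sub-joins indexed by a heavy/light choice for each $e_i$. For any sub-join containing at least one $R_H(e_i)$, I first compute $R(e_0) \ltimes R_H(e_i)$, then join with the remaining relations, and finally join back with $R_H(e_i)$; because every surviving intermediate tuple fans out into at least $\tau$ final join results, the intermediate relation has size at most $\OUT/\tau$, bounding the load by $O(\IN/p + \OUT/(p\tau) + \sqrt{\OUT/p})$. In the all-light sub-join I further partition $R(e_0)$ by the combined degree $\prod_{i=1}^k |\sigma_{s_i = \pi_{s_i} v} R_L(e_i)|$ on $s = \bigcup_i s_i$. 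For the heavy part $R_H(e_0)$, its join with $\Join_{e \in \bar{\E}} R(e)$ has size at most $\OUT/\tau$, and its separate joins with each $R_L(e_i)$ have size at most $N_\beta \cdot \tau$; these pieces then combine through a tall-flat sub-join on the attributes of $e_0$, solved by the instance-optimal algorithm of Corollary~\ref{cor:instace-output}. For the light part $R_L(e_0)$, joining with the light children produces at most $N_\beta \cdot \tau$ tuples, after which I recurse on the residual join where $e_0$ has become a new leaf.

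The hard part will be verifying that the induction invariant is preserved across the recursion. When the recursion fires, $|R(e_0)|$ drops out of $N_\beta$ because $e_0$ turns into a leaf, and the new input size $\IN + |R'_L(e_0)| \le \IN + N_\beta \tau$ is absorbed into the bound; crucially, the new $N_\beta$ can only shrink, and the output size of the residual join is bounded by $\OUT$. Plugging $\tau = \sqrt{\OUT/N_\beta}$ into each branch balances the three terms and yields the claimed inductive bound at every level. Since $k$ is a constant depending only on the query, the $2^k$ sub-joins and the $O(1)$ recursion depth inflate only the hidden constant, and allocating $p$ servers to each sub-join uses $O(p)$ servers in total after absorbing the query-dependent constant.
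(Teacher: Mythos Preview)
Your proposal is correct and follows essentially the same approach as the paper: the same choice of $e_0$, the same threshold $\tau = \sqrt{\OUT/N_\beta}$, the same $2^k$ heavy/light decomposition on the leaf children, the same secondary heavy/light split on $R(e_0)$ by the combined product degree, the tall-flat sub-join handled via Corollary~\ref{cor:instace-output}, and the same recursion on $R_L(e_0)$ with the key observation that $N_\beta$ can only shrink. The inductive invariant $O(\IN/p + \sqrt{N_\beta \cdot \OUT}/p + \sqrt{\OUT/p})$ you state is exactly the one the paper proves.
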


\subsection{Lower bound}
\label{sec:lower-acyclic}

In Section~\ref{lem:3-line-lower} we have constructed a hard instance for the line-3 join and have shown that any algorithm must incur a load of $\Omega(\min\{\frac{\sqrt{\IN \cdot \OUT}}{p \cdot \log \IN}, \frac{\IN}{\sqrt{p}}\})$ on this instance. In this section, we generalize this lower bound to an arbitrary acyclic join that is not r-hierarchical.  Note that for r-hierarchical joins, we can achieve a smaller load $O(\frac{\IN}{p} + \sqrt{\frac{\OUT}{p}})$ (see Corollary~\ref{cor:instace-output}), so this establishes a separation between r-hierarchical joins and acyclic joins. 

The basic idea in the lower bound is that any acyclic join must ``include'' a line-3 join, such that any algorithm computing the acyclic join must also compute the line-3 join.  This is more formally captured by the following structural lemma on acyclic and r-hierarchical joins.  To state the lemma, we need some terminology.  In a hypergraph $\Q=(\V, \E)$, a {\em path} between $x, y \in \V$, denoted $P(x,y)$, is a sequence of vertices starting with $x$ and ending with $y$, such that each consecutive pair of vertices appear together in an edge. The length of a path is defined as the number of vertices in $P(x,y)$ minus 1.  A path $P(x, y)$ is {\em minimal} if there is no other path $P'(x, y)$ that is a strict subsequence of $P(x, y)$. It is easy to see that $P(x_1, x_k) = (x_1, x_2, \cdots,x_k)$ is minimal if and only if there exists no edge $e \in \E$ containing $x_i$ and $x_j$ with $|j-i| > 1$. Note that a shortest path must be minimal, but not vice versa.

\begin{lemma}
	\label{lem:non-hierarchical}
	An acyclic join is not r-hierarchical if and only if it has a minimal path of length $3$.
\end{lemma}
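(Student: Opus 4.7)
The strategy is to work in the reduced hypergraph $\Q^r$, using two facts: first, $\alpha$-acyclicity is preserved under the reduce procedure (an edge contained in another can always be slid to become a leaf of the join tree), so $\Q^r$ is acyclic with some join tree $\T$; second, since reducing only deletes an edge contained in another, two attributes lie in a common edge of $\Q$ iff they lie in a common edge of $\Q^r$, so minimal paths in $\Q$ and in $\Q^r$ coincide. This reduces the statement to: a reduced acyclic hypergraph is not hierarchical iff it contains a minimal path of length $3$.

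For the easier direction, assume $(x_1, x_2, x_3, x_4)$ is a minimal path in $\Q^r$ witnessed by edges $e_{12}, e_{23}, e_{34}$ covering the consecutive pairs. Minimality---the absence of $\{x_1,x_3\}$, $\{x_1,x_4\}$, and $\{x_2,x_4\}$ from any edge---immediately forces $\E_{x_2}$ and $\E_{x_3}$ in $\Q^r$ to intersect (through $e_{23}$) while being incomparable ($e_{12} \in \E_{x_2} \setminus \E_{x_3}$ and $e_{34} \in \E_{x_3} \setminus \E_{x_2}$), so $\Q^r$ is not hierarchical.

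For the harder direction, fix attributes $x, y$ whose subtrees $T_x, T_y \subseteq \T$ (the sets of nodes of $\T$ containing $x$, resp.\ $y$) witness non-hierarchicality: $T_x \cap T_y \neq \emptyset$ and each of $T_x \setminus T_y$, $T_y \setminus T_x$ is non-empty. Connectedness of $T_x$ yields an edge of $\T$ from some $f \in T_x \cap T_y$ to $e_1 \in T_x \setminus T_y$; symmetrically pick $\tilde f \in T_x \cap T_y$ adjacent to $e_2 \in T_y \setminus T_x$. Since $\Q^r$ is reduced, both $e_1 \setminus f$ and $e_2 \setminus \tilde f$ are non-empty, so pick $x_1 \in e_1 \setminus f$ and $x_4 \in e_2 \setminus \tilde f$. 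The candidate is the path $(x_1, x, y, x_4)$: the three consecutive-pair coverings come from $e_1$, $f$, and $e_2$ respectively, and distinctness is immediate from $\{x,y\} \subseteq f$, $x_1 \notin f$, $x_4 \notin \tilde f$, $y \notin e_1$, and $x \notin e_2$.

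Establishing minimality reduces to ruling out any single edge covering one of $\{x_1, y\}$, $\{x, x_4\}$, or $\{x_1, x_4\}$. The first two follow from running intersection: if $T_{x_1} \cap T_y \neq \emptyset$, then the $\T$-path from $e_1$ into that intersection lies inside the connected subtree $T_{x_1}$, but must pass through the unique crossing edge $e_1 - f$ from $e_1$'s component of $\T \setminus T_y$ into $T_y$, forcing $x_1 \in f$ and contradicting the choice of $x_1$; the argument for $\{x, x_4\}$ is symmetric. The remaining condition $T_{x_1} \cap T_{x_4} = \emptyset$ is the main obstacle: it amounts to showing that the component of $\T \setminus T_y$ containing $e_1$ and the component of $\T \setminus T_x$ containing $e_2$ are disjoint. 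I plan to prove this by a Steiner-point argument in $\T$: any common node $g$ would lie outside both $T_x$ and $T_y$, so the Steiner point $s$ of $\{e_1, e_2, g\}$ would lie off $T_y$ (it sits on the $e_1 \to g$ path, which avoids $T_y$) and off $T_x$ (symmetrically); yet, applied independently to $\{e_1, e_2\}$ together with any node of $T_x \cap T_y$, the same calculus forces the $e_1 \to e_2$ path to cross $T_x \cap T_y$ at some node $s'$, and examining the two possible orderings of $s$ and $s'$ along that path yields a contradiction in each case.
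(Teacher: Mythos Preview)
Your proof is correct, and it takes a genuinely different route from the paper's. The paper never fixes a join tree; instead, after reducing, it works directly with edge intersections. Its $(\Rightarrow)$ direction proceeds in three constructive steps: Step~1 extracts edges $e_1,e_2,e_3$ and vertices $x_1,x_2,x_3,x_4$ with the right incidence pattern from the non-hierarchical witnesses; Step~2 is the heart of the argument and uses an auxiliary lemma (derived from GYO reduction) stating that whenever three edges pairwise intersect ``cyclically'' some single edge must cover all three pairwise intersections, together with two corollaries of it, to massage $x_1$ and $x_4$ through a four-case analysis until no edge contains $\{x_1,x_2,x_3\}$ or $\{x_2,x_3,x_4\}$; Step~3 then reads off minimality. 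Your argument, by contrast, front-loads the reduction (observing that the ``share an edge'' relation on attributes, hence the set of minimal paths, is unchanged), fixes a join tree $\T$ of $\Q^r$, and does everything via subtree convexity: the choice of $e_1,f,e_2,\tilde f$ at the boundary of $T_x\cap T_y$ plus reducedness immediately produces the candidate path, and all three minimality conditions fall out of running intersection, with the $\{x_1,x_4\}$ case handled by the median (Steiner) point of $e_1,e_2,g$ in $\T$. Your approach is shorter and more conceptual, exploiting the join tree that acyclicity hands you for free; the paper's approach is more self-contained at the hypergraph level and develops a reusable ``acyclic triangle'' lemma, at the cost of a substantially longer case analysis. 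One cosmetic point: the distinctness $x_1\neq x_4$ is not quite covered by the list you give, but it follows immediately once you have established $T_{x_1}\cap T_y=\emptyset$ (since $x_4\in e_2\in T_y$).
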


The proof is given in Appendix~\ref{sec:proof-lemma-path3}. With this lemma, we can extend the hard instance for the line-3 join to any acyclic but non-r-hierarchical join $\Q=(\V, \E)$. Let $(x_1,x_2,x_3,x_4)$ be a minimal path of length 3 in $\Q$, and suppose $\{x_1, x_2\} \subseteq e_1, \{x_2, x_3\} \subseteq e_2, \{x_3, x_4\} \subseteq e_3$.  Let $\R = \{ R_1(x_1,x_2),$ $R_2(x_2,x_3), R_3(x_3,x_4)\}$ be the hard instance for the line-3 join.  We construct the hard instance $\R' = \{R'(e) : e\in \E\}$ for $\Q$ as follows.  The domain of $x_i, i=1,2,3,4$ is the same as in $\R$.  For any other attribute $y$, set $|\dom(y)|= 1$.

Since the path is minimal, each $e\in \E$ must fall into one of the following three cases:

\begin{enumerate}
\item For any $e$ with $e \cap \{x_1, x_2, x_3, x_4\} = \emptyset$, $R'(e)$ contains only one tuple connecting the only value in the domains of attributes in $e$.
\item  If $e \cap \{x_1, x_2, x_3, x_4\} = \{x_i\}, i=1,2,3,4$, then $R'(e)$ contains $|\dom(x_i)|$ tuples, each having a distinct value of $\dom(x_i)$. 
\item  If $e \cap \{x_1, x_2, x_3, x_4\} = \{x_i, x_{i+1}\}, i=1,2,3$, then $R'(e)$ contains $|R_i(x_i, x_{i+1})|$ tuples such that $\pi_{x_i, x_{i+1}} R'(e) = R_i(x_i, x_{i+1})$.  
\end{enumerate}

It can be easily verified that $\Q(\R')$ is exactly the join results of the line-3 join on $\R$, so the same lower bound applies.  However, since the output size of the line-3 join is at most $\IN^2$, we do have a condition on $\OUT$:

\begin{theorem}
	\label{thm:acyclic-lower}
	For an acyclic but non-r-hierarchical join and any $\IN \ge p^{3/2}, \OUT \le \IN^2$, there exists an instance $\R$ with input size $\Theta(\IN)$ and output size $\Theta(\OUT)$ such that any tuple-based algorithm computing it in $O(1)$ rounds must have a load of $\Omega(\min\{\frac{\sqrt{\IN \cdot \OUT}}{p \cdot \log \IN}, \frac{\IN}{\sqrt{p}}\})$.
\end{theorem}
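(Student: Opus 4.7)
The plan is to reduce the already-established line-3 lower bound (Theorem~\ref{lem:3-line-lower}) to an arbitrary acyclic non-r-hierarchical join via the embedding described immediately before the theorem statement. First I would invoke Lemma~\ref{lem:non-hierarchical} to extract a minimal path $(x_1, x_2, x_3, x_4)$ of length $3$ in $\Q$, and fix edges $e_1, e_2, e_3 \in \E$ with $\{x_i, x_{i+1}\} \subseteq e_i$. I then take the random line-3 hard instance $\R = \{R_1(x_1,x_2), R_2(x_2,x_3), R_3(x_3,x_4)\}$ from the proof of Theorem~\ref{lem:3-line-lower} and lift it to an instance $\R' = \{R'(e) : e \in \E\}$ for $\Q$ by the three-case recipe given above: edges disjoint from $\{x_1,\ldots,x_4\}$ get a single dummy tuple, edges hitting exactly one $x_i$ replicate $\dom(x_i)$, and edges hitting two consecutive $x_i,x_{i+1}$ are the projection-preimages of $R_i$.

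Next I would verify the size parameters. All attributes other than $x_1,\ldots,x_4$ have singleton domain, so the only relations of nontrivial size are those in case~(3), contributing total size $\Theta(|R_1|+|R_2|+|R_3|) = \Theta(\IN)$; the $O(1)$ edges in cases~(1) and~(2) add only $O(1)$ tuples. For the output, I would establish a bijection $\Q(\R') \leftrightarrow \Q_{L_3}(\R)$: pick any line-3 result $(a_1,a_2,a_3,a_4)$; because each non-path attribute has a single value, every edge in case~(1) or~(2) automatically participates, and each edge in case~(3) contains the corresponding $R_i$-tuple by construction. Conversely, every full join result of $\Q$ on $\R'$ restricts to a line-3 result, because non-path attributes are forced. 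The condition $\OUT \le \IN^2$ in the theorem matches exactly the regime in which Theorem~\ref{lem:3-line-lower} is applicable, since $|\Q_{L_3}(\R)| \le \IN^2$.

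Finally, I would complete the reduction. Any tuple-based constant-round MPC algorithm $\A$ computing $\Q(\R')$ with load $L$ must, when emitting a join result $(t_e)_{e \in \E}$, have all constituent tuples of the case-(3) edges collocated on one server; projecting onto the $(x_1,x_2),(x_2,x_3),(x_3,x_4)$ attributes yields a tuple-based algorithm for the line-3 join on $\R$ with the same load (the case-(1)(2) edges contribute a constant amount of metadata, which the MPC model accounts for free). Applying Theorem~\ref{lem:3-line-lower} to this induced algorithm yields the claimed lower bound $\Omega(\min\{\sqrt{\IN\cdot\OUT}/(p\log \IN),\, \IN/\sqrt{p}\})$, under the hypothesis $\IN \ge p^{3/2}$ inherited from Corollary~\ref{cor:instance-impossible} / the line-3 construction.

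The main obstacle is the verification that the output bijection is exact, and this is precisely where minimality of the length-$3$ path is essential: if some $e \in \E$ contained two non-consecutive vertices from $\{x_1,\ldots,x_4\}$, say $x_1$ and $x_3$, then $R'(e)$ would impose an extra correlation between the $x_1$- and $x_3$-values not present in the line-3 join, potentially collapsing $|\Q(\R')|$ far below $\OUT$ and breaking the counting step of Theorem~\ref{lem:3-line-lower}. Lemma~\ref{lem:non-hierarchical}'s minimal path hypothesis forbids exactly such edges, so each $x_i$ that co-occurs with another $x_j$ in any edge of $\Q$ does so only when $|i-j|=1$, which is what makes the embedding preserve the join semantics and lets the line-3 counting lower bound transfer verbatim.
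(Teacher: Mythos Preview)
Your proposal follows exactly the paper's argument: invoke Lemma~\ref{lem:non-hierarchical} to find a minimal length-$3$ path, embed the line-3 hard instance into $\Q$ via the three-case construction, check that $\Q(\R')$ is in bijection with the line-3 output (using minimality to rule out edges containing non-consecutive $x_i$'s), and inherit the lower bound from Theorem~\ref{lem:3-line-lower}. One small slip: case-(2) edges do \emph{not} contribute only $O(1)$ tuples---an edge containing only $x_1$ or $x_4$ has $|\dom(x_i)| = \Theta(\IN)$ tuples---but since the query has $O(1)$ edges the total input size is still $\Theta(\IN)$, so your conclusion stands.
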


Similar to the line-3 join, this lower bound shows that our acyclic join algorithm is output-optimal (up to a logarithmic factor) when $\OUT \le p\cdot \IN$.

Furthermore, the same argument for Corollary~\ref{cor:instance-impossible} can be used here to show that instance-optimal algorithms do not exist for any acyclic but non-r-hierarchical join.

\begin{corollary}
  \label{cor:instance-impossible2}
  For any $\IN \ge p^{3/2}$, there is an instance $\R$ with input size $\Theta(\IN)$ for any acyclic but non-r-hierarchical join, such that any tuple-based algorithm that computes the join in $O(1)$ rounds must have a load of $\Omega({\IN \over p^{1/2} \log\IN})$, while $\Li(p, \R)=O({\IN \over p})$.
\end{corollary}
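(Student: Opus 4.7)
The plan is to mirror the proof of Corollary~\ref{cor:instance-impossible} for the line-3 join, but using Theorem~\ref{thm:acyclic-lower} in place of Theorem~\ref{lem:3-line-lower} so that the line-3 hard instance is lifted to a hard instance for an arbitrary acyclic non-r-hierarchical join. The overall strategy is to exhibit a single instance $\R$ that simultaneously forces a high actual load and admits a low per-instance lower bound $\Li(p,\R)$, thereby ruling out instance-optimality.

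Step~1: apply Theorem~\ref{thm:acyclic-lower} with the parameter choice $\OUT = p\cdot \IN$. Since $\IN\ge p^{3/2}\ge p$, the hypothesis $\OUT\le \IN^2$ of the theorem is satisfied. Substituting $\OUT=p\cdot\IN$ into $\min\{\sqrt{\IN\cdot\OUT}/(p\log\IN),\,\IN/\sqrt{p}\}$ makes both arguments of the min evaluate to $\Theta(\IN/(p^{1/2}\log\IN))$, yielding the desired load lower bound on the constructed instance $\R$.

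Step~2: bound $\Li(p,\R)$ on this same instance. Recall that the construction in the proof of Theorem~\ref{thm:acyclic-lower} embeds the line-3 hard instance on a minimal length-3 path $(x_1,x_2,x_3,x_4)$ while assigning every attribute outside the path a singleton domain. This structural simplicity means that, for every $S\subseteq \E$, the quantities $|\Q(\R,S)|$ and $|S|$ are controlled by the restriction of $S$ to the path edges $e_1,e_2,e_3$: edges that meet the path in zero or one attributes either contribute a singleton factor to the join or introduce only a modest multiplicative effect dominated by a corresponding line-3 subset, while edges that meet the path in two consecutive $x_i,x_{i+1}$ are already captured by the underlying line-3 calculation. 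Consequently $\Li(p,\R)$ is bounded (up to a constant) by the per-instance lower bound of the underlying line-3 instance, and the bound $O(\IN/p)$ established in Corollary~\ref{cor:instance-impossible} transfers directly.

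The main obstacle is executing the reduction in Step~2 cleanly. Specifically, one has to case-split on how an edge $e\in\E$ can interact with $\{x_1,x_2,x_3,x_4\}$ (using the three cases enumerated in the proof of Theorem~\ref{thm:acyclic-lower}) and verify that augmenting any candidate $S$ with such an edge never makes $(|\Q(\R,S)|/p)^{1/|S|}$ larger than the value obtained from the corresponding subset of $\{e_1,e_2,e_3\}$: adding a singleton-valued edge leaves $|\Q(\R,S)|$ unchanged while enlarging $|S|$, and the only nontrivial contributions are already bounded by their line-3 counterparts. Once this structural reduction is in place, the separation between $\Li(p,\R)=O(\IN/p)$ and the actual load $\Omega(\IN/(p^{1/2}\log\IN))$ is immediate, establishing the non-existence of instance-optimal tuple-based algorithms for every acyclic but non-r-hierarchical join.
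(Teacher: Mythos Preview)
Your proposal is correct and takes essentially the same approach as the paper: choose $\OUT=p\cdot\IN$ in the hard instance of Theorem~\ref{thm:acyclic-lower} to obtain the $\Omega(\IN/(p^{1/2}\log\IN))$ load lower bound, and then verify that $\Li(p,\R')=O(\IN/p)$ on that same instance by exploiting that all non-path attributes have singleton domains. The paper simply says ``the same argument for Corollary~\ref{cor:instance-impossible} can be used here,'' so your Step~2 actually supplies more detail than the paper does.

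One comment on execution: your Step~2 reduction (``augmenting any candidate $S$ with such an edge never makes $(|\Q(\R',S)|/p)^{1/|S|}$ larger'') is more roundabout than needed. The direct route, paralleling the explicit calculation in Corollary~\ref{cor:instance-impossible}, is cleaner: for any $S\subseteq\E$, since every attribute outside $\{x_1,x_2,x_3,x_4\}$ has a singleton domain, $|\Q(\R',S)|$ equals $|\pi_{X_S}\Q(\R)|$ where $X_S=(\bigcup_{e\in S}e)\cap\{x_1,\ldots,x_4\}$ and $\Q(\R)$ is the line-3 output. Hence $|\Q(\R',S)|\le |R'(e)|\le\IN$ when $|S|=1$, and $|\Q(\R',S)|\le\OUT=p\cdot\IN$ when $|S|\ge 2$, giving exactly the terms $\IN/p$, $(\OUT/p)^{1/2}$, $(\OUT/p)^{1/3},\ldots$ that appear in the line-3 calculation. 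This avoids the need to argue about ``augmenting'' subsets edge by edge; path minimality is only used to ensure $|R'(e)|\le\IN$ for every single edge (Case~3 of the construction).
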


%%% Local Variables: 
%%% TeX-master: "paper"
%%% End: 

 \section{Join-Aggregate Queries}
\label{sec:join-aggregate}

We consider join-aggregate queries over {\em annotated relations} \cite{green07:_proven,joglekar16:_ajar}.  Let $(\mathbb{R}, \oplus, \otimes)$ be a commutative semiring.  Every tuple $t$ is associated with an {\em annotation} $w(t) \in \mathbb{R}$.  Let $\Q=(\V, \E)$ be a join hypergraph.  The annotation of a join result $t\in \Q(\R)$ is $w(t) := \otimes_{t_e \in R(e), \pi_e t = t_e, e\in \E} w(t_e)$. Let $\bm{y} \subseteq \V$ be a set of {\em output attributes} and $\bm{\bar{y}} = \V - \bm{y}$ the non-output attributes.  A {\em join-aggregate query} $\Q_{\bm{y}}(\R)$ asks us to compute 
\[\oplus_{\bm{\bar{y}}} \Q(\R) = \oplus_{\bm{\bar{y}}} \Q(\R) =  \left\{ (t_{\bm{y}}, w(t_{\bm{y}})) : t_{\bm{y}} \in \pi_{\bm{y}} \Q(\R), w(t_{\bm{y}}) = \oplus_{t \in \Q(\R): \pi_{\bm{y}} t = t_{\bm{y}}} w(t)\right\}.\] In plain language, a join-aggregate query first computes the join $\Q(\R)$ and the annotation of each join result, which is the $\otimes$-aggregate of the tuples comprising the join result.  Then it partitions $\Q(\R)$ into groups by their projection on $\bm{y}$.  Finally, for each group, it computes the $\oplus$-aggregate of the annotations of the join results. 

Many queries can be formulated as special join-aggregate queries.  For example, if we take $\mathbb{R}$ to be the domain of integers, $\oplus$ to be addition, $\otimes$ to be multiplication, and set $w(t)=1$ for all $t$, then it becomes the {\tt COUNT(*) GROUP BY} $\bm{y}$ query; in particular, if $\bm{y}=\emptyset$, the query computes $|\Q(\R)|$.

The join-project query $\pi_{\bm{y}}\Q(\R)$, also known as a {\em conjunctive query}, is a special join-aggregate query, and we extend the terminology from \cite{bagan2007acyclic} to join-aggregate queries.  A {\em width-1 GHD} of a hypergraph $\Q = (\V,\E)$ is a tree $\T$, where each node $u \in \T$ is a subset of $\V$, such that
\begin{enumerate}
\item (coherence) for each attribute $x \in \V$, the nodes containing $x$ are connected in $\T$; 
\item (edge coverage) for each hyperedge $e \in \E$, there exists a node $u \in \T$ such that $e \subseteq u$; and
\item (width-1) for each node $u \in \T$, there exists a hyperedge $e \in \E$ such that $u \subseteq e$.
\end{enumerate}
Given a set of output attributes $\bm{y}$ (a.k.a. {\em free variables}), we say that $\T$ is {\em free-connex} if there is a subset of connected nodes of $\T$ including its root, denoted as $\T'$ (such a $\T'$ is said to be a {\em connex} subset), such that $\bm{y} = \bigcup_{u \in \T'} u$.  A join-aggregate query $\Q_{\bm{y}}(\R)$ is {\em free-connex} if it has a free-connex width-1 GHD.

As preprocessing, we remove the dangling tuples and then apply the reduce procedure repeatedly to remove an $e \in \E$ if there is another $e' \in \E$ such that $e \subset e'$.  Note that while dangling tuples can be just discarded, we cannot simply discard $R(e)$ in the reduce procedure.  To ensure that the annotations will be computed correctly, we should replace $R(e')$ with $R(e) \Join R(e')$ and then discard $R(e)$.  Note that by the earlier definition, the annotation of a join result is the $\otimes$-aggregate of the annotations of tuples comprising the join result, so the annotation in $R(e)$ are aggregated into those in $R(e')$.

We find a free-connex width-1 GHD $\T$ of $\Q$ \cite{bagan2007acyclic,bagan2009algorithmes}.  Note that the nodes of $\T$ also define a hypergraph, and can be regarded as another join-aggregate query, but with the property that it has a free-connex subset $\T'$ such that $\bm{y} = \bigcup_{u \in \T'} u$.  We construct an instance $\R_{\T}= \{ R(u) : u \in \T\}$ such that $\Q_{\bm{y}}(\R) = \T(\R_\T)$, where $\T(\R_\T)$ denotes the result of running the query defined by $\pi_{\bm{y}} \T$ on $\R_\T$.  Observe that on a reduced $\Q$, the condition $e \subseteq u$ in property (2) of a width-1 GHD can be replaced by $e = u$, since if $e \subset u$ and $u \subseteq e'$ for some other $e' \in \E$ due to property (3), we would find $e \subset e'$.  This implies that $\T$ has only two types of nodes: (1) all hyperedges in $\E$, and (2) nodes that are a proper subset of some $e\in \E$.  Then we construct $\R_\T$ as follows.  For each $u \in \T$ of type (1), we set $R(u) := R(e)$ where $e=u$; for each $u \in \T$ of type (2), we set $R(u) := R(e)$ for any $e\in \E, u \subset e$, but the annotations of all tuples in $R(u)$ are set to $1$ (the $\otimes$-identity).  Below, we will focus on computing $\T(\R_\T)$.

Joglekar et al. \cite{joglekar16:_ajar} modified the Yannakakis algorithm into {\sc AggroYannakakis}, and showed that it has load $O(\frac{\IN}{p} + \frac{\OUT}{p})$ on any free-connex join-aggregate query\footnote{The bound stated in \cite{joglekar16:_ajar} is actually $O({(\IN + \OUT)^2 \over p})$, because they used a sub-optimal binary join algorithm as the subroutine following \cite{afrati2014gym}.  Replacing it with the optimal binary join algorithm in \cite{beame14:_skew,hu17:_output} yields the claimed bound.  In addition, they only considered {\em simple} join-aggregate queries, which are a strict subclass of free-connex queries. But after our conversion from $\Q_{\bm{y}}(\R)$ to $\T(\R_\T)$, their algorithm actually works for all free-connex queries.}.  Since we want to avoid the sub-optimal $O ({\OUT \over p})$ term, we modify their algorithm into {\sc LinearAggroYannakakis}, which runs with linear load.  It aggregates over all the non-output attributes, returning a modified query $\T'(\R_{\T'})$ that only has the output attributes.  The guarantees of {\sc LinearAggroYannakakis} is stated in the following lemma.%%The details of {\sc LinearAggroYannakakis}, as well as its guarantees stated in the following lemma, are given in Appendix~\ref{appendix:join-aggregate}.

\begin{algorithm}[t]
	\caption{{\sc LinearAggroYannakakis}$(\T, \T', \R_{\T})$}
	\label{alg:aggregate}
	$\R_{\T'} = \emptyset$; \\
	\For{$u \in \T$ in some bottom-up ordering}{
		\uIf {$u \in \T'$} {
			add $R(u)$ to $\R_{\T'}$;\\ 
		}
		\Else{
			$\bm{\bar{y}}' \gets \bm{\bar{y}} \cap \{x \in \V: TOP_{\T}(x) = u\}$ \;
			$R(u) \gets \oplus_{\bm{\bar{y}}'} R(u)$ (sum-by-key)\\      
			$u' \gets$ parent of $u$ in $\T$ \;
			$R(u') \gets R(u') \Join R(u)$; (multi-search)\\
		}
	}
\end{algorithm}

\begin{lemma}
  \label{lem:linearaggro}
  {\sc LinearAggroYannakakis} is a constant-round, linear-load algorithm that, given any free-connex width-1 GHD $\T$ and an instance $\R_\T$, returns an instance $\R_{\T'}$ such that $\T(\R_\T) = \T'(\R_{\T'})$, where $\T'$ is the free-connex subset of $\T$.
\end{lemma}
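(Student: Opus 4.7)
The plan is to prove correctness by induction on the bottom-up processing order, then analyze rounds and load separately.

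For correctness, I start from the structural observation that, because $\T'$ is a connex subset of $\T$ containing the root, the set $\T \setminus \T'$ is a disjoint union of subtrees hanging off of $\T'$. I then argue by induction on the bottom-up order that after the loop processes a node $u \notin \T'$, the annotation on each tuple of the updated $R(u')$ correctly encodes the semiring aggregation over the entire subtree rooted at $u$, marginalized over every non-output attribute whose $TOP_\T$ lies in that subtree. The inductive step rests on two points: (a) the attributes summed out at $u$, namely $\{x \in \bm{\bar{y}} : TOP_\T(x) = u\}$, cannot appear in any node processed later, which is immediate from the definition of $TOP_\T$ and the bottom-up order; and (b) the step $R(u') \gets R(u') \Join R(u)$ folds the aggregated $R(u)$ into the parent correctly via semiring distributivity.

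The key structural claim supporting (b) is that every attribute remaining in the aggregated $R(u)$ already lies in $u'$. For any such remaining attribute $x \in u$: if $x \in \bm{y}$, then some node of $\T'$ contains $x$, and coherence places $x$ on the entire path in $\T$ from $u$ up to that node, in particular $x \in u'$; if $x \in \bm{\bar{y}}$ but survived aggregation at $u$, then $TOP_\T(x)$ is strictly above $u$, and coherence again forces $x \in u'$. Since sum-by-key leaves at most one tuple per key in $R(u)$, each tuple of $R(u')$ joins with at most one tuple of $R(u)$, so the join merely updates annotations and does not increase $|R(u')|$. Once all non-$\T'$ nodes have been processed, the collected relations form $\R_{\T'}$, and $\T'(\R_{\T'}) = \T(\R_\T)$ follows from the accumulated induction.

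For rounds and load, each iteration uses only sum-by-key, multi-search, and a standard semijoin, each costing $O(1)$ rounds with load $O(\IN/p)$ by Section~\ref{sec:primitive}. The non-growth argument above gives $|R(u)| \le |R(e)|$ throughout, so the total working set stays $O(\IN)$. Because $\T$ has constantly many nodes under data complexity, the overall algorithm runs in $O(1)$ rounds with load $O(\IN/p)$. The main obstacle is the structural claim identifying precisely which attributes survive aggregation at $u$ and showing that all of them lie in $u'$; this is where the coherence of the width-1 GHD and the free-connex definition of $\T'$ must be used simultaneously, and it is the linchpin for both correctness and the linear-size invariant.
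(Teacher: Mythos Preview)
Your proposal is correct and follows essentially the same approach as the paper: both hinge on the structural claim that after aggregating at $u \notin \T'$ the surviving attributes of $R(u)$ all lie in $u'$ (proved by the same two-case coherence argument for $x \in \bm{y}$ versus $x \in \bm{\bar{y}}$ with $TOP_\T(x)$ above $u$), which simultaneously gives correctness of the fold and the non-growth invariant needed for linear load. The only difference is cosmetic: the paper cites Joglekar et al.\ for the per-leaf elimination step and applies it inductively up to $\T'$, whereas you spell out the same induction directly; your version is slightly more self-contained but not a different route.
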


\begin{proof}
	Let $\T$ be a width-1 free-connex GHD and $\T'$ be the connex subset of $\T$ such that $\bm{y} = \bigcup_{u \in \T'} u$.  For an attribute $x$, denote the highest node in $\T$ containing $x$ as $TOP_{\T}(x)$.  Below, we describe {\sc LinearAggroYannkakakis}, an algorithm that converts $\R_\T$ into $\R_{\T'}$ such that $\T(\R_\T) = \T'(\R_{\T'})$.
	
	The {\sc LinearAggroYannkakakis} algorithm %%is outlined in Algorithm~\ref{alg:aggregate}.  The algorithm
	visits each node $u \in \T$ in a bottom-up fashion over $\T$.  If $u\in \T'$, i.e., all its attributes are output attributes, we add $R(u)$ to $\R_{\T'}$ (line 4).  Otherwise, we aggregate over $\bm{\bar{y}'}$, which are the non-output attributes in $u$ that do not appear in the ancestors of $u$ (line 6--7).  This is a sum-by-key problem.  Note that after the aggregation, the attributes of $R(u)$ are $u - \bm{\bar{y}}'$.  Let $u'$ be the parent of $u$ in $\T$. Note that $u'$ always exists since the root of $\T$ must be in $\T'$. Then we replace $R(u')$ by $R(u') \Join R(u)$ (line 9).  Below we show how this join can be done in linear load.  Consider any non-output attribute $x \in u-\bm{\bar{y}}' - \bm{y}$. Since $TOP_{\T}(x)$ is an ancestor of $u$, we have $x \in u'$.  Consider any output attribute $y \in u \cap \bm{y}$. In the connex subset $\T'$, there exists $u'' \in \T$ such that $y \in u''$. Each node on the path from $u''$ to $u$ must contain attribute $y$, including $u'$. Thus, we must have $u-\bm{\bar{y}}' \subseteq u'$. This means that tuples in $R(u') \Join R(u)$ are actually the same as those in $R(u')$, except that we update the annotation of each $t\in R(u')$ as $w(t) \gets w(t) \otimes w(t')$, where $t' \in R(u), t' = \pi_{u-\bm{\bar{y}}'}t$.  Thus, this can be done by the multi-search primitive in linear load. Because this algorithm never increases the size of any relation, the two primitive operations (line 7 and 9) incur linear load throughout the bottom-up traversal of $\T$.
	
	It should be obvious from the algorithm description above that  {\sc LinearAggroYannkakakis} incurs linear load, but we still need to argue for its correctness. Note that $\R_{\T'}$ has only output attributes. It suffices to show that $\T(\R_\T) = \T'(\R_{\T'})$.
	
	Joglekar et al. \cite{joglekar16:_ajar} have shown that for any leaf $u \in \T$ and its parent $u'$, performing the operation in lines 6--9 and then discarding $R(u)$ does not change the query results.  {\sc AggroYannkakakis} performs this operation over all the relations of $\T$ in a bottom-up fashion, and applying this fact inductively means that the root relation becomes the final query result in the end, but this incurs load $O({\IN \over p} + {\OUT \over p})$. {\sc LinearAggroYannkakakis} performs this operation on a subset of relations, and stops as soon as it sees a node in $\T'$. Then applying the result of \cite{joglekar16:_ajar} inductively up until $\T'$ proves our claim.
\end{proof}

Because $\T'$ is acyclic, we can run our output-optimal algorithm to compute $\T'(\R_{\T'})$. More precisely, when the algorithm emits a join result, we compute the $\otimes$-aggregate of the tuples comprising the join result.  Note that in the following result,  $\OUT= |\Q_{\bm{y}}(\R)|$, i.e., the size of the final output, which can be much smaller than $|\Q(\R)|$.

\begin{theorem}
	\label{the:join-aggregate}
	There is an algorithm that computes any free-connex join-aggregate query in $O(1)$ rounds with load $O(\frac{\IN}{p} + \frac{\sqrt{\IN \cdot \OUT}}{p})$.
\end{theorem}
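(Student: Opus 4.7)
The plan is to reduce the free-connex join-aggregate query to an acyclic join on a smaller instance, and then invoke Theorem~\ref{the:acyclic}. The pipeline has three phases: a linear-load preprocessing phase that produces a free-connex width-1 GHD $\T$ and the associated instance $\R_{\T}$; a call to \textsc{LinearAggroYannakakis} from Lemma~\ref{lem:linearaggro}; and finally the output-optimal acyclic-join algorithm run on the reduced instance, with the $\otimes$-aggregation of annotations performed at the moment each join result is emitted.

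First, I would remove all dangling tuples by $O(1)$ semi-joins (linear load), and perform the reduce procedure while folding the annotations of each discarded $R(e)$ into $R(e')$ via a multi-search (linear load). Then, as described before the theorem, I find a free-connex width-1 GHD $\T$ of $\Q$ and build $\R_{\T}$: each node of type (1) inherits its relation and annotations from $\Q$, and each node of type (2) inherits tuples from some containing $e$ with annotations reset to the $\otimes$-identity. A routine check shows $\Q_{\bm{y}}(\R) = \T(\R_{\T})$, and the total size of $\R_{\T}$ is $O(\IN)$ since $|\T|$ is a constant.

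Next, I apply \textsc{LinearAggroYannakakis} to obtain $\R_{\T'}$ over the free-connex subset $\T'$. By Lemma~\ref{lem:linearaggro}, this step takes $O(1)$ rounds and linear load, and $\T(\R_{\T}) = \T'(\R_{\T'})$. Crucially, because \textsc{LinearAggroYannakakis} only applies sum-by-key and multi-search operations that never increase the sizes of relations, the input size of $\R_{\T'}$ is $O(\IN)$. Moreover, $\T'$ is a connected subset of the tree $\T$ whose nodes serve as hyperedges; viewing $\T'$ itself as its own join tree shows that $\T'$ is an acyclic join, and all its attributes lie in $\bm{y}$, so $|\T'(\R_{\T'})| = |\Q_{\bm{y}}(\R)| = \OUT$.

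Finally, I run the acyclic-join algorithm from Theorem~\ref{the:acyclic} on $\T'(\R_{\T'})$. Whenever the algorithm would emit a join result, the participating tuples (with their annotations) are all present on the same server, so the server can compute the $\otimes$-product on the fly and route it as a (key, value) pair keyed by the projection onto $\bm{y}$; a final sum-by-key under $\oplus$ produces the annotated output in linear load. The load of this last phase is $O(\IN/p + \sqrt{\IN \cdot \OUT}/p)$ with $\IN$ and $\OUT$ measured for the reduced instance, both of which are dominated by the original $\IN$ and $\OUT$. The expected obstacle is a bookkeeping one: confirming that annotations propagate correctly through the reduce step and through \textsc{LinearAggroYannakakis}, and that the emit-time aggregation respects the semiring structure so the per-group $\oplus$-sums are computed without inflating the load beyond $O(\OUT/p)$; this is easily absorbed into the sum-by-key primitive.
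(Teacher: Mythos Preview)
Your proposal is correct and follows essentially the same pipeline as the paper: preprocess and reduce with annotation folding, build the width-1 free-connex GHD instance $\R_\T$, apply \textsc{LinearAggroYannakakis} (Lemma~\ref{lem:linearaggro}) to obtain $\R_{\T'}$, and then invoke the acyclic-join algorithm of Theorem~\ref{the:acyclic} on $\T'$, computing the $\otimes$-product at emit time. One small point: the final ``sum-by-key under $\oplus$'' you add is unnecessary and, taken literally, would cost $O(\OUT/p)$ load, which can exceed the target bound when $\OUT>\IN$; because $\T'$ has exactly the attributes $\bm{y}$, each join result of $\T'(\R_{\T'})$ is already a distinct output tuple with the correct annotation, so no further $\oplus$-aggregation is needed after the emit.
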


Observing that the join size of a (non-aggregate) join is a special join-aggregate query with $\bm{y}=\emptyset$, we obtain the following result, which has been used as a primitive. Note that there is no circular dependency here, because it only uses {\sc LinearAggroYannakakis}.

\begin{corollary}
	For any acyclic join $\Q$ and any instance $\R$, $|\Q(\R)|$ can be computed in $O(1)$ rounds with load $O (\frac{\IN}{p})$.
\end{corollary}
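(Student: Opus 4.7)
\medskip
\noindent\textbf{Proof proposal.} The plan is to realize $|\Q(\R)|$ as a degenerate instance of the free-connex join-aggregate framework of this section and then invoke only the linear-load subroutine \textsc{LinearAggroYannakakis}. Specifically, take the semiring $(\mathbb{Z}, +, \times)$, assign annotation $w(t)=1$ to every input tuple, and set the output attributes to $\bm{y}=\emptyset$. Under these choices, the annotation of any join result $t\in \Q(\R)$ is the product of $1$'s, hence $1$, and the $\oplus$-aggregate over the non-output attributes $\bm{\bar{y}} = \V$ collapses $\Q(\R)$ into a single annotated ``tuple'' whose annotation equals $|\Q(\R)|$. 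So computing $|\Q(\R)|$ is exactly the free-connex join-aggregate query $\Q_{\emptyset}(\R)$.

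Next, I would run the same preprocessing as in Section~\ref{sec:join-aggregate}: remove dangling tuples (constantly many semijoins, linear load) and apply the reduce procedure, folding each discarded $R(e)$ into an $R(e')\supseteq R(e)$ via a multi-search so that the annotations are preserved. Then construct a width-1 GHD $\T$ of the reduced $\Q$ using the standard algorithm of \cite{bagan2007acyclic,bagan2009algorithmes}; since $\Q$ is acyclic, such a $\T$ exists. Because $\bm{y}=\emptyset$, every free-connex subset $\T'$ covering $\bm{y}$ must consist solely of empty nodes, so I simply attach an empty root on top of $\T$ and declare this root to be $\T'$. This trivially satisfies coherence, edge coverage and width-$1$, and makes the query free-connex.

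Now apply \textsc{LinearAggroYannakakis} to $(\T,\T',\R_\T)$. By Lemma~\ref{lem:linearaggro}, the procedure runs in $O(1)$ rounds with load $O(\IN/p)$ and returns an instance $\R_{\T'}$ on the free-connex subset with $\T(\R_\T)=\T'(\R_{\T'})$. Because $\T'$ contains only the single empty root, the relation $R_{\T'}$ is just one scalar: the fully $\otimes/\oplus$-aggregated annotation accumulated at the root. That scalar is precisely $|\Q(\R)|$.

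There is essentially no obstacle here beyond setting up the degenerate case carefully; the one thing to notice is that this argument avoids circularity. Theorem~\ref{the:join-aggregate} itself invokes the output-optimal acyclic join algorithm from Section~\ref{sec:acyclic-algorithm}, which in turn uses ``compute $\OUT$'' as an MPC primitive. But for the corollary we never run that output-optimal step: because $\bm{y}=\emptyset$ makes $\T'$ trivial, \textsc{LinearAggroYannakakis} by itself suffices, and it has been established independently in Lemma~\ref{lem:linearaggro}. Hence the linear-load bound $O(\IN/p)$ is obtained in $O(1)$ rounds without appealing to any result that depends on this corollary.
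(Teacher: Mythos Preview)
Your proposal is correct and takes essentially the same approach as the paper: cast $|\Q(\R)|$ as the join-aggregate query with $\bm{y}=\emptyset$ over the counting semiring, then observe that \textsc{LinearAggroYannakakis} alone suffices so there is no circularity. The paper's own proof is just those two observations stated tersely; you have simply spelled out the details (the semiring choice, attaching an empty root so that $\T'$ is trivially free-connex, and the explicit circularity argument) that the paper leaves implicit.
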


Furthermore, if $\T'$ is r-hierarchical, we run our instance-optimal algorithm to compute $\T'(\R_{\T'})$.  In fact, we can precisely characterize the class of queries with an r-hierarchical $\T'$. A query is called {\em out-hierarchical} if it is free-connex and its residual query by removing all non-output attributes is r-hierarchical.  %%In Appendix~\ref{appendix:join-aggregate} we prove the following result:

\begin{lemma}
  \label{lem:out-hierarchical}
  A join-aggregate query $\Q_{\bm{y}}$ is out-hierarchical if and only if it has a width-1 GHD $\T$ with a connex subset $\T'$ such that $\bm{y} = \bigcup_{u \in \T'} u$ and $\T'$ is r-hierarchical.
\end{lemma}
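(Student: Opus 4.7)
The plan is to prove a single structural claim that gives both directions at once: for any width-1 GHD $\T$ of $\Q$ with a connex subset $\T'$ satisfying $\bm{y} = \bigcup_{u \in \T'} u$, the reduction of the hypergraph $H_1 = \{e \cap \bm{y} : e \in \E\}$ (the residual of $\Q$ on $\bm{y}$) coincides with the reduction of the hypergraph $H_2 = \{u : u \in \T'\}$ (the connex subset viewed as a hypergraph over $\bm{y}$). Since being r-hierarchical depends only on the reduced hypergraph, this claim says $\T'$ is r-hierarchical if and only if the residual is. This immediately yields ($\Leftarrow$) by the definition of out-hierarchical, and for ($\Rightarrow$) one simply applies the claim to any free-connex width-1 GHD guaranteed by the assumption that $\Q_{\bm{y}}$ is free-connex.

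The claim reduces to showing that $H_1$ and $H_2$ have the same maximal elements. One containment is routine: every $u \in H_2$ lies in some $e \cap \bm{y} \in H_1$, because width-1 gives $u \subseteq e$ for some $e \in \E$, and $u \subseteq \bm{y}$ then forces $u \subseteq e \cap \bm{y}$. The other direction is the delicate part: for every $e \in \E$ I must exhibit a single $v \in \T'$ with $e \cap \bm{y} \subseteq v$. Pick any node $u_e \in \T$ with $e \subseteq u_e$ (using edge coverage). If $u_e \in \T'$, then $u_e \subseteq \bm{y}$ and $e \cap \bm{y} \subseteq u_e$, so take $v = u_e$. Otherwise, since $\T'$ is a connected subtree containing the root of $\T$, the component of $\T \setminus \T'$ containing $u_e$ must be attached to $\T'$ by exactly one edge in $\T$ (two such edges would create a cycle via $\T'$'s connectivity), so it has a unique entry node $v \in \T'$ through which every path from $u_e$ into $\T'$ must pass. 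For any $y \in e \cap \bm{y}$, the identity $\bm{y} = \bigcup_{u \in \T'} u$ supplies some $u_y \in \T'$ with $y \in u_y$, and coherence forces $y$ to lie in every node of the $u_e \to u_y$ path in $\T$; in particular $y \in v$. Thus $e \cap \bm{y} \subseteq v$.

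Given both containments, a short back-and-forth argument shows that the maximal elements of $H_1$ and $H_2$ coincide: a maximal $u \in H_2$ sits inside some $w \in H_1$ which sits inside some $u' \in H_2$, forcing $u = u' = w$, whence $u \in H_1$ and is maximal there; the reverse direction is symmetric. The two reductions are therefore identical as hypergraphs, and the lemma follows. The main obstacle I expect is the coherence/path argument for the harder containment --- in particular, justifying the unique entry node of the component of $\T \setminus \T'$ containing $u_e$ and then using coherence to trap every attribute of $e \cap \bm{y}$ inside it --- while the rest is elementary reasoning about maximal subsets.
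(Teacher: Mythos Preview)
Your proposal is correct and takes essentially the same approach as the paper: both show that $\{e \cap \bm{y} : e \in \E\}$ and $\{u : u \in \T'\}$ have the same reduction by proving mutual containment of their elements, with the coherence/entry-node argument that $e \cap \bm{y} \subseteq v$ being the key step in each. Your maximal-element formulation is slightly cleaner than the paper's ``out-irreducible'' terminology (and does not rely on $\Q$ having been pre-reduced so that every edge equals some GHD node), but the underlying argument is the same.
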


\begin{proof}
	First we have known that join-aggregate query $\Q_{\bm{y}}$ is free-connex iff it has a width-1 GHD $\T$ with a connex subset $\T'$ such that $\bm{y} = \bigcup_{u \in \T'} u$. Consider $\Q_{out} = (\bm{y}, \{e \cap \bm{y}: e \in \E\})$ the residual query of $\Q_{\bm{y}}$ after removing all non-output attributes. Then it suffices to show that for a free-connex query $\Q_{\bm{y}}$, $\Q_{out}$ is r-hierarchical iff $\T'$ is r-hierarchical. 
	
	An edge $e \in \E$ is {\em out-irreducible} if there exists no $e' \in \E$ such that $e \cap \bm{y} \subset e' \cap \bm{y}$ or $e \subset e'$; otherwise {\em out-reducible}. We first claim that for each out-irreducible $e \in \E$ there exists one node $v \in \T'$ such that $e \cap \bm{y} \subseteq v$. Consider the node $u \in \T$ such that $u = e$. If $u \in \T'$, the claim holds trivially. Otherwise, consider the lowest ancestor of $u$ in $\T'$ as $v$. As each output attribute $x \in u \cap \bm{y}$ appears in some node of $\T'$, it also appears in $v$ due to the coherence constraint. Thus, $e \cap \bm{y} \subseteq v$.
	
	Recall that for each node $u \in \T$, there exists an edge $e \in \E$ such that $u \subseteq e$. Correspondingly, for each node $v \in \T'$, there exists an edge $e \in \E$ such that $v \subseteq e \cap \bm{y}$.  Thus, for each out-irreducible $e \in \E$, there exists one node $v \in \T'$ such that $e \cap \bm{y} = v$, since if $e \cap \bm{y} \subset v$ and $v \subseteq e' \cap \bm{y}$ for some other $e' \subseteq \E$, $e$ would be out-reducible. This implies that $\T'$ has only two types of nodes: (1) $e \cap \bm{y}$ for each out-irreducible $e \in \E$, and (2) a proper subset of $e \cap \bm{y}$ for some $e \in \E$. 
	
	Not surprisingly, $\Q_{out}$ also have two types of edges, (1) $e \cap \bm{y}$ for each out-irreducible $e \in \E$, and (2) $e \cap \bm{y}$ for each out-reducible $e \in \E$. Nodes in $\T'$ of type (1) are one-to-one mappings to edges in $\Q_{out}$ of type (1). Moreover, after applying the reduce procedure repeatedly on $\T'$ or $\Q_{out}$, only nodes or edges of type (1) can survive. Thus, the reduced query of $\Q_{out}$ is hierarchical iff the reduced query of $\T'$ is hierarchical, and the $\Q_{out}$ is r-hierarchical iff $\T'$ is r-hierarchical.
\end{proof}

\begin{theorem}
	\label{the:out-hierarchical}
	For out-hierarchical query $\Q_{\bm{y}}$ and any instance $\R$, there is an algorithm computing it in $O(1)$ rounds with load $O(\frac{\IN}{p} + \Li(p, \R, \bm{y}))$.
\end{theorem}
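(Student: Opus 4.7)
The plan is to combine the two main ingredients developed earlier in this section with the instance-optimal r-hierarchical algorithm of Section~\ref{sec:hierarchical}. First I would apply the preprocessing steps described above: remove dangling tuples and repeatedly reduce the join hypergraph (merging annotations into the surviving relation) so that we may assume $\Q$ is reduced. By the structural \autoref{lem:out-hierarchical}, the out-hierarchical query $\Q_{\bm{y}}$ admits a width-1 free-connex GHD $\T$ with a connex subset $\T'$ such that $\bm{y} = \bigcup_{u \in \T'} u$ and $\T'$ is r-hierarchical. Build the instance $\R_\T$ from $\R$ exactly as described before \autoref{lem:linearaggro}.

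Next I would invoke {\sc LinearAggroYannakakis} on $(\T, \T', \R_\T)$. By \autoref{lem:linearaggro}, this runs in $O(1)$ rounds with linear load $O(\IN/p)$ and returns an instance $\R_{\T'}$ on $\T'$ satisfying $\T(\R_\T) = \T'(\R_{\T'})$, and hence $\Q_{\bm{y}}(\R) = \T'(\R_{\T'})$. Because $\T'$ is r-hierarchical, I can then hand $\R_{\T'}$ to the instance-optimal algorithm of \autoref{the:hierarchical}. That algorithm computes $\T'(\R_{\T'})$ in $O(1)$ rounds with load $O(|\R_{\T'}|/p + \Li(p, \R_{\T'}))$. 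Since {\sc LinearAggroYannakakis} never increases the size of any relation, we have $|\R_{\T'}| = O(\IN)$, which handles the first term.

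The remaining and main step is to show $\Li(p,\R_{\T'}) = O(\Li(p,\R,\bm{y}))$, where $\Li(p,\R,\bm{y})$ denotes the natural per-instance lower bound for the join-aggregate problem (the analogue of \eqref{eq:instance} applied to the residual output-attribute query). The idea is that any subset $S$ of nodes in $\T'$ whose join size contributes to $\Li(p,\R_{\T'})$ corresponds, via the one-to-one mapping between nodes in $\T'$ and out-irreducible edges of $\Q$ established in the proof of \autoref{lem:out-hierarchical}, to a subset of edges of the residual output query, with identical join cardinality once dangling tuples have been purged. This gives a term-by-term matching between the maxima in the two definitions and hence the desired bound. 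The main obstacle in writing the proof cleanly is carefully stating the per-instance lower bound $\Li(p,\R,\bm{y})$ for join-aggregate queries and verifying that the cardinalities of the sub-joins in $\R_{\T'}$ agree (up to constants depending only on query size) with the corresponding projections on the output attributes in the original instance $\R$; once this bookkeeping is done, the result follows by the same counting argument that produced \eqref{eq:instance}, using that each server can emit at most $O(L^{|S|})$ distinct output tuples of any $|S|$-wise sub-join. Combining the three bounds yields the claimed overall load $O(\IN/p + \Li(p,\R,\bm{y}))$.
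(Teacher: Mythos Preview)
Your proposal is correct and follows essentially the same approach as the paper: preprocess, invoke \autoref{lem:out-hierarchical} to get an r-hierarchical $\T'$, run {\sc LinearAggroYannakakis} (\autoref{lem:linearaggro}) to reduce to $\T'(\R_{\T'})$, and then apply the instance-optimal r-hierarchical algorithm of \autoref{the:hierarchical}. The paper in fact presents the theorem without a standalone proof, relying on exactly this chain of lemmas; you have reconstructed it accurately, and you are if anything more explicit than the paper about the one point that needs checking, namely that $\Li(p,\R_{\T'})$ is controlled by $\Li(p,\R,\bm{y})$ via the correspondence between nodes of $\T'$ and out-irreducible edges of $\Q$.
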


Note that the instance-optimal lower bound $\Li$ for a join-aggregate query is defined with respect to the output attributes only, i.e.,
\[\Li(p, \R, \bm{y}) := \max_{S \subseteq \E} \left(\frac{|\pi_{\bm{y}}\Q(\R, S)|}{p}\right)^{\frac{1}{|S|}},\]
 where $\pi_{\bm{y}}\Q(\R, S) = \pi_{\bm{y}} ((\Join_{e \in S} R(e)) \ltimes \Q(\R))$.

%%% Local Variables: 
%%% TeX-master: "paper"
%%% End: 

 \section{A Lower Bound on Triangle Join}
\label{sec:triangle}

Finally, we look beyond acyclic joins.  In particular, we give an output-sensitive lower bound on the triangle join $\Q_\triangle = R_1(B, C) \Join R_2(A, C) \Join R_3(A,B)$.  For $\Q_\triangle$, a worst-case lower bound of $\Omega ({\IN \over p^{2/3}} )$ is known, by the following argument: A server loading $L$ tuples can emit at most $O(L^{3/2})$ join results by the AGM bound \cite{atserias2008size}, while the join size of $\Q_\triangle$ can be as large as $\Omega(\IN^{3/2})$.  Then setting $p\cdot L^{3/2} = \Omega(\IN^{3/2})$ yields this lower bound.  However, if $\OUT$ is used as a parameter, this argument only leads to a lower bound of $\Omega ( ({\OUT \over p} )^{2/3} )$.  Below, we improve this lower bound to the following:

 \begin{theorem}
   \label{the:triangle}
   For any $\IN/\log^2 \IN \ge 3 p^3, \OUT$, there exists an instance $\R$ for $\Q_\triangle$ with input size $\Theta(\IN)$ and output size $\Theta(\OUT)$ such that any tuple-based algorithm computing it in $O(1)$ rounds must have a load of $\Omega (\min   \{{\IN \over p} + \frac{\OUT}{p \log N},\frac{\IN}{p^{2/3}}   \} )$.
 \end{theorem}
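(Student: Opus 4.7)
The plan is to extend the random-instance technique of Theorem~\ref{lem:3-line-lower} to the triangle join, combining it with the AGM inequality to obtain an output-sensitive refinement of the worst-case $\Omega(\IN/p^{2/3})$ bound. The argument is again combinatorial: I construct a random instance, bound the maximum number $J(L)$ of triangles a server can emit after loading $L$ input tuples with high probability over the random instance, and then force $p \cdot J(L) = \Omega(\OUT)$.

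The construction is a random tripartite instance in which each of $\dom(A), \dom(B), \dom(C)$ has size $n$, and each potential edge in $R_1(B,C)$, $R_2(A,C)$, $R_3(A,B)$ is included independently with probability $q$. Setting $\mathbb{E}[\IN] = 3n^2 q = \Theta(\IN)$ and $\mathbb{E}[\OUT] = n^3 q^3 = \Theta(\OUT)$ fixes $n = \Theta(\IN/\OUT^{1/3})$ and $q = \Theta(\OUT^{2/3}/\IN)$; the condition $\IN/\log^2 \IN \ge 3 p^3$ ensures these parameters are large enough for all the concentration bounds below. By Chernoff, the actual input size is $\Theta(\IN)$ with high probability, and by Janson's inequality (or a second-moment / Paley--Zygmund argument) the actual triangle count is $\Theta(\OUT)$ with constant probability, so the probabilistic method yields an instance with the desired input and output sizes together with the subset bound below.

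The heart of the proof is a uniform upper bound on $J(L)$. For any fixed triple of subsets $(S_1, S_2, S_3)$ of potential edges for $(R_1, R_2, R_3)$ with $|S_1| + |S_2| + |S_3| \le L$, set $T'(S) := |\{(a,b,c) : (b,c) \in S_1, (a,c) \in S_2, (a,b) \in S_3\}|$. This is deterministic and, by AGM, bounded by $\sqrt{|S_1||S_2||S_3|} \le L^{3/2}/3^{3/2}$. Since each candidate triple is an actual triangle independently with probability $q^3$, the number $T(S)$ of triangles formable using only edges in $S$ has mean $q^3 T'(S)$. A Chernoff bound gives
\[
\Pr\bigl[T(S) > c\,(q^3 T'(S) + L \log N)\bigr] \le \exp(-\Omega(L \log N)),
\]
by splitting into the regime $q^3 T'(S) \ge L \log N$ (multiplicative tail) and $q^3 T'(S) < L \log N$ (additive tail) --- exactly the $\delta$-splitting trick used in the line-3 lower bound. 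Since the number of candidate $(S_1, S_2, S_3)$ is at most $\binom{n^2}{L}^3 \le \exp(O(L \log N))$, a union bound yields $J(L) = O(q^3 L^{3/2} + L \log N)$ uniformly with high probability.

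Finally I plug in the parameters. From $p \cdot J(L) = \Omega(\OUT)$ we obtain $q^3 L^{3/2} + L \log N = \Omega(\OUT/p)$, so at least one of the two summands is $\Omega(\OUT/p)$. Substituting $q^3 = \Theta(\OUT^2/\IN^3)$, the first possibility forces $L = \Omega\bigl((\OUT/(p q^3))^{2/3}\bigr) = \Omega\bigl(\IN^2/(p^{2/3} \OUT^{2/3})\bigr)$, which is $\Omega(\IN/p^{2/3})$ whenever $\OUT = O(\IN^{3/2})$ (the AGM ceiling), while the second possibility gives $L = \Omega(\OUT/(p \log N))$. Combined with the trivial input-load bound $L \ge \IN/p$, this yields the claimed $\Omega(\min\{\IN/p + \OUT/(p \log N),\, \IN/p^{2/3}\})$. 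The main obstacle is calibrating the Chernoff tail in the small-mean regime $q^3 T'(S) \ll L \log N$, where the additive tail must still beat the $\exp(O(L \log N))$ cost of the union bound; this is precisely the source of the $\log N$ overhead in the output-sensitive term of the final bound.
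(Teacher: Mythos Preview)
Your construction and the paper's differ in a way that matters. You take all three relations i.i.d.\ random with edge probability $q$; the paper instead sets $|\dom(A)|=\tau=\OUT/N$, $|\dom(B)|=|\dom(C)|=N/\tau$, takes $R_2=\dom(A)\times\dom(C)$ and $R_3=\dom(A)\times\dom(B)$ \emph{deterministically complete}, and makes only $R_1(B,C)$ random. This asymmetry is not cosmetic---it is what makes the concentration step valid.

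The gap in your argument is the sentence ``each candidate triple is an actual triangle independently with probability $q^3$.'' This is false: two potential triangles $(a_1,b,c)$ and $(a_2,b,c)$ share the edge $(b,c)\in S_1$, so their indicators are positively correlated. Your $T(S)$ is a degree-$3$ polynomial in the $\le L$ edge indicators, not a sum of independent Bernoullis, and Chernoff does not apply. For the complete-tripartite choice $S_i$ on $\sqrt{L}$ vertices per side, $\mathrm{Var}(T(S))=\Theta(L^2 q^5)$ from one-edge-sharing pairs, and the known upper-tail bounds for triangle counts (Kim--Vu, Janson--Ruci\'nski, Chatterjee--Dembo) give at best $\exp(-\Theta(L q^2\log(1/q)))$ for a constant multiplicative deviation---short of the $\exp(-\Omega(L\log N))$ you need to beat the $\binom{n^2}{L}^3=\exp(O(L\log N))$ union bound when $q$ is bounded away from $0$.

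The paper avoids this entirely. With $R_2,R_3$ complete, it first proves a structural lemma (the ILP argument, Lemma~\ref{lem:triangle}) showing that, up to a factor $3$, the server's best loading from $R_2,R_3$ is a Cartesian product $A'\times C'$ and $A'\times B'$. Once $A',B',C'$ are fixed, the number of triangles emitted equals $|A'|$ times the number of $R_1$-edges inside $B'\times C'$---a constant times a sum of \emph{independent} Bernoullis---so Chernoff applies directly, and the union bound is only over choices of $(A',B',C')$. Your symmetric random construction admits no such reduction, and the ``$\delta$-splitting trick'' from the line-$3$ proof worked there precisely because only one relation was random.
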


 %%The proof of Theorem~\ref{the:triangle} given as below is quite technical, but the intuition is simple: When $\OUT = \Theta(\IN^{3/2})$, the triangles are ``dense'' enough, so a server can achieve the maximum efficiency and emit $\Theta(L^{3/2})$ triangles. However, for small $\OUT$, we can construct an instance in which the triangles are ``sparse'' so that a server cannot be as efficient.  In fact, an instance constructed randomly (in a certain way) would have this property with high probability.

 \begin{proof}
 	When $\OUT \le \IN$, the claimed lower bound simplifies to $\Omega({\IN \over p})$, so we will only consider the case $\OUT > \IN$.  Let $N = \IN /3$ and $\tau = \frac{\OUT}{N}$.  Note that $\tau \le \sqrt{N}$ as implied by the AGM bound. Our construction of the hard instance $\R$ is illustrated in Figure~\ref{fig:triangle}.
 	\begin{figure}[h]
 		\centering
 		\includegraphics[scale=1.0]{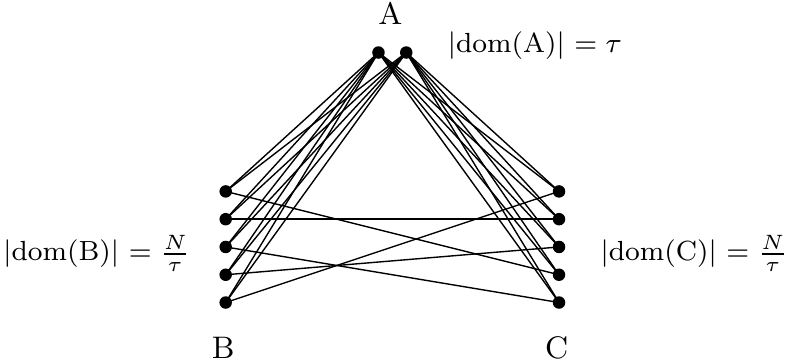}
 		\caption{A randomly constructed hard instance.}
 		\label{fig:triangle}
 	\end{figure}
 	
 	Set $|\dom(A)| = \tau$, and $|\dom(B)|= |\dom(C)| = \frac{N}{\tau}$.  Set $R_2(A, B) = \dom(A) \times \dom(C)$ and $R_3(A,B) = \dom(A) \times \dom(B)$.  The relation $R_1(B,C)$ is constructed randomly, in which each distinct value in $B$ and each distinct value of $C$ have a probability of $\frac{\tau^2}{N}$ to form a tuple. Note that relations $R_2$ and $R_3$ are deterministic and always have $N$ tuples. Relation $R_1$ is probabilistic with $N$ tuples in expectation. So this instance has input size $3N = \IN$ and output size $({N \over \tau})^2 \cdot {\tau^2 \over N} \cdot \tau = \OUT$ in expectation. By the Chernoff bound, the probability that the input size and output size deviate from their expectation by more than a constant factor is at most $\exp(-\Omega(|R_1|)) = \exp(-\Omega(N))$.
 	
 	Similar to the proof of Lemma \ref{lem:3-line-lower}, we will show that with positive probability, an instance constructed this way will have a bounded $J(L)$, the maximum number of join results a server can produce, if it loads at most $L$ tuples from each relation.  Then setting $p\cdot J(L) = \Omega(\OUT)$ yields a lower bound on $L$.
 	
 	To bound $J(L)$, we first argue that on any instance constructed as above, we can limit the choice of the $L$ tuples loaded from $R_2(A,C)$ ($R_3(A,B)$, respectively) by any server to the form $X \times Y$ for some $X\in \dom(A), Y\in \dom(C)$ ($Y \in \dom(B)$, respectively), i.e., the algorithm should load tuples from $R_2(A,B)$ and $R_3(A,C)$ in the form of a Cartesian product.  More precisely, we show below that making such a restriction will not make $J(L)$ smaller by more than a constant factor.  
 	
 	Suppose a server has loaded $L$ tuples from $R_1(A,B)$.  Then the server needs to decide which $L$ tuples from $R_2$ and $R_3$ to load to maximize the number of triangles formed.  This is a combinatorial optimization problem that can be formulated as an integer linear program (ILP). Introduce a variable $u_{ab}$ for each pair $a \in \dom(A), b \in \dom(B)$ and a variable $v_{ac}$ for each pair $a \in \dom(A), c \in \dom(C)$.  Also let $I(bc) = 1$ if $(b,c) \in R_1$ is loaded by the server, and $0$ otherwise.  Then $ILP_1$ below defines this optimization problem, where $a$ always ranges over $\dom(A)$, $b$ over $\dom(B)$, $c$ over $\dom(C)$ unless specified otherwise.
 	\[ \begin{array}{ccll}
 	ILP_1:& maximize & \sum_{a, b, c} I(bc) \cdot u_{ab} \cdot v_{ac} &  \\
 	&s.t. & \sum_{a, b} u_{ab} \le L &\\
 	& & \sum_{a, c} v_{ac} \le L & \\
 	& & u_{ab} \in \{0,1\}, \ v_{ac} \in \{0,1\}, \forall a, b, c \\
 	\\
 	ILP_2:& maximize & \sum_{b, c} I(bc) \cdot u_{ab} \cdot v_{ac} &  \\
 	&s.t. & \max\{\sum_{b} u_{ab}, \sum_{c} v_{ac}\} \le w &\\
 	& & u_{ab} \in \{0,1\}, \ v_{ac} \in \{0,1\}, \forall b, c\\
 	\\
 	ILP_3:& maximize & \sum_{a} \Delta(w_a) & \\
 	& s.t. & \sum_{a} w_a \le 2L, \ w_{a} \in \{0,1, \cdots, L\}, \forall a 
 	\end{array}\]
 	
 	We transform $ILP_1$ into another ILP, shown as $ILP_3$ above. $ILP_3$ uses a function $\Delta(w)$, which denotes the optimal solution of $ILP_2$.  $ILP_2$ is parameterized by $w$ and $a$, which finds the maximum number of triangles that can be formed with the tuples loaded from $R_1(B,C)$ and $a\in \dom(A)$, subject to the constraint that at most $w$ tuples containing $a$ are loaded from $R_2$ and $R_3$.  Because all values $a\in \dom(A)$ are structurally equivalent, the optimal solution of $IL_2$ does not depend on the particular choice of $a$, which is why we write the optimal solution of $ILP_2$ as $\Delta(w)$.  Also, it is obvious that $\Delta(.)$ is a non-decreasing function. Then, $ILP_3$ tries to find the optimal allocation of the $L$ tuples to different values $a\in \dom(A)$ so as to maximize the total number of triangles formed.   Let the optimal solutions of $ILP_1, ILP_3$ be $OPT_1, OPT_3$, respectively.  Because $ILP_3$ only restricts the server to load at most $2L$ tuples from $R_2$ and $R_3$ in total, any feasible solution to $ILP_1$ is also a feasible solution to $ILP_3$, so $OPT_1 \le OPT_3$. Next we construct a feasible solution of $ILP_3$ with the Cartesian product restriction above, and show that it is within a constant factor from $OPT_3$, hence $OPT_1$. 
 	
 	Let $w^* = \arg \max_{\frac{L}{\tau} \le w \le L} \frac{L}{w} \cdot \Delta(w)$. We choose $\frac{L}{w^*}$ values arbitrarily from $\dom(A)$ and allocate $w^*$ tuples to each such $a$.  For each such $a$, we use the optimal solution of $ILP_2$ to find the $w^*$ tuples to load from $R_2$ and $R_3$.  Note that the optimal solution is the same for every $a$, so each $a$ will choose the same sets of $b$'s and $c$'s.  Thus, this feasible solution loads tuples from $R_1$ and $R_2$ in the form of Cartesian products.  The number of triangles formed is $W = \frac{L}{w^*} \cdot \Delta(w^*)$.  We show that this is a constant-factor approximation of $OPT_3$.
 	\begin{lemma}
 		\label{lem:triangle}
 		$W \ge \frac{1}{3} OPT_3 \ge \frac{1}{3} OPT_1$.
 	\end{lemma}
 	
 	\begin{proof}
 		Suppose $OPT_3$ chooses a set of values $A^*$ from $A$, and each $a \in A^*$ has $w_a$ tuples loaded from $R_2$ and $R_3$. A value $a \in A^*$ is {\em efficient} if $\frac{\Delta(w_a)}{w_a} \ge \frac{ \Delta(w^*)}{w^*}$, otherwise {\em inefficient}.  Denote the set of efficient values as $A^*_1$ and inefficient values as $A^*_2$. Note that for every efficient value $a$, $w_a \le \frac{L}{\tau}$ by the definition of $w^*$.
 		
 		We relate $W$ and $OPT_3$ by showing how to cover all the triangles reported by $OPT_3$ with the feasible solution constructed above.  First, we use $\frac{\sum_{a \in A^*_2} w_a}{3w^*} $ values of $A$ each with $w^*$ tuples from $R_2$ and $R_3$ to cover the triangles reported by $A^*_2$. The total number of tuples needed is at most $\frac{2}{3} \sum_{a \in A^*_2} w_a \le \frac{4}{3}L$. The number of triangles that can be reported is
 		\[
 		\frac{\sum_{a \in A^*_2} w_a}{3w^*}  \cdot \Delta(w^*)  \ge \frac{1}{3} \sum_{a \in A^*_2} w_a \cdot \frac{\Delta(w_a)}{w_a} = \frac{1}{3} \sum_{a \in A^*_2} \Delta(w_a). 
 		\]
 		
 		Next, we use $\frac{L}{3w^*}$ values each with $w^*$ tuples from $R_2$ and $R_3$ to cover the triangles reported by $A^*_1$. The total number of tuples needed is $\frac{2}{3}L$. Recall that $w_a \le \frac{L}{\tau}$ for each $a \in A^*_1$. The number of triangles that can be reported is  
 		\begin{align*}
 		\frac{L}{3w^*} \cdot \Delta(w^*) \ge \frac{L}{3} \cdot \frac{\Delta(\frac{L}{\tau})}{\frac{L}{\tau}} = \frac{\tau}{3} \cdot \Delta(\frac{L}{\tau})\ge\frac{1}{3} \sum_{a \in A^*_1} \Delta(w_a),
 		\end{align*}
 		where the rationale behinds the last inequality is that there are at most $\tau$ values in $A^*_1$ and there is $\Delta\left(\frac{L}{\tau}\right) \ge \Delta(w_a)$ for each $a \in A^*_1$ by the non-decreasing property of $\Delta(.)$.
 		
 		Combining the two parts for the optimal solution $A^*$, our alternative solution loads at most $2L$ tuples from $R_2$ and $R_3$, and can report at least $\frac{1}{3} \cdot OPT_3$ triangles.
 	\end{proof}

 	%The probability that any subgraph with $y$ values in $B$ and $z$ values in $C$ for $y \cdot z \ge \frac{2NM}{x^2}$, contains less than $M$ tuples is exponentially small. For one such subgraph, the expected number of tuples is $\frac{yzx^2}{N} \ge 2M$. The probability that it contains less than $M$ tuples is at most $\exp(-\Omega(M))$. For each specific pair of $y,z$, there are ${\frac{N}{x} \choose y} = O\left((\frac{N}{x})^y\right)$ choices from $B$ and ${\frac{N}{x} \choose z} = O\left((\frac{N}{x})^z\right)$ choices from $C$. In total the number of subgraphs is 
 	%%\[\sum_{y = \frac{M}{x}}^{\frac{N}{x}} \sum_{z = \frac{NM}{x^2y}}^{\frac{N}{x}} \exp\left(O(y\log N) + O(z \log N)\right) = O(\frac{N}{x} \log N) 
 	%%\]
 	%%By the union bound, the probability that any one of them contains less than $M$ tuples is at most $\exp(-\Omega(M))$ since we have $x \ge \frac{N}{M}$. So we argue that each I/O can load $B_M \cdot C_M$ tuples from $\R_1$ if $B_M \cdot C_M = O(\frac{NM}{x^2})$ and only $M$ tuples otherwise. 
 	
 	Next, we show that with positive probability (actually high probability), we obtain an instance on which $J(L)$ is bounded.  By the analysis above, we only need to consider the case where tuples from $R_2$ and $R_3$ are loaded in the form of Cartesian products. One value $b \in \dom(B)$ is loaded if at least one tuple $t \in R_3(A,B)$ with $\pi_B t = b$ is loaded. Similarly, value $c \in \dom(C)$ is loaded if at least one tuple $t \in R_2(A,C)$ with $\pi_C t = c$ is loaded.  Suppose $\alpha$ and $\beta$ distinct values from $B$ and $C$ are loaded respectively. Note that we must have $1 \le \alpha, \beta \le \min\{L, \frac{N}{\tau}\}$. Without loss of generality, assume $\alpha \le \beta$. Due to Cartesian product constraint, the number of distinct values loaded from $A$ is at most $\tau = \min\{\frac{L}{\beta}, \tau\}$.
 	
 	\paragraph{Case 1: $\bm{\alpha \beta \le \frac{NL}{\tau^2}}$} 
 	
 	We first upper bound the probability that the server can report many triangles on a random instance, for a particular choice of $\alpha$ values loaded from $\dom(B)$ and $\beta$ values from $\dom(C)$.  Since at most $\gamma$ distinct values from $A$ are loaded, each tuple loaded from $R_1(B,C)$ can form at most $\gamma$ triangles. Because each $(b,c)$ pair have probability ${\tau^2 \over N}$ to form a tuple in $R_1(B,C)$, on a random instance, we expect to see $\frac{\tau^2\alpha \beta}{N}$ tuples and $\frac{\tau^2\alpha \beta \gamma}{N}$ triangles. Note that this is always smaller than $\tau \sqrt{\frac{L^3}{N}}$: (1) If $\tau \beta \ge L$, $\gamma = \frac{L}{\beta}$, then $\frac{\tau^2\alpha \beta \gamma}{N} \le \frac{\tau^2L}{N} \cdot \sqrt{\alpha \beta} \le \frac{\tau^2L}{N} \cdot \sqrt{\frac{NL}{\tau^2}} \le \tau \sqrt{\frac{L^3}{N}}$; (2)  Otherwise, $\gamma = \tau$, then
 	$\frac{\tau^2\alpha \beta \gamma}{N} \le \frac{\tau^3 \beta^2}{N} \le  \frac{\tau^3}{N} \cdot \frac{L^2}{\tau^2} \le \frac{\tau L^2}{N} \le \tau \sqrt{\frac{L^3}{N}}$. This server can report more than $\delta \tau \sqrt{\frac{L^3}{N}}$ triangles, for some $\delta > 1$,  if more than $\frac{\delta \tau}{\gamma} \sqrt{\frac{L^3}{N}}$ tuples exist among those $\alpha \beta$ pairs. By Chernoff bound, this happens with probability no larger than $\exp\left(-\Omega(\frac{\delta  \tau}{\gamma} \sqrt{\frac{L^3}{N}})\right)$.
 	
 	This is the probability that the server succeeds in reporting many triangles under a particular choice of $\alpha$ values loaded from $\dom(B)$ and $\beta$ values from $\dom(C)$.  There are $O\left((\frac{N}{\tau})^2\right)$ possible $(\alpha ,\beta )$ pairs.  For each $(\alpha ,\beta)$ pair, there are $O\left(\tau^\gamma\right)$ choices of loading $\gamma$ values from $A$, $O\left((\frac{N}{\tau})^\alpha \right)$ choices of loading $\alpha $ values from $B$, and $O\left((\frac{N}{\tau})^\beta \right)$ choices from $C$.  Thus the server has $\exp(O((\alpha + \beta + \gamma)\log N))$ possible choices.  By the union bound, the probability that any of these choices produces more than $\delta \tau \sqrt{\frac{L^3}{N}}$ join results is at most
 	\begin{equation}
 	\label{eq:1}
 	\exp\left(-\Omega\left(\frac{\delta \tau}{\gamma} \sqrt{\frac{L^3}{N}}\right) + O((\alpha  + \beta + \gamma)\log N)\right), 
 	\end{equation}
 	which is exponentially small if
 	\[\frac{\delta \tau}{\gamma} \sqrt{\frac{L^3}{N}} \ge c_1 \cdot \frac{L}{\gamma} \log N \ge c_1 \cdot \beta \log N\]
 	and
 	\[\frac{\delta \tau}{\gamma} \sqrt{\frac{L^3}{N}} \ge c_2 \cdot \frac{\tau^2}{\gamma} \log N  \ge c_2 \cdot \gamma \log N \] for some sufficiently large constants $c_1, c_2$. Rearranging, this becomes
 	\begin{equation}
 	\label{eq:6}
 	\delta \ge c_3 \cdot \max \left \{\frac{1}{\tau} \sqrt{\frac{N}{L}} \log N, \tau \sqrt{\frac{N}{L^3}} \log N\right \}
 	\end{equation}
 	for some sufficiently large constant $c_3$. Under this condition, the probability in $(\ref{eq:1})$ is at most $\exp\left(-\Omega\left(\delta \sqrt{\frac{L^3}{N}}\right)\right)$.
 	
 	\paragraph{Case 2: $\bm{\alpha \beta  > \frac{NL}{\tau^2}}$}
 	In this case, we have $\beta \ge \frac{\sqrt{NL}}{\tau}$.  The server loads $\frac{L}{\beta }$ distinct values from $A$, so each tuple loaded from $R_1$ can form at most $\frac{L}{\beta }$ triangles.  The server can load at most $L$ tuples from $R_1$, so at most $\frac{L^2}{\beta } \le \delta \tau \sqrt{\frac{L^3}{N}}$ triangles can be reported, for any
 	\begin{equation}
 	\label{eq:7}
 	\delta \ge 1.  
 	\end{equation}

 	\bigskip Combining these two cases, under the condition (\ref{eq:6}) and (\ref{eq:7}) on $\delta$, with high probability the server cannot find any way to load $L$ tuples to report more than $\delta \tau \sqrt{\frac{L^3}{N}}$ triangles. 
 	Therefore, on these instances, we have
 	\begin{equation}
 	\label{eq:9}
 	J(L) \le \delta \tau \sqrt{\frac{L^3}{N}},
 	\end{equation} where we set
 	\begin{equation}
 	\label{eq:8}
 	\delta = c_3 \max\left\{\frac{1}{\tau} \sqrt{\frac{N}{L}} \log N, \tau \sqrt{\frac{N}{L^3}} \log N, 1\right\}.  
 	\end{equation}
 	With the facts that $\frac{N}{p} \le L$ and $\OUT \le N^{\frac{3}{2}}$,
 	we observe
 	\[\tau \sqrt{\frac{N}{L^3}} \log N = \frac{\OUT}{\sqrt{NL^3}} \log N \le \frac{N^{\frac{3}{2}}\log N}{N^{\frac{1}{2}}(\frac{N}{p})^{\frac{3}{2}}} = \frac{p^\frac{3}{2}\log N}{N^{\frac{1}{2}}} \le 1,\]
 	where the last inequality follows from our assumption $N = \IN/3 \ge p^{3}\log^{2} \IN \ge p^{3}\log^2 N$. Then (\ref{eq:8}) can be simplified to
 	\begin{equation}
 	\label{eq:10}
 	\delta = c_3 \max\left\{\frac{1}{\tau} \sqrt{\frac{N}{L}} \log N, 1\right\}.
 	\end{equation}
 	Plugging (\ref{eq:9}) and (\ref{eq:10}) into $p\cdot J(L) =\Omega(\OUT) = \Omega(N\tau)$, we obtain
 	\begin{align*} L  =& \Omega\left(\frac{N}{\delta^{\frac{2}{3}} p^{\frac{2}{3}}}\right) = \Omega\left( \min \left \{ \left ({\tau \over \log N} \cdot \sqrt{\frac{L}{N}}  \right)^{\frac{2}{3}},  1 \right \} \cdot \frac{N}{p^{2/3}}\right). 
 	\end{align*}
 	Finally, after plugging in $\tau = \OUT / N$ and rearranging, we obtain
 	\[ L = \Omega\left(\min \left \{\frac{\OUT}{p  \log N}, \frac{\IN}{p^{2/3}} \right\}\right). \]
 \end{proof}
 
\noindent {\bf Remark.} Our lower bound has the following consequences:

\begin{enumerate}
	\item When $\OUT \ge \IN \cdot p^{1/3}$, the lower bound becomes $\tilde{\Omega} ({\IN \over p^{2/3}} )$, which means that the worst-case optimal algorithm of \cite{koutris16:_worst} is actually also output-optimal in this parameter range.  Finding $\tilde{\Omega}(\IN \cdot p^{1/3})$ triangles is as difficult as finding $\Theta(\IN^{3/2})$ triangles.
	\item When $\IN \le \OUT \le  \IN \cdot p^{1/3}$, the lower bound becomes $\tilde{\Omega}({\OUT \over p})$ while we do not have a matching upper bound yet.  Nevertheless, this already exhibits a separation from acyclic joins, which can be done with load $O ({\sqrt{\IN \cdot \OUT} \over p} )$.  The gap is at  least $\tilde{\Omega} (\sqrt{\OUT \over \IN} )$.
\end{enumerate}

 \bibliographystyle{abbrv}
 \bibliography{ms}

 \newpage
 \appendix

\section{Proof of Lemma~\ref{lem:non-hierarchical}}
\label{sec:proof-lemma-path3}

\begin{proof}
  Direction ($\Leftarrow$): In an acyclic join $\Q = (\V, \E)$, a minimal path of length 3 is a sequence of 4 vertices $(x_1, x_2, x_3, x_4)$, such that $\{x_1, x_2\} \subseteq e_1, \{x_2, x_3\} \subseteq e_2, \{x_3, x_4\} \subseteq e_3$ and there exists no edge $e \in \E$ with $\{x_1, x_3\} \subseteq e$, or $\{x_1, x_4\} \subseteq e$, or $\{x_2, x_4\} \subseteq e$.  This already testifies that $\Q$ is not hierarchical.  To show that it is not r-hierarchical, consider the process of repeatedly applying the reduce procedure to $\Q$.  If any of $\{e_1, e_2, e_3\}$ is removed in the process, say $e_1$, there must exist an edge $e'_1$ such that $e_1 \subseteq e'_1, x_3 \notin e'_1, x_4 \notin e'_1$. The same applies for $e_2$ and $e_3$. Thus we can always find three edges $e'_1, e'_2, e'_3$ such that $e'_2 \in \E_{x_2} \cap \E_{x_3}, e'_1 \in \E_{x_2} - \E_{x_3}, e'_3 \in \E_{x_3} - \E_{x_2}$ after applying the reduce procedure, so this query is not r-hierarchical.
	
  \medskip Direction ($\Rightarrow$): The proof is constructive.  We will show below how to find a minimal path of length 3 in any acyclic but non-r-hierarchical join. We first apply the reduce procedure to $\Q$ such that no edge is contained in another. The rationale behind this is that a minimal path between two vertices $x,y \in \V$ of length $3$ in the reduced join is also a minimal path between $x,y$ of length 3 in the original join. Then we proceed in 3 steps: (We give an intuitive illustration of the results after each step, in Figure~\ref{fig:step}.)
  %In an acyclic but not r-hierarchical join, if there exists a subgraph as defined, then we can also find one in its reduced query. If none of $\{e_1, e_2, e_3\}$ will be removed in the reduced procedure, then it is also a valid subgraph in the reduced query. Otherwise, assume $e_1$ will be removed in the reduce procedure w.l.o.g. There must exist an edge $e'_1 \in \E$ such that $e_1 \subseteq e'_1, x_3 \notin e'_1, x_4 \notin e'_1$ implied by the constraint on this subgraph. We can just replace $e_1$ by $e'_1$. The same applies for $e_2$ and $e_3$. In the following, we prove the existence of such a subgraph in three steps on the reduced query.
	
	\medskip \noindent {\bf Step 1: } Find a subgraph defined by three distinct edges $\{e_1,e_2,e_3\}$ and four distinct vertices $\{x_1,x_2,x_3,x_4\}$, such that $x_1 \in e_1, x_1 \notin e_2 \cup e_3, x_2 \in e_1 \cap e_2, x_2 \notin e_3, x_3 \in e_2 \cap e_3, x_3 \notin e_1, x_4 \in e_3, x_4 \notin e_1 \cup e_2$.
	
	\medskip \noindent {\bf Step 2: } 
	Find a subgraph defined by three distinct edges $\{e_1,e_2,e_3\}$ and four distinct vertices $\{x_1,x_2,x_3,x_4\}$, such that $x_1 \in e_1, x_1 \notin e_2 \cup e_3, x_2 \in e_1 \cap e_2, x_2 \notin e_3, x_3 \in e_2 \cap e_3, x_3 \notin e_1, x_4 \in e_3, x_4 \notin e_1 \cup e_2$, and there exists no edge $e \in \E$ with $\{x_1, x_2, x_3\} \subseteq e$ or $\{x_2, x_3, x_4\} \subseteq e$.
	
	\medskip \noindent {\bf Step 3: } Find a minimal path of length $3$ between $x_1$ and $x_4$.
	\begin{figure}[h]
		\centering
		\includegraphics[scale=1.1]{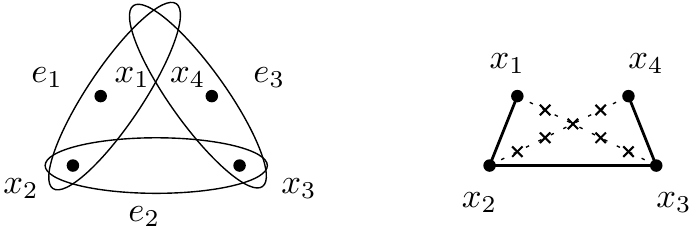}
		\caption{The left figure is the result of step 1 and the right figure is the result of step 2 $\&$ 3.}
		\label{fig:step}
	\end{figure}
	Our construction and its correctness proof is based on a basic property of acyclic join, as stated in Lemma~\ref{lem:acyclic-cycle}. 	With Lemma~\ref{lem:acyclic-cycle}, we are able to prove stronger results in Corollary~\ref{cor:acyclic-cycle-1} and Corollary~\ref{cor:acyclic-cycle-2}, which will be used as building blocks in proving Lemma~\ref{lem:non-hierarchical}. %%In short, we will mention that attributes $x,y,z$ form a ``cycle'' on edges $e_1, e_{xz}, e_3$ when Lemma~\ref{lem:acyclic-cycle} can be applied.
	
	\begin{lemma}
		\label{lem:acyclic-cycle}
		For three distinct edges $e_{xy}, e_{xz}, e_{yz} \in \E$, if $e_{xy} \cap e_{xz} - e_{yz} \neq \emptyset, e_{xy} \cap e_{yz} - e_{xz} \neq \emptyset, e_{xz} \cap e_{yz} - e_{xy} \neq \emptyset$, then there exists one edge $e \in \E$ such that $e_{xy} \cap e_{xz} \subseteq e, e_{xz} \cap e_{yz} \subseteq e, e_{xy} \cap e_{yz} \subseteq e$.
	\end{lemma}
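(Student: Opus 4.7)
The plan is to exploit the join tree structure of the acyclic query $\Q$, in particular the running-intersection property of any join tree $\T$ of $\Q$: for every attribute $v \in \V$, the set of nodes of $\T$ whose corresponding hyperedges contain $v$ forms a connected subtree of $\T$.

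First I would fix a join tree $\T$ of $\Q$, and identify the three nodes of $\T$ corresponding to the three given edges $e_{xy}, e_{xz}, e_{yz}$. In any tree, three nodes admit a unique \emph{median node} $m$, i.e.\ a node lying simultaneously on the (unique) path in $\T$ between every pair among the three. I would let $e$ be the edge of $\E$ corresponding to this median node.

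The main step is then to verify that $e$ contains each of the three pairwise intersections. Take any attribute $a' \in e_{xy} \cap e_{xz}$. Then $a'$ is contained in both of the edges corresponding to the endpoints of the $e_{xy}$--$e_{xz}$ path in $\T$, so by running intersection $a'$ is contained in every edge along that path. Since $e$ lies on this path, we get $a' \in e$. The same argument applied to the $e_{xy}$--$e_{yz}$ path and the $e_{xz}$--$e_{yz}$ path yields $e_{xy} \cap e_{yz} \subseteq e$ and $e_{xz} \cap e_{yz} \subseteq e$. I expect this step to be essentially routine once the median node is identified.

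The only mild subtlety is to rule out that the median $e$ trivially coincides with one of $e_{xy}, e_{xz}, e_{yz}$ in a way that would make the statement vacuous or contradictory. For instance, if $e = e_{xy}$ were the median, then $e_{xy}$ would lie on the $e_{xz}$--$e_{yz}$ path, and running intersection applied to the given vertex $c \in e_{xz} \cap e_{yz} - e_{xy}$ would force $c \in e_{xy}$, contradicting the hypothesis. The three hypotheses $e_{xy}\cap e_{xz}- e_{yz}\neq\emptyset$, $e_{xy}\cap e_{yz}- e_{xz}\neq\emptyset$, $e_{xz}\cap e_{yz}- e_{xy}\neq\emptyset$ are exactly what is needed to exclude these three degenerate positions of $e$, but I note that these exclusions are not strictly required for the conclusion, since even in a degenerate case the containment claim still holds for the edge $e$ chosen as median. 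Thus the proof reduces to observing the median exists and applying running intersection three times.
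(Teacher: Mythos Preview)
Your argument is correct and, in fact, cleaner than the paper's. You use the join tree $\T$ directly: take the median (Steiner) node $m$ of the three nodes corresponding to $e_{xy}, e_{xz}, e_{yz}$, and let $e$ be its associated hyperedge; the running-intersection property then immediately gives $e_{xy}\cap e_{xz}\subseteq e$, $e_{xy}\cap e_{yz}\subseteq e$, $e_{xz}\cap e_{yz}\subseteq e$. Your observation that the three non-emptiness hypotheses only serve to rule out $e\in\{e_{xy},e_{xz},e_{yz}\}$, and are not needed for the containment itself, is also correct.

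The paper argues differently, via the GYO reduction: it picks witnesses $x\in e_{xy}\cap e_{xz}-e_{yz}$, $y\in e_{xy}\cap e_{yz}-e_{xz}$, $z\in e_{xz}\cap e_{yz}-e_{xy}$, observes that none of $x,y,z$ can be eliminated as a unique attribute while all three edges survive, and hence the first of the three edges to be absorbed in the GYO process must be absorbed into some $e\in\E$ that still contains all three pairwise intersections. Your join-tree median argument is more structural and avoids tracking the dynamics of GYO; the paper's route is more operational but arrives at the same edge. Either approach suffices here, and your version arguably makes the later corollaries (which extend this lemma to a family $\E_{yz}$ of edges) easier to see, since one can simply take the median of $e_{xy}$, $e_{xz}$, and the whole subtree spanned by $\E_{yz}$.
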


	\noindent {\bf Proof of Lemma~\ref{lem:acyclic-cycle}: }  Consider attributes $x,y,z$ such that $x \in e_{xy} \cap e_{xz} - e_{yz}, y \in e_{xy} \cap e_{yz} - e_{xz}, z \in e_{xz} \cap e_{yz} - e_{xy}$. In the GYO reduction~\cite{abiteboul1995foundations}, we observe that (1) Any of $x,y,z$ won't be removed as an unique attribute before any edge of $e_{xy}, e_{xz}, e_{yz}$ is removed; (2) Any of $e_{xy}, e_{xz}, e_{yz}$ won't be removed as an empty edge before any of $x,y,z$ is removed. So it is always feasible to identify one edge $e \in \E$ such that $e_{xy} \cap e_{xz} - e_{yz} \subseteq e, e_{xy} \cap e_{yz} - e_{xz} \subseteq e, e_{xz} \cap e_{yz} - e_{xy} \subseteq e$. Moreover, any attribute in $e_{xy} \cap e_{xz} \cap e_{yz}$ if exists won't be removed as an unique attribute before any edge of $e_{xy}, e_{xz}, e_{yz}$ is removed. Thus we come to the conclusion in Lemma~\ref{lem:acyclic-cycle}.

	\begin{corollary}
	\label{cor:acyclic-cycle-1}
		For two distinct edges $e_{xy}, e_{xz} \in \E$ and a subset of edges $\E_{yz} \subseteq \E - \{e_{xy}, e_{xz}\}$, if $e_{xy} \cap e_{xz} - e_{yz} \neq \emptyset, e_{xy} \cap e_{yz} - e_{xz}$ for each $e_{yz} \in \E_{yz}$ and $\left(\bigcap_{e_{yz} \in \E_{yz}} (e_{xz} \cap e_{yz}) \right) - e_{xy} \neq \emptyset$, then there exists one edge $e \in \E$ such that $e_{xy} \cap e_{xz} \subseteq e, \bigcup_{e_{yz} \in \E_{yz}} (e_{xy} \cap e_{yz}) \subseteq e, \bigcap_{e_{yz} \in \E_{yz}} (e_{xz} \cap e_{yz}) \subseteq e$. 
	\end{corollary}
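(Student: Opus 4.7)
The plan is to build the desired edge $e$ as a carefully chosen node of the join tree of $\Q$ and then verify the three containments via the running intersection property. Since $\Q$ is acyclic, it admits a join tree $\T$ whose nodes are in bijection with $\E$. Root $\T$ at the node corresponding to $e_{xy}$, and let $e^{\ast}\in\E$ be the node corresponding to the lowest common ancestor, in this rooted tree, of the set $\{e_{xz}\}\cup \E_{yz}$. I will argue that $e=e^{\ast}$ witnesses the conclusion.

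For the first containment $e_{xy}\cap e_{xz}\subseteq e^{\ast}$: since $e^{\ast}$ is by construction an ancestor of $e_{xz}$, it lies on the unique path in $\T$ from the root $e_{xy}$ to $e_{xz}$; running intersection then forces every attribute in $e_{xy}\cap e_{xz}$ to appear at every node on that path, in particular at $e^{\ast}$. For the second containment $\bigcup_{e_{yz}\in \E_{yz}}(e_{xy}\cap e_{yz})\subseteq e^{\ast}$: by definition of LCA, $e^{\ast}$ is also an ancestor of each $e_{yz}\in \E_{yz}$, so it lies on the path from $e_{xy}$ to each such $e_{yz}$; running intersection applied to each attribute in $e_{xy}\cap e_{yz}$ then places it inside $e^{\ast}$.

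The third containment $\bigcap_{e_{yz}\in \E_{yz}}(e_{xz}\cap e_{yz})\subseteq e^{\ast}$ is the step I expect to be the main obstacle, because it requires the stronger form of running intersection rather than a simple path argument. Fix any attribute $c$ in this intersection, so $c\in e_{xz}$ and $c\in e_{yz}$ for every $e_{yz}\in\E_{yz}$. The nodes of $\T$ containing $c$ form a connected subtree $\T_c$, and $\T_c$ already contains $e_{xz}$ and every member of $\E_{yz}$. Hence $\T_c$ contains the minimal subtree of $\T$ spanning $\{e_{xz}\}\cup \E_{yz}$, i.e.\ its Steiner tree. Since the LCA of a set in a rooted tree always lies in its Steiner tree (take any two elements of the set in different subtrees below the LCA and note that the path between them passes through the LCA), we get $e^{\ast}\in \T_c$, i.e.\ $c\in e^{\ast}$, as needed.

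The three non-emptiness hypotheses of the corollary are not used in constructing $e^{\ast}$ nor in the containment arguments above; they merely ensure that the sets being covered are themselves non-trivial, aligning the statement with Lemma~\ref{lem:acyclic-cycle} (which we recover by taking $|\E_{yz}|=1$, in which case $e^{\ast}$ is just the LCA of $e_{xz}$ and the single $e_{yz}$ with root $e_{xy}$). As an alternative, one could also prove the corollary by induction on $|\E_{yz}|$, applying Lemma~\ref{lem:acyclic-cycle} at each step, but combining a union ($B$) with an intersection ($C$) makes the induction bookkeeping messier than the direct join-tree argument, so I would present only the join-tree proof.
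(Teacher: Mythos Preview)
Your proof is correct and takes a genuinely different route from the paper's. The paper proves the corollary by induction on $|\E_{yz}|$: the base case is exactly Lemma~\ref{lem:acyclic-cycle}, and in the inductive step, given an edge $e$ covering the first $k-1$ members of $\E_{yz}$, the paper does a short case analysis (is $e_{xy}\cap e_k\subseteq e$? if not, does $e_k$ work? if not, apply Lemma~\ref{lem:acyclic-cycle} to the triple $e_{xy},e,e_k$) to produce the next edge. This is precisely the ``messier induction bookkeeping'' you anticipated and chose to avoid.

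Your join-tree/LCA argument is cleaner and more structural: the running intersection property does all the work, and the only non-obvious step---that the LCA of $\{e_{xz}\}\cup\E_{yz}$ lies in the Steiner tree of that set---you justify correctly. Two additional dividends of your approach: (i) it makes transparent that the three non-emptiness hypotheses are not needed for the containments themselves (they merely rule out degenerate cases), and (ii) it subsumes Lemma~\ref{lem:acyclic-cycle} rather than relying on it, so the GYO-reduction argument the paper uses for that lemma becomes unnecessary here. The paper's approach, by contrast, keeps the argument self-contained at the level of hyperedges without ever naming a join tree, which matches the style of the surrounding Section~\ref{sec:proof-lemma-path3}; but in terms of length and clarity your proof wins.
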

	
	\noindent {\bf Proof of Corollary~\ref{cor:acyclic-cycle-1}: } For simplicity, rename edges in $\E_{yz}$ as $e_1, e_2, \cdots, e_k$. We prove it by induction. The base case when $k = 1$ is precisely characterized and solved by Lemma~\ref{lem:acyclic-cycle}. We hold the hypothesis that there exists one edge $e \in \E$ such that 
	\[e \supseteq (e_{xy} \cap e_{xz})\cup \left(\bigcup_{i \in \{1,\cdots, k-1\}} (e_{xy} \cap e_i) \right) \cup \left(\bigcap_{i \in \{1,\cdots, k-1\}} (e_{xz} \cap e_i) \right).\]
	Moreover, if $e_{xy} \cap e_k \subseteq e$, edge $e$ is exactly the one characterized by Corollary~\ref{cor:acyclic-cycle-1} and we are done. Otherwise, $(e_{xy} \cap e_k) - e \neq \emptyset$. 
	
	We observe that $\left(\bigcap_{i \in \{1,\cdots, k\}} (e_{xz} \cap e_i) \right) \subseteq e \cap e_k$, so there is $(e \cap e_k) - e_{xy} \neq \emptyset$. If $e_{xy} \cap e - e_k = \emptyset$, there is $e_{xy} \cap e \subseteq e_k$. So far we have following observations on $e_k$ that (1) $e_k  \supseteq e_{xy} \cap e \supseteq (e_{xy} \cap e_{xz})\cup \left(\bigcup_{i \in \{1,\cdots, k-1\}} (e_{xy} \cap e_i) \right)$; (2) $e_k  \supseteq e_{xy} \cap e_k$; (3) $e_k  \supseteq e_{xz} \cap e_k \supseteq \bigcap_{i \in \{1,\cdots, k\}} (e_{xz} \cap e_i)$, 
	or equivalently,
	\[e_k \supseteq (e_{xy} \cap e_{xz})\cup \left(\bigcup_{i \in \{1,\cdots, k\}} (e_{xy} \cap e_i) \right) \cup \left(\bigcap_{i \in \{1,\cdots, k\}} (e_{xz} \cap e_i) \right).\]
	Thus edge $e_k$ is exactly the one characterized by Corollary~\ref{cor:acyclic-cycle-1}, and we are done. Otherwise, $e_{xy} \cap e - e_k \neq \emptyset$. Implied by Lemma~\ref{lem:acyclic-cycle}, there exists an edge $e' \in \E$ such that $e_{xy} \cap e_k \subseteq e', e_{xy} \cap e  \subseteq e', e_k \cap e \subseteq e'$. More precisely, (1) $e' \supseteq e_{xy} \cap e \supseteq (e_{xy} \cap e_{xz})\cup \left(\bigcup_{i \in \{1,\cdots, k-1\}} (e_{xy} \cap e_i) \right)$; (2) $e' \supseteq e_{xy} \cap e_k$;
	(3) $e'\supseteq e_k \cap e \supseteq \bigcap_{i \in \{1,\cdots, k\}} (e_{xz} \cap e_i)$. Or equivalently,
	\[e' \supseteq (e_{xy} \cap e_{xz})\cup \left(\bigcup_{i \in \{1,\cdots, k\}} (e_{xy} \cap e_i) \right) \cup \left(\bigcap_{i \in \{1,\cdots, k\}} (e_{xz} \cap e_i) \right),\]
	thus edge $e'$ is exactly the one characterized by Corollary~\ref{cor:acyclic-cycle-1}, and we are done.

	\begin{corollary}
	\label{cor:acyclic-cycle-2}
		For a set of distinct vertices $x, y_1, y_2, \cdots, y_k$, if there exists one edge $e_0 \in \E$ such that $x \notin e_0, \{y_1, y_2, \cdots, y_k\} \\\subseteq e_0$, and there exists one edge $e_i \in \E$ such that $\{x, y_i\} \subseteq e_i$ for each $i \in \{1,2,\cdots,k\}$, then there exists one edge $e' \in \E$ such that $\{x, y_1, y_2, \cdots, y_k\} \subseteq e'$.
	\end{corollary}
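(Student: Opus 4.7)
The plan is to prove Corollary~\ref{cor:acyclic-cycle-2} by induction on $k$, with Lemma~\ref{lem:acyclic-cycle} as the main workhorse. The base case $k=1$ is immediate: the given edge $e_1$ already satisfies $\{x, y_1\} \subseteq e_1$, so we may take $e' = e_1$.

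For the inductive step, suppose by the induction hypothesis applied to the vertices $x, y_1, \ldots, y_{k-1}$ and the edges $e_0, e_1, \ldots, e_{k-1}$ that there exists an edge $e^* \in \E$ with $\{x, y_1, \ldots, y_{k-1}\} \subseteq e^*$. Before invoking Lemma~\ref{lem:acyclic-cycle} I first dispose of two degenerate situations: if $y_k \in e^*$, then $e^*$ is itself the desired edge; and if $\{y_1, \ldots, y_{k-1}\} \subseteq e_k$, then $e_k$ already contains $\{x, y_1, \ldots, y_k\}$ and we are done. So we may assume $y_k \notin e^*$ and that there is some $i < k$ with $y_i \notin e_k$.

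In this remaining case I apply Lemma~\ref{lem:acyclic-cycle} to the three edges $e^*, e_k, e_0$ (playing the roles of $e_{xy}, e_{xz}, e_{yz}$). The three edges are pairwise distinct, since $x \in e^* \cap e_k$ but $x \notin e_0$, and $y_k \in e_k$ but $y_k \notin e^*$. The three nonemptiness hypotheses of the lemma are witnessed by $x \in e^* \cap e_k \setminus e_0$, by $y_k \in e_k \cap e_0 \setminus e^*$, and by the chosen $y_i \in e^* \cap e_0 \setminus e_k$. Lemma~\ref{lem:acyclic-cycle} then produces an edge $e' \in \E$ containing all three pairwise intersections, and in particular containing $\{x\} \cup \{y_1, \ldots, y_{k-1}\} \cup \{y_k\}$, completing the induction.

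The only real subtlety is verifying the nonemptiness hypotheses of Lemma~\ref{lem:acyclic-cycle}, which is exactly why the two ``easy'' degenerate cases must be peeled off first: whenever one of the obvious witnesses fails to exist, that very failure forces $e^*$ or $e_k$ to already serve as the desired edge. I expect no other obstacle, since once the three hypotheses are established the conclusion follows mechanically from Lemma~\ref{lem:acyclic-cycle}. (As an alternative one could invoke Corollary~\ref{cor:acyclic-cycle-1} with $e_{xy} = e^*$, $e_{xz} = e_k$, and $\E_{yz} = \{e_0\}$ to obtain the same extended edge in a single step, but the direct induction via Lemma~\ref{lem:acyclic-cycle} seems cleanest.)
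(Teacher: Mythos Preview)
Your proposal is correct and follows essentially the same approach as the paper: induction on $k$, peeling off the two degenerate cases ($y_k \in e^*$ or $\{y_1,\ldots,y_{k-1}\}\subseteq e_k$), and then applying Lemma~\ref{lem:acyclic-cycle} to the triple $e^*, e_k, e_0$ with the same witnesses $x$, $y_k$, and some $y_i$ for the three nonemptiness conditions. Your explicit check that the three edges are pairwise distinct is a small improvement over the paper's presentation, which leaves this implicit.
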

	
	\noindent {\bf Proof of Corollary~\ref{cor:acyclic-cycle-2}: } We prove it by induction. The base case when $k = 1$ is trivial. We hold the hypothesis that there exists one edge $e \in \E$ such that $\{x, y_1, y_2, \cdots, y_{k-1}\} 
	\subseteq e$. 
	
	If $y_k \in e$, edge $e$ is exactly the one characterized by Corollary~\ref{cor:acyclic-cycle-2} and we are done. Moreover, if $\{y_1, y_2, \cdots, y_{k-1}\} \subseteq e_k$, edge $e_k$ is exactly the one characterized by Corollary~\ref{cor:acyclic-cycle-2} and we are done. Otherwise, $y_k \notin e$ and $\{e_1,e_2, \cdots, e_{k-1}\} - e_k \neq \emptyset$. Note that $y_k \in e_k \cap e_0 - e$, $x \in e_k \cap e - e_0$, and $e \cap e_0 - e_k \neq \emptyset$.
	Implied by Lemma~\ref{lem:acyclic-cycle}, there exists one edge $e' \in \E$ such that $e \cap e_k \subseteq e'$, $e_k \cap e_0 \subseteq e'$ and $e_0 \cap e \subseteq e'$. More precisely, $\{e_1, e_2, \cdots, e_{k-1}\} \subseteq e \cap e_0 \subseteq e'$, $\{x\} \subseteq e \cap e_k \subseteq e'$, and $\{e_k\} \subseteq e_k \cap e_0 \subseteq e'$, thus $\{x,y_1, y_2, \cdots, y_k\} \subseteq e'$.

	\medskip \noindent {\bf Proof of step 1: }
	
	If an acyclic join  is not r-hierarchical, then there exist two attributes $x,y$ such that $\E_x \cap \E_y \neq \emptyset, \E_x - \E_y \neq \emptyset, \E_y - \E_x \neq \emptyset$. Consider $e_{xy} \in \E_x \cap \E_y$, $e_x \in \E_x - \E_y$ and $e_y \in \E_y - \E_x$. It suffices to show that $e_x - e_{xy} - e_y \neq \emptyset$ and $e_y - e_{xy} - e_x \neq \emptyset$ by the constraint. First $e_x - e_{xy}$ is not empty otherwise $e_x \subseteq e_{xy}$ contradicting our assumption. The same applies for $e_y - e_{xy} \neq \emptyset$. If $e_x - e_{xy} - e_y = \emptyset$, each attribute appearing in $e_x - e_{xy}$ also appears in $e_y$. In this way, we can identify three distinct attributes $x,y,z$ such that $x \in e_{x} \cap e_{xy} - e_y$, $y \in e_y \cap e_{xy} - e_x$, $e_{x} \cap e_{y} - e_{xy}$, which form a cycle. Thus there exists an edge $e_{xyz} \in \E$ such that $e_x \cap e_{xy}  \subseteq e_{xyz}, e_y \cap e_{xy} \subseteq e_{xyz}, e_x \cap e_y  \subseteq e_{xyz}$ implied by Lemma~\ref{lem:acyclic-cycle}. Note that $e_x - e_{xy} - e_y = \emptyset$ implies that $e_x$ can be rewritten as $(e_x \cap e_{xy}) \cup (e_x \cap e_y)$. In this way, $e_x \subseteq e_{xyz}$ contradicting our assumption. So we have $e_x - e_{xy} -e_y  \neq \emptyset$, and the same applies for $e_y - e_{xy} - e_x  \neq \emptyset$.
	
	\medskip \noindent {\bf Proof of step 2: }
	
	Assume we already have a subgraph defined by edges $\{e_1,e_2,e_3\}$ and vertices $\{x_1,x_2,x_3,x_4\}$, such that $x_1 \in e_1, x_1 \notin e_2 \cup e_3, x_2 \in e_1 \cap e_2, x_2 \notin e_3, x_3 \in e_2 \cap e_3, x_3 \notin e_1, x_4 \in e_3, x_4 \notin e_1 \cup e_2$. If there exists no edge $e \in \E$ such that $\{x_1, x_2, x_3\} \subseteq e$ or $\{x_2, x_3, x_4\} \subseteq e$, we are done. Otherwise, we need to show how to find $x_1', x_4'$ satisfying our condition to replace $x_1, x_4$. Note that the replacement of $x_1$ and that of $x_4$ are independent, as well as their correctness arguments.
	
	In the following, we will tackle the situation where there exists an edge $e \in \E$ such that $\{x_1,x_2,x_3\} \subseteq e$. The situation where there exists an edge $e \in \E$ such that $\{x_2, x_3, x_4\} \subseteq e$ is symmetric and can be tackled similarly. 
	
	Define the attribute set $S = \{x \in e_1: \exists e \in \E, \{x_2, x_3, x\} \subseteq e\}$. If $e_1 - e_2 - e_3 - S \neq \emptyset$, then we just replace $x_1$ by any attribute in $e_1 - e_2 - e_3 - S$. Otherwise, $e_1 - e_2 - e_3 - S = \emptyset$, which implies that $e_1$ can be rewritten as $(e_1 \cap S) \cup (e_1 \cap e_2) \cup (e_1 \cap e_3) $. We will prove by contradiction that this case won't happen in the reduced join. Define the edge set $\E_S = \{e \in \E: \exists x \in S, \{x_2, x_3, x\} \subseteq e\}$. Note that if $x \notin S$, then $x \notin e$ for each $e \in \E_S$. We distinguish following four cases. We give an intuitive illustration of the contradiction in each case, in Figure~\ref{fig:step2}. The same technique we adopt is to identify an edge $e \in \E$ such that $e_1 \neq e$ and $e_1 \subseteq e$, coming to a contradiction in a reduced join.
	\begin{figure}[h]
		\centering
		\includegraphics[scale=1.1]{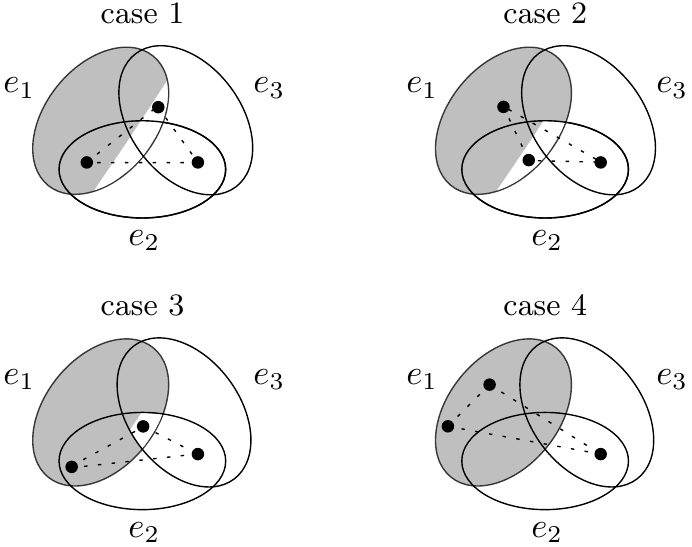}
		\caption{Illustration of four cases in step 2. The shade area represents the attribute set $S$.}
		\label{fig:step2}
	\end{figure}
	
	\paragraph{Case 1: $\bm{e_1 \cap e_3 - e_2 - S \neq \emptyset}$} 
	
	Consider an arbitrary attribute $x \in e_1 \cap e_3 - e_2 - S$. Denote $\E'_S = \{e_2\} \cup \E_S$. Note that $x_2 \in e_1 \cap e - e_3$, $x_3 \in e \cap e_3 - e_1$, and $x \in e_1 \cap e_3 - e$ for each $e \in \E'_S$. Implied by Corollary~\ref{cor:acyclic-cycle-1}, there exists an edge $e' \in \E$ such that $e_1 \cap e_3 \subseteq e'$, $x_3 \in e'$ and $e_1 \cap e \subseteq e'$ for each $e \in \E'_S$. This also implies $e_1 \cap S \subseteq e'$, $e_1 \cap e_2 \subseteq e'$, and $e_1 \neq e'$. Thus, $e_1 \subseteq e'$ contradicting our assumption. 
	
	\paragraph{Case 2: $\bm{e_1 \cap e_3 - e_2 - S = \emptyset}$ and $\bm{e_1 \cap e_2 - e_3 - S \neq \emptyset}$} 
	
	Consider an arbitrary attribute $x \in e_1 \cap e_2 - e_3 - S$. Denote $S' = S - e_2$, where $S' \neq \emptyset$ since $x_1 \in S - e_2$. Note that $x \in e_1 \cap e_2 - e$, $e \cap e_1 - e_2 \neq \emptyset$, and $x_3 \in e_2 \cap e - e_1$ for each $e \in \E_S'$. Implied by Corollary~\ref{cor:acyclic-cycle-1}, there exists an edge $e' \in \E$ such that $e_1 \cap e_2 \subseteq e'$, $x_3 \in e'$ and $e_1 \cap e \subseteq e'$ for each $e \in \E_S'$. This also implies $e_1 \cap S' \subseteq e'$ and $e_1 \neq e'$. Thus $(e_1 \cap S) \cup (e_1 \cap e_2) \subseteq e$. We already have $e_1 \cap e_3 - e_2 - S = \emptyset$ in this case. Thus, $e_1 \subseteq e$ contradicting our assumption.
	
	\paragraph{Case 3: $\bm{e_1 \cap e_3 - e_2 - S = \emptyset}$, $\bm{e_1 \cap e_2 - e_3 - S = \emptyset}$, and $\bm{e_1 \cap e_2 \cap e_3 - S \neq \emptyset}$} 
	
	Consider an arbitrary attribute $x \in e_1 \cap e_2 \cap e_3 - S$. Note that $x_2 \in e_1 \cap e - e_3$, $x_3 \in e_3 \cap e - e_1$, and $x \in e_1 \cap e_3  - e$ for each $e \in \E_S$. Implied by Corollary~\ref{cor:acyclic-cycle-1}, there exists an edge $e' \in \E$ such that $e_1 \cap e_3 \subseteq e'$, $x_3 \in e'$ and $e_1 \cap e \subseteq e'$ for each $e \in \E_S$. This also implies $e_1 \cap S \subseteq e'$ and $e_1 \neq e'$. We already have $e_1 \cap e_2 - e_3 - S = \emptyset$ in this case. Thus, $e_1 \subseteq e'$ contradicting our assumption. 
	
	\paragraph{Case 4: $\bm{e_1 \cap e_3 - e_2 - S = \emptyset}$, $\bm{e_1 \cap e_2 - e_3 - S = \emptyset}$, and $\bm{e_1 \cap e_2 \cap e_3 - S = \emptyset}$} 
	
	Under this circumstances, $e_1 \subseteq S$. Implied by the fact that $S \subseteq e_1$, we have $e_1 = S$. For attributes $x_3$ and all attributes in $S$, there is $S \subseteq e_1$, and for each $x \in S$ there exists one edge $e_x \in \E_S$ such that $\{x, x_3\} \subseteq e_x$. Implied by Corollary~\ref{cor:acyclic-cycle-2}, there exists one edge $e' \in \E'$ such that $x_3 \in e'$ and $S \subseteq e'$. Thus, $e_1 \neq e'$ and $e_1 \subseteq e'$, contradicting our assumption.

	\medskip Combining these four cases proves the step 2.

	\medskip \noindent {\bf Proof of step 3: } Consider the subgraph found in the last step. By the definition of minimal path, it suffices to show that there exists no edge $e' \in \E$ such that $\{x_1,x_3\} \subseteq e'$, or $\{x_1,x_4\} \subseteq e'$, or $\{x_2, x_4\} \subseteq e'$. By contradiction, assume there is an $e'$ where $\{x_1,x_3\} \subseteq e'$. Implied by the contraints of this subgragh, $x_2 \notin e'$ and $x_4 \notin e'$. Attributes $x_1, x_2, x_3$ form a cycle on edges $e_1, e_2, e'$, then there must exist an edge containing all of $\{x_1, x_2, x_3\}$ contradicting the constraints. The similar argument applies for $\{x_1, x_4\} \subseteq e'$ and $\{x_2, x_4\} \subseteq e'$.
\end{proof}

 \end{document}